\documentclass[11pt]{article}

\usepackage{amsmath,amssymb,amsfonts,amsthm,epsfig}
\usepackage[usenames,dvipsnames]{xcolor}
\usepackage{algorithm}
\usepackage{algorithmicx}
\usepackage{algpseudocode}
\usepackage{bm,xspace}
\usepackage{tcolorbox}
\usepackage{cancel}
\usepackage{fullpage}
\usepackage{liyang}
\usepackage{framed}
\usepackage{verbatim}
\usepackage{enumitem}
\usepackage{array}
\usepackage{multirow}
\usepackage{afterpage}
\usepackage{mathrsfs}
\usepackage{pifont} 
\usepackage{chngpage}
\usepackage[normalem]{ulem}
\usepackage{boxedminipage}
\usepackage{caption}
\usepackage{thmtools}
\usepackage{thm-restate}
\usepackage{float}
\usepackage{amssymb}
\usepackage{booktabs} 
\usepackage[dvipsnames]{xcolor}

\newcommand{\thickhline}{%
    \noalign {\ifnum 0=`}\fi \hrule height 1pt
    \futurelet \reserved@a \@xhline
}

\usepackage[dvipsnames]{xcolor}


%
%

\def\colorful{1}

\ifnum\colorful=1

\newcommand{\red}[1]{{\color{red} {#1}}}

\fi
\ifnum\colorful=0

\newcommand{\red}[1]{{{#1}}}

\fi


\newcommand{\norm}[1]{\left\lVert#1\right\rVert}
\newcommand{\vect}[1]{\mathbf{#1}}
\newcommand{\abs}[1]{\left\lvert#1\right\rvert}
\newcommand{\fig}[1]{\left\{#1\right\}}

\DeclareMathOperator*{\argmax}{arg\,max}
\DeclareMathOperator*{\argmin}{arg\,min}

\newcommand{\viol}{\mathrm{viol}}

\usepackage{epsfig}
\usepackage{color}
\usepackage{times}

\newcommand{\spn}[1]{\text{Span}\left( #1 \right)}
\DeclareMathOperator*{\argmaxop}{arg\,max}
\DeclareMathOperator*{\argminop}{arg\,min}
\usepackage{bbm}

\newcommand{\myeqold}[1]{\stackrel{\mathclap{\normalfont\mbox{#1}}}{=}}
\newcommand{\myleqold}[1]{\stackrel{\mathclap{\normalfont\mbox{#1}}}{\leq}}
\newcommand{\mygeqold}[1]{\stackrel{\mathclap{\normalfont\mbox{#1}}}{\geq}}

\begin{document}
\global\long\def\red#1{\text{{\color{red}#1}}}%

\global\long\def\dimvc{\dim_{\text{VC}}}%

\global\long\def\spn#1{\text{Span}\left(#1\right)}%

\global\long\def\argmax{\argmaxop}%

\global\long\def\argmin{\argminop}%

\global\long\def\E{\mathbb{E}}%

\global\long\def\R{\mathbb{R}}%

\global\long\def\P{\mathbb{P}}%

\global\long\def\C{\mathbb{C}}%

\global\long\def\Z{\mathbb{Z}}%

\global\long\def\F{\mathcal{F}}%

\global\long\def\B{\mathcal{B}}%

\global\long\def\Ep{\mathcal{E}}%

\global\long\def\sign{\mathbb{\text{sign}}}%

\global\long\def\myeq#1{\myeqold{#1}}%

\global\long\def\myleq#1{\myleqold{#1}}%

\global\long\def\mygeq#1{\mygeqold{#1}}%

\global\long\def\indicator{\mathbbm{1}}%

\global\long\def\parr#1{\left(#1\right)}%

\global\long\def\pars#1{\left[#1\right]}%

\global\long\def\fig#1{\left\{  #1\right\}  }%

\global\long\def\abs#1{\left|#1\right|}%

\global\long\def\norm#1{\left\Vert #1\right\Vert }%

\global\long\def\bra#1{\left\langle #1\right\rangle }%

\global\long\def\d{\,d}%

\global\long\def\conditional{\bigg\vert}%

\global\long\def\vect#1{\bm{#1}}%

\global\long\def\mydot#1{\stackrel{\mathbf{.}}{#1}}%

\global\long\def\mydoubledot#1{\stackrel{\mathbf{.}\mathbf{.}}{#1}}%

\global\long\def\mytripledot#1{\stackrel{\mathbf{.}\mathbf{.}\mathbf{.}}{#1}}%

\global\long\def\mybar#1{\overline{#1}}%

\global\long\def\myhat#1{\widehat{#1}}%

\global\long\def\spn#1{\mathop{\text{span}}\left(#1\right)}%

\global\long\def\conv#1{\mathop{\text{conv}}\left(#1\right)}%

\global\long\def\righ{\rightarrow}%

\global\long\def\var{\text{var}}%

\global\long\def\N{\mathcal{N}}%

\title{Agnostic proper learning of monotone functions: beyond the black-box correction barrier}
\date{}
		\author{
  Jane Lange\thanks{MIT, {\tt jlange@mit.edu}. Supported in part by NSF Graduate Research Fellowship under Grant No. 2141064, NSF Award CCF-2006664, DMS-2022448, Big George Fellowship, Akamai Presidential Fellowship and Google. }
			\and
   Arsen Vasilyan \thanks{MIT, {\tt vasilyan@mit.edu}.
			Supported in part by NSF awards CCF-2006664, DMS-2022448, CCF-1565235, CCF-1955217, Big George Fellowship and Fintech@CSAIL.}
		}
\maketitle

\begin{abstract}
We give the first agnostic, efficient, proper learning algorithm for monotone Boolean functions.
Given $2^{\tilde{O}(\sqrt{n}/\eps)}$ uniformly random examples of an unknown function $f:\{\pm 1\}^n \rightarrow \{\pm 1\}$, 
our algorithm outputs a hypothesis $g:\{\pm 1\}^n \rightarrow \{\pm 1\}$ that is monotone and $(\opt + \eps)$-close to $f$,
where $\opt$ is the distance from $f$ to the closest monotone function.
The running time of the algorithm (and consequently the size and evaluation time of the hypothesis) is also $2^{\tilde{O}(\sqrt{n}/\eps)}$,
nearly matching the lower bound of \cite{blais_learning_2015}. We also give an algorithm for estimating up to additive error $\epsilon$ the distance of an unknown function $f$ to monotone using a run-time of $2^{\tilde{O}(\sqrt{n}/\eps)}$. Previously, for both of these problems, sample-efficient algorithms were known, but these algorithms were not run-time efficient. Our work thus closes this gap in our knowledge between the run-time and sample complexity.

This work builds upon the improper learning algorithm of \cite{bshouty_fourier_1996} and the proper semiagnostic learning algorithm of \cite{lange_properly_2022},
which obtains a non-monotone Boolean-valued hypothesis, then ``corrects'' it to monotone using query-efficient local computation algorithms on graphs.
This black-box correction approach can achieve no error better than $2\opt + \eps$ information-theoretically; 
we bypass this barrier by 
\begin{itemize}
    \item[a)] augmenting the improper learner with a convex optimization step, and
    \item[b)] learning and correcting a real-valued function before rounding its values to Boolean.
\end{itemize} 
Our real-valued correction algorithm solves the ``poset sorting'' problem of \cite{lange_properly_2022} for functions over general posets with non-Boolean labels.
\end{abstract}

\section{Introduction} 

The class of monotone functions over $\{\pm 1\}^n$ is an object of major interest in theoretical computer science. In consequence, the study of learning and testing algorithms for monotone functions 
\cite{bshouty_fourier_1996, kearns_cryptographic_1989,goldreich_property_1998,blum_learning_1998,dodis_improved_1999,amano_learning_2006, ailon_estimating_2007, odonnell_kkl_2009, chakrabarty_optimal_2013,chen_new_2014,chen_testing_2019,khot_monotonicity_2015,belovs_polynomial_2016,chen_beyond_2017,chakrabarty_adaptive_2019, pallavoor_approximating_2022, lange_properly_2022}  and various subclasses of monotone functions \cite{angluin_queries_1988,jackson_learning_2011,yang_learnability_2013,blanc_properly_2022} is a major research direction.
In this work, we consider two  fundamental problems in this line of work: \emph{approximating the distance} of unknown functions to monotone, 
and \emph{agnostic proper learning} of monotone functions. 
For each of these problems we are given independent uniform samples $\{\vect{x}_i\}$ labeled by an arbitrary function $f: \{\pm 1\} \righ \{\pm 1\}$ and we are required to perform the following tasks:
\begin{enumerate}
    \item \textbf{Estimating distance to monotonicity} is the task of estimating up to some additive error $\epsilon$ the distance $\dist(f, f_{\text{mon}})$ from $f$ to the monotone function $f_{\text{mon}}$ that is closest to $f$.
    \item \textbf{Agnostic proper learning of monotone functions} is the task of
    obtaining a description of a monotone function $g_{\text{mon}}$, whose distance $\dist(f, g_{\text{mon}})$ approximates $\dist(f, f_{\text{mon}})$ up to additive error $\epsilon$. 
\end{enumerate}
Prior to our work, it was known that information-theoretically these tasks can be solved using only $2^{\tilde{O}\parr{\sqrt{n}/\eps}}$ samples. However, all known algorithms had a run-time of $2^{\Omega(n)}$, thus dramatically exceeding the known sample complexity of $2^{\tilde{O}\parr{\sqrt{n}/\eps}}$. 
In this work, we close this gap in our knowledge and give algorithms for the two tasks above that not only use $2^{\tilde{O}\parr{\sqrt{n}/\eps}}$ samples, but also run in time $2^{\tilde{O}\parr{\sqrt{n}/\eps}}$. 
This nearly matches the $2^{\tilde{\Omega}(\sqrt{n})}$ lower bound of \cite{blais_learning_2015}.

\subsection{Previous work}

We note that the work of \cite{lange_properly_2022} largely concerns itself with the problem of \emph{realizable learning} of monotone functions, i.e. learning a function $f$ that is itself promised to be monotone. 
In contrast, the focus of our work is the harder setting when the function $f$ we access is \emph{arbitrary} and we want to obtain a description of a monotone function $g_{\text{mon}}$ that predicts $f$ best among monotone functions (up to an additive slack of $\epsilon$). 

Still, as noted in \cite{lange_properly_2022}, their work does give mixed additive-multiplicative approximation guarantees in the settings we study here. Specifically, \cite{lange_properly_2022} gives algorithms that also run in time $2^{\tilde{O}\parr{\sqrt{n}/\eps}}$ and achieve the following:
\begin{enumerate}
    \item Obtain a $(3, \epsilon)$-approximation of $\dist(f, f_{\text{mon}})$. In other words, the estimate is in the interval between $\dist(f, f_{\text{mon}})$ and $3 \cdot \dist(f, f_{\text{mon}})+\epsilon$. (We also note that \cite{lange_properly_2022} additionally present an algorithm that gives a distance estimate in $\pars{\dist(f, f_{\text{mon}}), 2 \cdot \dist(f, f_{\text{mon}})+\epsilon}$ but also requires query access to function $f$).
    \item 
    Obtain a succinct description of a monotone function $g_{\text{mon}}$, whose distance $\dist(f, g_{\text{mon}})$  is a $(3, \epsilon)$-approximation to $\dist(f, f_{\text{mon}})$. In other words, it is in the interval between $\dist(f, f_{\text{mon}})$ and $3 \cdot \dist(f, f_{\text{mon}})+\epsilon$. As it is noted in \cite{lange_properly_2022}, this yields a fully agnostic learning algorithm only if $\dist(f, f_{\text{mon}})\leq O(\epsilon)$.
\end{enumerate}
Overall, \Cref{tbl: comparison of our work and previous work} summarizes how our work compares with what was known previously.

\begin{table}[!htbp]\begin{center}
\begin{tabular}{p{4.5cm}  c  c  c } 
 \hline
 Work & \begin{tabular}{@{}c@{}} Guarantee for distance estimate and \\ error for proper agnostic learning \end{tabular}   & Sample complexity & Run-time    \\ \midrule
 \cite{bshouty_fourier_1996, kalai_agnostically_2005} with refinement from \cite{feldman_tight_2020} & $\pars{\dist(f, f_{\text{mon}}), \dist(f, f_{\text{mon}})+\epsilon}$ & $2^{\tilde{O}\parr{\sqrt{n}/\eps}}$ & $2^{\Omega(n)}$   \\ \midrule
 \cite{lange_properly_2022} & $\pars{\dist(f, f_{\text{mon}}), 3 \cdot \dist(f, f_{\text{mon}})+\epsilon}$   & $2^{\tilde{O}\parr{\sqrt{n}/\eps}}$ & {$2^{\tilde{O}\parr{\sqrt{n}/\eps}}$}  \\ \midrule
  \textbf{This paper} & 
$\pars{\dist(f, f_{\text{mon}}), \dist(f, f_{\text{mon}})+\epsilon}$     & $2^{\tilde{O}\parr{\sqrt{n}/\eps}}$ & $2^{\tilde{O}\parr{\sqrt{n}/\eps}}$ 
         \\ 
 \bottomrule
\end{tabular}
\end{center}

\caption{Comparison of our results to previously known algorithms. }
\label{tbl: comparison of our work and previous work}
\end{table}

\subsection{Main results}
The following are our main results: learning and distance approximation of Boolean functions, and local correction of real-valued functions.
\begin{restatable}{theorem}{mainthm}[Agnostic proper learning of monotone functions\footnote{See \Cref{section: learning with randomized labels} for an extension to functions with randomized labels.}]
\label{theorem: main theorem}There is an algorithm that runs in time
$2^{\tilde{O}\parr{\frac{\sqrt{n}}{\eps}}}$ and, given uniform
sample access to an unknown function $f:\left\{ \pm1\right\} ^{n}\righ\left\{ \pm1\right\} $,
with probability at least $1-\frac{1}{2^{n}}$, outputs a succinct representation of
a monotone function $g:\left\{ \pm1\right\} ^{n}\righ\left\{ \pm1\right\} $
that is $\text{\text{opt}+}O(\epsilon)$-close to $f$, where $\text{opt}$
is the distance from $f$ to the closest monotone function (i.e. the
fraction of elements of $\left\{ \pm1\right\} ^{n}$ on which $f$
and its closest monotone function disagree). 
\end{restatable}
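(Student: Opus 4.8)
The plan is to avoid committing to a Boolean hypothesis until the very end: we learn and then correct a \emph{real-valued} approximation, and only at the last step round it to $\{\pm1\}$. Write $g^{\star}$ for the monotone Boolean function closest to $f$, so $\dist(g^{\star},f)=\opt$. Two ingredients set things up. First, a \emph{lossless rounding} fact: for any monotone $m:\{\pm1\}^{n}\righ[-1,1]$ and any $\theta\in[-1,1]$ the function $m_{\theta}(x):=\mathrm{sign}(m(x)-\theta)$ is monotone, and averaging over a uniform threshold gives $\E_{\theta}[\dist(m_{\theta},f)]=\tfrac12\,\E_{\vect x}\abs{m(\vect x)-f(\vect x)}$; in particular some fixed $\theta$ does at least this well. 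Second, a \emph{Fourier-concentration} fact: since $g^{\star}$ is monotone and Boolean it has total influence at most $\sqrt n$, so by the Bshouty--Tamon Fourier concentration for monotone functions \cite{bshouty_fourier_1996} (with the $L_1$-approximation refinements of \cite{kalai_agnostically_2005,feldman_tight_2020}) there is a degree-$d$ polynomial $q$ with $d=\tilde O(\sqrt n/\eps)$ that, after clipping to $[-1,1]$ (which only helps, as $g^{\star}$ is $\{\pm1\}$-valued), satisfies $\E_{\vect x}\abs{q(\vect x)-g^{\star}(\vect x)}\le\eps$.

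\textbf{Step 1: improper learning with a convex optimization step.} Using $2^{\tilde O(\sqrt n/\eps)}$ samples, I would find a succinctly represented, $[-1,1]$-valued function $p$ (the clip of a degree-$d$ polynomial) that is simultaneously (i) $L_{1}$-close to $f$, $\E\abs{p-f}\le2\opt+O(\eps)$, and (ii) $L_{1}$-close to monotone, $\mathrm{dist}^{L_1}_{\mathrm{mon}}(p):=\min_{m\text{ monotone}}\E\abs{p-m}\le O(\eps)$. The point of the convex step is that requiring (i) and (ii) \emph{together} is exactly what rules out the usual $2\opt$ barrier, and the pair of requirements is feasible: the clipped polynomial $q$ above has $\E\abs{q-f}\le\E\abs{q-g^{\star}}+\E\abs{g^{\star}-f}\le\eps+2\opt$ and $\mathrm{dist}^{L_1}_{\mathrm{mon}}(q)\le\E\abs{q-g^{\star}}\le\eps$. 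Both the objective $\E\abs{p-f}$ and the constraint $\mathrm{dist}^{L_1}_{\mathrm{mon}}(p)\le O(\eps)$ are convex in $p$ (the latter because $\min_{m}\E\abs{p-m}$ is a partial minimization of a jointly convex function), so I would phrase the search as a convex program over the $n^{O(d)}$-dimensional space of degree-$\le d$ Fourier coefficients, estimating the $L_1$ objective on the sample and controlling generalization via the small covering number of bounded-degree polynomials.

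\textbf{Steps 2--3: real-valued correction, then rounding.} Feed $p$ to the ``poset sorting'' subroutine for real-valued labels on the hypercube poset to obtain a genuinely monotone $m:\{\pm1\}^{n}\righ[-1,1]$ with $\E\abs{m-p}\le\mathrm{dist}^{L_1}_{\mathrm{mon}}(p)+O(\eps)=O(\eps)$, computed by a query-efficient local rule so that $m$ is evaluable (hence succinctly representable) in time $2^{\tilde O(\sqrt n/\eps)}$ without touching all of $\{\pm1\}^{n}$. Finally, search a fine grid of thresholds $\theta\in[-1,1]$, estimate $\dist(m_{\theta},f)$ on fresh samples, and output the best $g:=m_{\theta}$, which is monotone and satisfies
\[
\dist(g,f)\ \le\ \tfrac12\,\E\abs{m-f}\ \le\ \tfrac12\bigl(\E\abs{m-p}+\E\abs{p-f}\bigr)\ \le\ \tfrac12\bigl(O(\eps)+2\opt+O(\eps)\bigr)\ =\ \opt+O(\eps).
\]
Independent repetition (cheap relative to the running time) boosts the success probability to $1-2^{-n}$.

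\textbf{Main obstacle.} The crux is the real-valued, general-poset correction in Step 2: \cite{lange_properly_2022} handle only Boolean labels on the hypergrid, and a black-box use of a local corrector would pay a \emph{multiplicative} constant, reinstating the $2\opt$ barrier --- the work is to design a corrector that pays only an \emph{additive} $O(\eps)$ over $\mathrm{dist}^{L_1}_{\mathrm{mon}}(p)$ while staying query-efficient and producing a globally consistent (monotone) $m$. A secondary difficulty is making the convex program of Step 1 genuinely efficient: the functional $\mathrm{dist}^{L_1}_{\mathrm{mon}}$ is itself an exponential-size linear program and cannot be estimated from uniform samples (which w.h.p.\ form an antichain in the cube), so the ``close to monotone'' constraint has to be enforced structurally --- e.g.\ through the low-degree representation together with the sorting subroutine serving as an (approximate) separation oracle.
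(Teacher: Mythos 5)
Your three-phase outline --- a convex-optimization step that finds a $[-1,1]$-clipped low-degree $p$ simultaneously close to $f$ and $O(\eps)$-close to monotone in $\ell_1$, followed by real-valued monotonicity correction, followed by threshold rounding --- is exactly the paper's approach, and your final accounting (halving $\E\abs{m-f}$ via the randomized-threshold fact to land at $\opt+O(\eps)$) is correct. The feasibility witness you give (a clipped low-degree approximator of the nearest monotone Boolean $g^\star$, which is automatically both $\eps$-close to monotone and $(2\opt+\eps)$-close to $f$ in $\ell_1$) is also the one the paper uses.

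The genuine gap is the part you relegate to a ``secondary difficulty'' but which is in fact the main technical work on the improper-learning side: building a \emph{separation oracle} for the constraint ``$P'$ is $\eps$-close to monotone in $\ell_1$.'' Your suggestion to reuse the sorting/correction subroutine as the separator does not obviously produce a hyperplane: the corrector's output depends nonlinearly on $P$, and there is no linear functional a cutting-plane method can read off it. The paper instead constructs the separator from an LCA that finds a near-maximum-weight \emph{matching on the violation graph} of $P_{\mathrm{TRIMMED}}$, generalizing the matching characterization of $\ell_1$ distance to monotone from \cite{berman_l_p-testing_2014}. Letting $M_{\mathrm{sep}}$ be the $\{0,\pm1\}$-valued ``signed indicator'' of the matching, one has $\langle M_{\mathrm{sep}}, h\rangle \le 0$ for every monotone $h$ (hence $\le \eps$ for everything $\eps$-close to monotone), while $\langle M_{\mathrm{sep}}, P_{\mathrm{TRIMMED}}\rangle$ is lower-bounded by a constant fraction of $\dist_1(P_{\mathrm{TRIMMED}},\mathrm{mono})$; projecting $M_{\mathrm{sep}}$ onto the low-degree Fourier basis (with coefficients estimated from samples) gives the separating hyperplane. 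One further load-bearing detail you should note: the ellipsoid method optimizes over \emph{unclipped} polynomials $P'$, yet the oracle computes the matching on $P_{\mathrm{TRIMMED}}$; this is reconciled by observing that clipping only decreases violation-edge weights, so $\langle M_{\mathrm{sep}}, P\rangle \ge \langle M_{\mathrm{sep}}, P_{\mathrm{TRIMMED}}\rangle$ and the hyperplane remains valid for the unclipped convex set.
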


The corollary below follows immediately by the standard method of \cite{parnas_tolerant_2004} that runs the learning algorithm in \Cref{theorem: main theorem} and estimates the distance between $g$ and $f$.
\begin{corollary}[Additive distance-to-monotonicity approximation]
		There is an algorithm with running time and sample complexity $2^{\tilde{O}\parr{\frac{\sqrt{n}}{\eps}}}$ that outputs some estimate $est$ of the distance from $f$ to the closest monotone function $f_{mon}$.
		With probability at least $1 - 2^{-n+1}$, this estimate satisfies the guarantee \[\dist(f, f_{mon}) \le est \le \dist(f, f_{mon}) + O(\eps).\]
\end{corollary}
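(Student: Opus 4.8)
The plan is to run the agnostic proper learner of \Cref{theorem: main theorem} to obtain a monotone hypothesis $g$, and then to estimate $\dist(f,g)$ directly from fresh samples; because $g$ is itself monotone, this distance is sandwiched above by the learner's guarantee and below trivially by $\dist(f,f_{mon})$, so a suitably shifted empirical estimate of $\dist(f,g)$ is exactly the desired output. This is the reduction from distance estimation (tolerant testing) to proper agnostic learning of \cite{parnas_tolerant_2004}.

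First I would invoke \Cref{theorem: main theorem} with accuracy $\eps$, obtaining with probability $\ge 1 - 2^{-n}$ a succinct representation of a monotone $g\colon\{\pm1\}^n\to\{\pm1\}$ with $\dist(f,g)\le\opt+O(\eps)$. Since $g$ is one particular monotone function and $f_{mon}$ is by definition the monotone function closest to $f$, we also have $\dist(f,g)\ge\dist(f,f_{mon})=\opt$, so $\dist(f,g)\in[\opt,\opt+O(\eps)]$. Next I would draw $m=\Theta(n/\eps^2)$ new uniform samples $\vect x_1,\dots,\vect x_m$ with labels $f(\vect x_i)$ and set $\widehat d=\frac1m\sum_{i=1}^m\indicator[g(\vect x_i)\neq f(\vect x_i)]$, which is computable since we can evaluate the succinct representation of $g$; by a Hoeffding bound, $|\widehat d-\dist(f,g)|\le\eps$ except with probability $\le 2^{-n}$. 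The output is $est:=\widehat d+\eps$. On the event (of probability $\ge 1-2^{-n+1}$, by a union bound over the two failure events) that both steps succeed, $est\ge\dist(f,g)\ge\opt=\dist(f,f_{mon})$ and $est\le\dist(f,g)+2\eps\le\opt+O(\eps)$, which is the claimed two-sided bound. The running time and sample complexity are $2^{\tilde O(\sqrt n/\eps)}+\mathrm{poly}(n,1/\eps)=2^{\tilde O(\sqrt n/\eps)}$.

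There is no genuinely hard step here; the only points requiring (minor) care are that the lower bound $est\ge\dist(f,f_{mon})$ is obtained via the $+\eps$ shift applied to $\widehat d$ (equivalently one could use a one-sided tail bound), and that the reason we get a fully \emph{agnostic}, additive-only distance estimate — rather than the mixed multiplicative--additive guarantee of \cite{lange_properly_2022} — is precisely that \Cref{theorem: main theorem} produces a monotone hypothesis whose distance to $f$ is within $O(\eps)$ of $\opt$, not merely within a constant factor of it.
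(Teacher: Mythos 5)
Your proposal is correct and follows exactly the approach the paper takes: invoke the cited reduction of Parnas--Ron--Rubinfeld, i.e.\ run the learner of \Cref{theorem: main theorem} to obtain a monotone $g$ with $\dist(f,g)\in[\opt,\opt+O(\eps)]$, then estimate $\dist(f,g)$ empirically by Hoeffding and shift to make the lower bound one-sided. The paper states this in one line and leaves the details (the $+\eps$ shift, the sample count $\Theta(n/\eps^2)$, the union bound) implicit, all of which you fill in correctly.
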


\begin{restatable}{theorem}{correctorthm}[Local monotonicity correction of real-valued functions]
\label{thm:correction-main}
Let $P$ be a poset with $N$ elements, such that every element has at most $\Delta$ predecessors or successors and the longest directed path has length $h$.
Let $f: P \to [-1,1]$ be $\alpha$-close to monotone in $\ell_1$ distance.
There is an LCA that makes queries to $f$ and outputs queries to $g: P \to [-1,1]$,
such that $g$ is monotone and $||f - g||_1 \le 2\alpha + 3\eps$. 
The LCA makes $(\Delta \log N)^{O(\log h \log(1/\eps))}$ queries, uses a random seed of length $\poly(\Delta \log N)$, and succeeds with probability $1 - N^{-10}$.
\end{restatable}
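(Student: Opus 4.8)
The plan is to reduce the real-valued correction problem to a Boolean one by thresholding, and then to solve the Boolean version — the ``poset sorting'' problem of \cite{lange_properly_2022}, here for arbitrary posets — by a local-computation matching argument recursed over the structure of $P$.

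\emph{Reduction to Boolean labels.} Fix a grid of $m = O(1/\eps)$ thresholds at spacing $\eps$ on $[-1,1]$ and, for each grid point $\theta$, let $f_\theta : P \to \{0,1\}$ be the super-level indicator $f_\theta(x) = \indicator\parr{f(x) \ge \theta}$. The layer-cake identity $\norm{f - g}_1 = \int_{-1}^{1} \dist_{\mathrm{Ham}}(f_\theta, g_\theta)\, d\theta$ (valid for every $g$) shows that each $f_\theta$ is $\alpha_\theta$-close to monotone with $\eps \sum_\theta \alpha_\theta \le \alpha + \eps$, by comparing against the super-level sets of an $\ell_1$-optimal monotone function. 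Running the Boolean corrector below on each $f_\theta$ produces a monotone $g_\theta$ with $\dist_{\mathrm{Ham}}(f_\theta, g_\theta) \le 2\alpha_\theta + \Theta(\eps)$, and I would set
\[
g(x) \;=\; -1 \,+\, \eps \cdot \abs{\fig{\theta \text{ in the grid} : g_\theta(x) = 1}}.
\]
Since each $g_\theta$ is monotone the count is monotone in $x$, so $g$ is monotone and $[-1,1]$-valued; and a triangle inequality through the discretized function $x \mapsto -1 + \eps\abs{\fig{\theta : f(x) \ge \theta}}$ gives $\norm{f - g}_1 \le \eps + \eps\sum_\theta \dist_{\mathrm{Ham}}(f_\theta, g_\theta) \le 2\alpha + O(\eps)$, which becomes the stated $2\alpha + 3\eps$ after tuning the grid spacing. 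An LCA query to $g$ costs $m$ queries to the Boolean LCA, all of which can share one random seed, and a union bound over the $m$ thresholds handles the failure probability; so it remains to build the Boolean corrector with query complexity $(\Delta \log N)^{O(\log h \log(1/\eps))}$.

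\emph{Boolean corrector.} The factor $2$ comes from the following exact fact: for $f : P \to \{0,1\}$, the minimum number of points one must change to make $f$ monotone equals the minimum vertex cover of the bipartite ``violation graph'' $G_f$ on classes $f^{-1}(1), f^{-1}(0)$ with an edge for each comparable pair $x \prec y$ with $f(x) = 1 > 0 = f(y)$; a \emph{maximal} matching in $G_f$ has endpoint set of size at most twice this minimum, and replacing $f$ on those endpoints by the monotone extension of $f$ restricted to the (cover-free, hence monotone) remainder yields a monotone function within distance $2\beta$ of $f$, where $\beta = \dist(f, \mathrm{monotone})$. Since $G_f$ is not locally explorable — a vertex can be comparable to $\Omega(N)$ others — I would not work with $G_f$ directly but recurse over $P$: split $P$ by a rank function into a bottom half and a top half of roughly equal height, use a local-computation (near-)maximal matching procedure (in the style of Nguyen--Onak / Yoshida--Yamamoto--Ito) to eliminate the violated pairs crossing the cut while spending only a factor $2$, then recurse on each half on the residual function, propagating the constraint that already-flipped vertices stay flipped. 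The recursion has depth $O(\log h)$, and each level multiplies the per-query cost by $(\Delta \log N)^{O(\log(1/\eps))}$ — $\Delta$ controlling the degrees encountered while locally exploring the cut, $\log N$ the randomness and neighbor-sampling, and $\log(1/\eps)$ the near-maximality of the matching — yielding the claimed bound. Correctness requires the final $g$ to be exactly monotone (no crossing violation survives its own level, and the ``stay flipped'' constraints prevent the recursion from recreating one) and the total flipped fraction to be $\le 2\beta + \eps$ (charge each level's flips against an $\ell_1$-optimal monotone $g^\ast$: the crossing matching at a level has size at most the mass on which $f$ and $g^\ast$ differ there, and the sub-budgets handed to the two halves are disjoint, so the per-level costs telescope rather than compound).

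The main obstacle is exactly this last point: making the correction genuinely local and LCA-consistent on a general poset, where a single element may be comparable to a constant fraction of $P$, so the matching and its vertex cover must be produced from sparse local exploration of each vertex's up- and down-sets rather than by enumerating them, and the per-vertex flip decisions must nevertheless agree across queries so as to define one well-defined monotone $g$. On top of this, one must show that over the $O(\log h)$ recursion levels the factor $2$ does not degrade and that the $\Theta(\eps)$ errors introduced by using a merely near-maximal matching — and by any sampling used to approximate shadows — accumulate only additively. By comparison the layer-cake reduction to the Boolean case is routine bookkeeping.
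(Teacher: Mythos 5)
Your layer‑cake reduction is a genuinely different route from the paper's, and it works: the paper discretizes $f$ to $f_\eps = \lfloor f/\eps\rfloor : P \to [k]$ and then corrects the $k$‑valued labels bit by bit, from most significant to least, over a sequence of refined posets $P_i$ in which elements are comparable only if they agree on all more significant bits (\Cref{alg:k-valued-global}, \Cref{lem:bitwise}, \Cref{lem:k-correctness}). That nesting forces a multiplicative recursion across the $\log k = \Theta(\log(1/\eps))$ bits, which is where the $(\Delta\log N)^{O(\log h \log(1/\eps))}$ complexity comes from. Your thresholding version instead runs $O(1/\eps)$ \emph{independent} Boolean corrections on the \emph{same} poset $P$, one per grid level $\theta$, and aggregates by counting. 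The aggregation is correct even though the corrected level sets $g_\theta$ are no longer nested: monotonicity of $g = -1 + \eps\sum_\theta g_\theta$ needs only that each $g_\theta$ is individually monotone, and the distance bound needs only the pointwise triangle inequality $|g(x)-\tilde f(x)| \le \eps\sum_\theta|g_\theta(x)-f_\theta(x)|$, not the (false) layer‑cake identity for $g$. The comparison with an $\ell_1$‑optimal monotone $m$ via its (genuinely nested, monotone) level sets $m_\theta$ gives $\eps\sum_\theta \alpha_\theta \le \alpha + O(\eps)$, and the $2$‑approximation of each Boolean correction yields $\|f-g\|_1 \le 2\alpha + O(\eps)$. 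If anything, your decomposition buys a \emph{sharper} query complexity than the theorem states — $O(1/\eps)\cdot(\Delta\log N)^{O(\log h)}$ rather than $(\Delta\log N)^{O(\log h\log(1/\eps))}$ — because the $1/\eps$ Boolean calls are parallel rather than nested.

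Where you go wrong is in the second half. You do not need to rebuild the Boolean corrector from scratch: it is precisely the ``poset sorting'' LCA of \cite{lange_properly_2022}, stated in the paper as \Cref{thm:LRV}, which already guarantees a monotone $g_\theta$ with $\Pr[g_\theta\ne f_\theta]\le 2\,\dist(f_\theta,\mathrm{mono})$ in $(\Delta\log N)^{O(\log h)}\cdot\mathrm{polylog}(N/\delta)$ time with a shared random seed. Using that black box eliminates your per‑level $\Theta(\eps)$ slack (it is an exact factor‑$2$ guarantee, not a near‑maximal one) and the spurious $\log(1/\eps)$ in your exponent, and makes the failure‑probability union bound over $O(1/\eps)$ calls routine (the paper's treatment of $\log(1/\eps) \ge N$ covers the degenerate regime). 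Your sketch of the corrector — rank bisection, matching on crossing violations, ``stay flipped'' propagation — is in the right spirit but, as you acknowledge, leaves unresolved exactly the hard part: producing an LCA‑consistent vertex cover on a poset where a single vertex can be comparable to $\Omega(N)$ others, and proving the factor $2$ and additive $\eps$'s do not compound across $O(\log h)$ recursion levels. Those are the entire content of \cite{lange_properly_2022}, and should simply be cited. Plugging \Cref{thm:LRV} into your layer‑cake reduction gives a complete, and arguably cleaner and tighter, proof of \Cref{thm:correction-main}.
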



    \subsection{Our techniques: beyond the black-box correction barrier.}
The algorithms of \cite{lange_properly_2022} follow the following pattern (which we also summarize in \Cref{fig:mesh1}):
\begin{enumerate}
    \item Use \cite{bshouty_fourier_1996, kalai_agnostically_2005, feldman_tight_2020} to obtain a succinct description of a (possibly non-monotone) function $f_{\text{improper}}$ whose distance $\dist(f, f_{\text{improper}})$  is at most $\dist(f, f_{\text{mon}})+\epsilon$. The issue now is that $f_{\text{improper}}$ is not necessarily monotone, and therefore the distance $\dist(f, f_{\text{improper}})$ might dramatically underestimate the true distance to monotonicity $\dist(f, f_{\text{mon}})$. 
    \item Design and use a monotonicity corrector, in order to transform the succinct description of $f_{\text{improper}}$ into a succinct description of some \textbf{monotone} function $g_{\text{mon}}$ that is close to $f_{\text{improper}}$. Formally, \cite{lange_properly_2022} develop a corrector that guarantees that the distance $\dist(f_{\text{improper}}, g_{\text{mon}})$ satisfies
    \begin{equation}
    \label{eq: correction equation}
     \dist(f_{\text{improper}}, g_{\text{mon}})
     \leq c
     \min_{\text{monotone f'}} \dist(f_{\text{improper}}, f') + \epsilon,
    \end{equation}
    where the constant $c$ is $2$. They achieve this by a novel use of \textbf{Local Computation Algorithms} (LCAs) on graphs.
\end{enumerate}
\begin{figure}[h]
    \centering
    \includegraphics[width=1\textwidth]{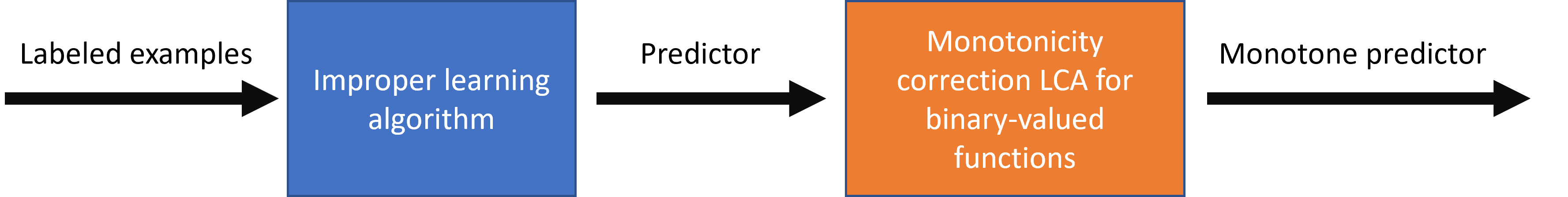}
    \caption{Control-flow diagram of the semiagnostic algorithm of \cite{lange_properly_2022}}
    \label{fig:mesh1}
\end{figure}
This way, \cite{lange_properly_2022} obtain a succinct polytime-evaluable description of a monotone function $g_{\text{mon}}$ for which\footnote{Strictly speaking, the properties of the corrector described so far yield only a guarantee of $\dist(f, g_{\text{mon}})\leq 4 \cdot \dist(f, f_{\text{mon}}) + \epsilon$. To improve the multiplicative error constant from $4$ to $3$ the work of \cite{lange_properly_2022} uses an additional property of the corrector.} $\dist(f, g_{\text{mon}})\leq 3 \cdot \dist(f, f_{\text{mon}}) + \epsilon$. 

However, one can see that even if the correction constant $c$ in \Cref{eq: correction equation} were equal to $1$ (which is the best it can be) this approach could only yield a guarantee of $\dist(f, g_{\text{mon}})\leq 2 \cdot \dist(f, f_{\text{mon}}) + \epsilon$.  

\begin{figure}[h]
    \centering
    \includegraphics[width=1\textwidth]{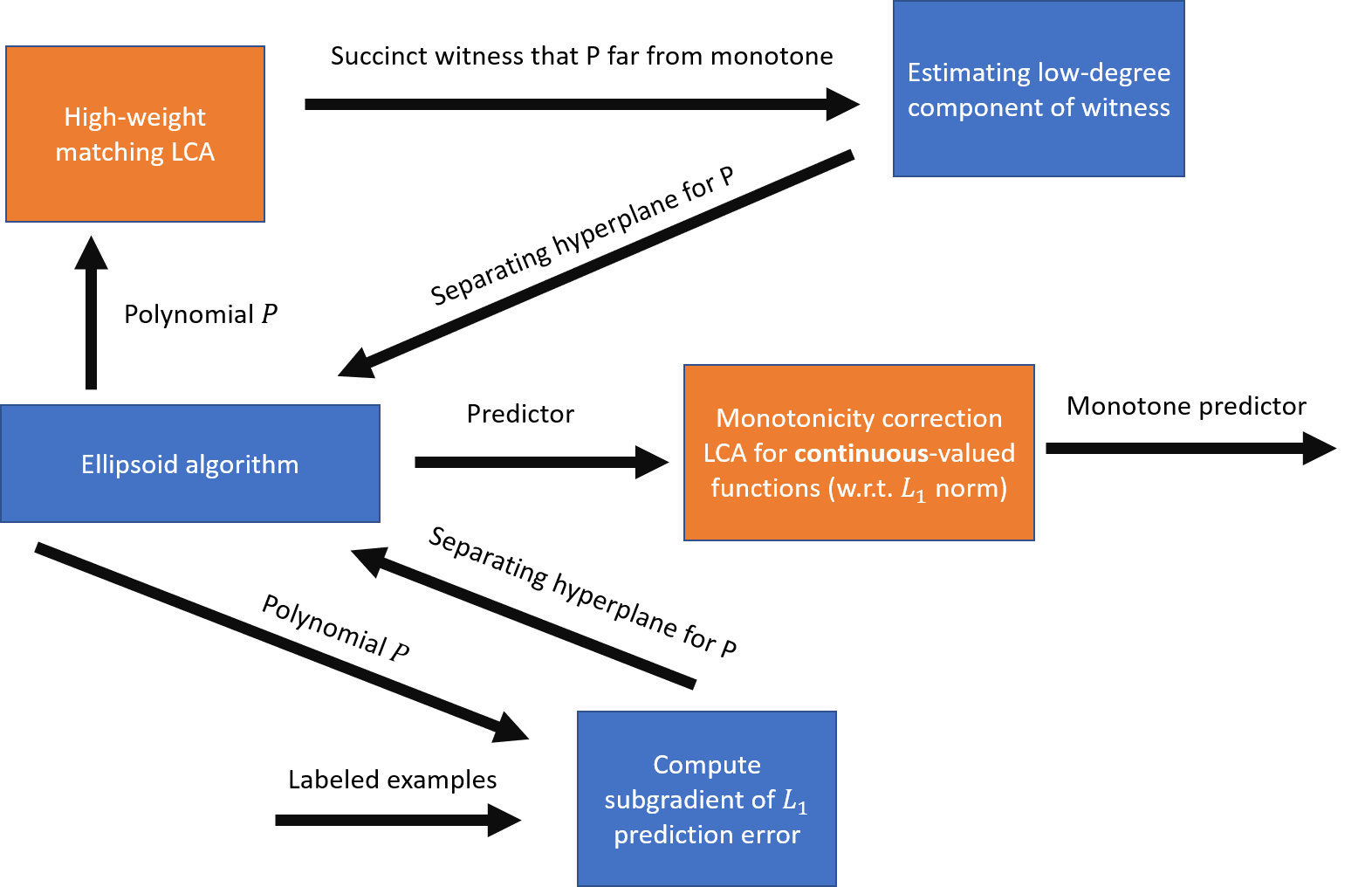}
    \caption{Control-flow diagram of the fully agnostic learning algorithm presented in this work}
    \label{fig:mesh2}
\end{figure}

\subsubsection{Description of our approach}

We overcome this barrier by using a different approach,  summarized in \Cref{fig:mesh2}. As before, there is an improper learning phase and a correction phase; however in both phases we work with real-valued functions. 
We have essentially three steps:

\begin{enumerate}
    \item Find a real-valued polynomial $P$ that is $\epsilon$-close to some monotone function, $(\alpha + \eps)$-close to the unknown function $f$ in $\ell_1$ distance, and bounded in $[-1,1]$. 
    \item Obtain a succinct description of a real-valued function $P_{\text{CORRECTED}}$ that is monotone, and $O(\epsilon)$-close to $P$ in $\ell_1$ distance.
    \item Round the real-valued function $P_{\text{CORRECTED}}$ to be $\{\pm 1\}$-valued, while preserving monotonicity and closeness to $f$. 
\end{enumerate}

In contrast to the approach of \cite{lange_properly_2022}, the improper learning phase is constrained to produce a good predictor that is $\epsilon$-close to some monotone function, regardless of how far $f$ may be from monotone. Existing improper learning algorithms are far from satisfying this new requirement. We design a new improper learner by combining the polynomial-approximation based techniques of \cite{bshouty_fourier_1996, kalai_agnostically_2005, feldman_tight_2020} with graph LCAs and the \emph{ellipsoid method} for convex optimization. 

The improper learning task is a convex feasibility problem; the set of polynomials satisfying the constraints we give in step (1) is a convex subset of the initial convex set of low-degree real polynomials. 
The ellipsoid method requires a \emph{separation oracle}, i.e. some way to efficiently generate a hyperplane separating a given infeasible polynomial from the feasible region.
Such hyperplanes are themselves low-degree real polynomials, which have high inner product with the infeasible polynomial and low inner product with every point in the feasible region.
The separator for the set of polynomials that are $(\alpha + \eps)$-close to $f$ is, as shown in \Cref{fig:mesh2}, just the gradient of the prediction error; 
the more interesting case is the separator for the set of polynomials that are $\eps$-close to monotone. 

With an argument inspired by the characterization of Lipschitz functions given in \cite{berman_l_p-testing_2014}, we observe that if a real-valued polynomial $P$ is far from monotone, this can be witnessed by a large matching on the pairs of 
elements on which $P$ violates monotonicity. 
Given any description of the matching, we show how to extract a separating hyperplane for $P$ by evaluating the matching on a set of sample points.
Therefore, the challenge is to find a description of a sufficiently large matching that can also be evaluated quickly.
We elaborate on this in the next section. 

Step (2) requires another technical contribution, which is an extension of the poset-sorting LCA of \cite{lange_properly_2022} to real-valued functions.
This extension is crucial for us to achieve the overall agnostic learning guarantee, because in the improper learning phase we obtain a real-valued function that is only close to monotone in $\ell_1$ distance.\footnote{One can construct functions that are arbitrarily close to monotone in $\ell_1$ norm but a constant fraction of their values needs to be changed for them to become monotone. Because of this, the corrector of \cite{lange_properly_2022} was not fit for our correction stage.} 
For step (3) we use the rounding procedure of \cite{kalai_agnostically_2005} that rounds real-valued functions to $\{\pm 1\}$-valued functions, and we show that this procedure also preserves monotonicity.

\subsubsection{LCAs and succinct representations of large objects}

In this work we employ heavily the concept of a \emph{succinct representation}. 
The succinct representations we deal with will have size and evaluation time $2^{\tilde{O}(\sqrt{n}/\epsilon)}$.
To be fully specific,
we consider succinct representations of two types of objects:
\begin{itemize}
    \item A succinct representation of a function $f:\{\pm 1\}^n\rightarrow \R$ is an algorithm that, given $ x \in \{\pm 1\}^n$, computes $f(x)$ in time $2^{\tilde{O}(\sqrt{n}/\epsilon)}$. 
    \item A succinct representation of a (possibly weighted) graph $G$ with the vertex set $\{\pm 1\}^n$ is an algorithm that, given $v \in \bits^n$, outputs all its neighbors and the weights of corresponding edges in time $2^{\tilde{O}(\sqrt{n}/\epsilon)}$. 
\end{itemize}
A polynomial of degree $O(\sqrt{n})$ is an example of a succinct representation, but another type of representation that makes frequent appearances in this work is a 
\emph{local computation algorithm}, or LCA \cite{alon_space-efficient_2012, rubinfeld_fast_2011}. 
An LCA efficiently computes a function over a large domain.
For example, an LCA for an independent set takes as input some vertex $v$, makes some lookups to the adjacency list of the graph, then outputs ``yes'' or ``no'' so that the set of vertices for which the LCA would output ``yes'' form an independent set.
Typically, its running time and query complexity are each sublinear in the domain size.  
We require that all LCAs used in this work have outputs consistent with one global object,
regardless of the order of user queries,
and without remembering any history from previous queries.
This property allows us to use the LCA, in conjunction with any succinct representation of the graph,
as a succinct representation of the object it computes. 
We formalize this relationship in \Cref{sec:prelim lca}.

\subsection{Other related work}
The local correction of monotonicity was studied in \cite{ailon_property-preserving_2008,saks_local_2010,bhattacharyya_lower_2010,awasthi_limitations_2014} and \cite{lange_properly_2022} (see \cite{lange_properly_2022} for an overview of previously available algorithms for monotonicity correction and lower bounds).

The work of \cite{canonne_testing_2016} gives an improper learning algorithm for a function class that is larger than monotone functions. Additionally, we note that testing of monotone functions has also been studied over hypergrids \cite{chakrabarty_optimal_2013,berman_l_p-testing_2014,black_od_2018,black_domain_2020}.

In addition to \cite{ghaffari_local_2022}, there have been many exciting recent works on local computation algorithms (LCAs). Some examples include \cite{rubinfeld_fast_2011}, \cite{alon_space-efficient_2012}, \cite{levi_local_2017}, \cite{goos_non-local_2015}, \cite{reingold_new_2016},
\cite{even_best_2014},
\cite{ghaffari_improved_2015}, \\ \cite{chang_complexity_2019}, \cite{even_sublinear_2021}
, \cite{parter_local_2019}, \cite{ghaffari_sparsifying_2019},\cite{levi_local_2020}, \cite{arviv_improved_2021},
\cite{brandt_randomized_2021} and \cite{grebik_classification_2021}.






\section{Preliminaries}
\label{sec:prelim}

\subsection{Posets and $\bits^n$}

Let $P$ be a partially-ordered set. 
We use $\preceq$ to denote the ordering relation on $P$.
We say $x\prec y$ (``$x$ is a predecessor of $y$'') if $x\preceq y$ and $x\neq y$,
and use the analogous symbols $\succeq$ and $\succ$ for successorship.
If $x\prec y$ and there is no $z$ in $P$ for which $x\prec z \prec y$,
then $x$ is an \emph{immediate predecessor} of $y$ and $y$ is an \emph{immediate successor} of $x$.
We refer to the poset $P$ and its Hasse diagram (DAG) interchangeably. 
The transitive closure $TC(P)$ is the graph on the elements of $P$ that has an edge from each vertex to each of its successors.
A \emph{succinct representation} of $P$ with size $s$ is any function stored in $s$ bits of memory that takes as input the identity of a vertex,
outputs the sets of immediate predecessors and immediate successors,
and runs in time $O(s)$ in the worst case over vertices.

Specific posets of interest in this work are the Boolean cube and the weight-truncated cube. We give a definition and a size-$O(n/\eps)$ representation computing the truncated cube.
\begin{definition}
The \emph{$n$-dimensional Boolean hypercube} is the set $\{-1,1\}^n$.
For $x,y\in \{-1,1\}^n$, we say $x\preceq y$ if for all $i\in\{1,\cdots,n\}$ one has $x_i\leq y_i$. It is immediate that $\{-1,1\}^n$ is a poset with $2^n$ elements. 

We also define the truncated hypercube 
$$
H^n_\epsilon:=\fig{x \in \{-1,1\}^n:~\abs{\sum_ix_i}\leq \sqrt{2n\log \frac{2}{\epsilon}}},
$$
Via Hoeffding's bound, we have that the fraction of elements in $\{0,1\}^n$ that are not also in $H_n^\epsilon$ is at most $2\exp\left(-\frac{2t^2}{4n}\right)=\epsilon$.
\end{definition}
For $i \in [n]$, let $e_i$ be the vector that is $1$ at index $i$ and $-1$ everywhere else.

\begin{algorithm}[H]
\begin{algorithmic}

\State \textbf{Given:} Input $x \in \bits^n$, truncation parameter $\eps$
\State \Return $\{x \oplus e_i ~|~ i \in [n] \text{ and } |\sum_j (x \oplus e_i)_j| \le \sqrt{2n \log \frac{2}{\epsilon}}\}$

\end{algorithmic}
\caption{\textsc {LCA: TruncatedCube}$(x,\eps)$}
\label{alg:truncated cube}
\end{algorithm}

\subsubsection{Fourier analysis over $\{\pm 1\}^n$.}
Let $[n]$ denote the set $\{1,2,\cdots,n\}$. We define for every $S\subseteq [n]$ the function $\chi_S:\{\pm 1\}^n\righ \R$ as  
$
\chi_S(\vect x):=
\prod_{i \in S} x_i.
$
We define the inner product between two functions $g_1, g_2:\{\pm 1\}^n\righ \R$ as follows: 
$
\langle
g_1,
g_2
\rangle
:=
\E_{\vect x \sim \{\pm 1\}^n} 
\pars{
g_1(\vect x) g_2(\vect x)
}
$.
It is known that $\langle \chi_{S_1}, \chi_{S_2}\rangle =\indicator_{S_1=S_2}$. For a function $g: \{\pm 1\}^n\righ \R$ we denote 
$
\widehat{g}(S):=\langle
g,
\chi_S
\rangle
$.
It is known that 
\begin{align*}
g(\vect x)&=\sum_{S\subseteq [n]} \widehat{g}(S) \chi_S (\vect x)   &  \langle
g_1,
g_2
\rangle
&=
\sum_{S\subseteq [n]} \widehat{g}_1(S) \widehat{g}_2(S).     
\end{align*}

\subsection{Monotone functions}
Part of our algorithm concerns monotonicity of functions over general posets. 
For a function $f: P \to \R$, we say that a pair of elements $x, y \in P$ forms
a \emph{violated pair} if we have $x\preceq y$ but $f(x) >f(y)$,
and we define the \emph{violation score} $\mathrm{vs}(x,y) := f(x) - f(y)$.
The \emph{violation graph } $\viol(f)$ is the subgraph of $TC(P)$ induced by violated pairs in $f$. The weight of
an edge is the difference $f(x)-f(y)$.

The $\ell_{1}$ distance of $f$ to monotonicity $\dist(f, \mathrm{mono})$ is the $\ell_{1}$ distance
of $f$ to the closest real-valued monotone function.

\begin{definition}[Distance to monotonicity]
The $\ell_{1}$ distance of $f: P \to \R$ to monotonicity is its distance to the closest real-valued monotone function. 
\[\dist_1(f, \mathrm{mono}) := \min_{\text{monotone } g: P \to \R} \bigg [ \frac{1}{|P|} \sum_{x \in P} |f(x) - g(x)| \bigg ]\]
The Hamming distance to monotonicity of $f: P \to \bits$ is defined analogously.
\[\dist_0(f, \mathrm{mono}) := \min_{\text{monotone } g: P \to \bits} \bigg [ \frac{1}{|P|} \sum_{x \in P} \Ind[f(x) \ne g(x)] \bigg ]\]
\end{definition}


%
%
%
We will need a bound on how well monotone functions can be approximated by low-degree polynomials. The following fact follows\footnote{see \cite{lange_properly_2023} for more explanation on how these references yield the fact below.} from \cite{bshouty_fourier_1996, kalai_agnostically_2005} and a refinement by \cite{feldman_tight_2020}.
\begin{fact}
\label{fact:A monotone Boolean functions well-approximated by polys}For every monotone $f: \bits^n \to \bits$ and $\eps > 0$, there exists a multilinear polynomial $p$ of degree $\lceil \frac{4 \cdot \sqrt{n}}{\eps} \log \frac{4}{\eps} \rceil$ such that 
\[||f - p||_1 \le \eps.\]
\end{fact}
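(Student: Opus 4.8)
The plan is to combine Fourier truncation with noise smoothing. A bare truncation of $f$ to degree $k$ only controls the error in $\ell_2^2$, and for monotone $f$ this tail is $\sum_{|S|>k}\widehat f(S)^2 \le \sqrt n/k$ (by the total-influence bound below), which translates into $\ell_1$ error $\le\sqrt{\sqrt n/k}$ and would force $k\gtrsim \sqrt n/\eps^2$ — a factor $1/\eps$ worse than claimed. Smoothing $f$ with the noise operator before truncating removes this loss, at the price of an extra $\log(1/\eps)$ factor in the degree. Throughout, $\|\cdot\|_1$ is the $\ell_1$ norm under the uniform distribution on $\bits^n$, i.e. $\|g\|_1=\E_{\vect x}|g(\vect x)|$.

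The first ingredient is the classical bound on the total influence of a monotone function. For monotone $f:\bits^n\to\bits$, flipping a single coordinate can only move the value in one direction, so the coordinate influence $\Pr_{\vect x}[f(\vect x)\ne f(\vect x^{\oplus i})]$ equals the degree-one Fourier coefficient $\widehat f(\{i\})$. Hence the total influence $I[f]:=\sum_{S}|S|\,\widehat f(S)^2=\sum_i\widehat f(\{i\}) \le \sqrt n\big(\sum_i\widehat f(\{i\})^2\big)^{1/2}\le\sqrt n$, by Cauchy--Schwarz and Parseval.

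The second step passes to a noise-smoothed copy. For $\delta\in(0,\tfrac12)$ and $\rho=1-2\delta$, let $T_\rho f$ be the noise operator applied to $f$, so $\widehat{T_\rho f}(S)=\rho^{|S|}\widehat f(S)$ and $T_\rho f(\vect x)=\E_{\vect y}[f(\vect y)]$ over a $\rho$-correlated pair $(\vect x,\vect y)$. By Jensen, $\|f-T_\rho f\|_1\le\E_{\vect x,\vect y}|f(\vect x)-f(\vect y)|=2\Pr_{(\vect x,\vect y)}[f(\vect x)\ne f(\vect y)]=2\,\mathrm{NS}_\delta[f]$, and since $\mathrm{NS}_\delta[f]=\tfrac12\sum_S(1-\rho^{|S|})\widehat f(S)^2\le\delta\sum_S|S|\,\widehat f(S)^2=\delta\,I[f]$ (using $1-(1-2\delta)^{|S|}\le 2\delta|S|$), we get $\|f-T_\rho f\|_1\le 2\delta\sqrt n$. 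Separately, the high-degree tail of $T_\rho f$ decays geometrically: with $p:=\sum_{|S|\le k}\rho^{|S|}\widehat f(S)\,\chi_S$, a multilinear polynomial of degree $\le k$, one has $\|T_\rho f-p\|_2^2=\sum_{|S|>k}\rho^{2|S|}\widehat f(S)^2\le\rho^{2k}$, hence $\|T_\rho f-p\|_1\le\|T_\rho f-p\|_2\le\rho^k\le e^{-2\delta k}$.

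Combining the two estimates by the triangle inequality gives $\|f-p\|_1\le 2\delta\sqrt n+e^{-2\delta k}$. Choosing $\delta=\Theta(\eps/\sqrt n)$ makes the first term $\le\eps/2$, and then $k=\Theta\!\big(\tfrac{\sqrt n}{\eps}\log\tfrac1\eps\big)$ makes the second term $\le\eps/2$; tracking the constants yields the stated degree $\lceil\frac{4\sqrt n}{\eps}\log\frac4\eps\rceil$. There is no real obstacle here — the one point that must be handled correctly is the middle step: measuring the error introduced by smoothing in $\ell_1$ rather than $\ell_2$, so that it is governed by the noise sensitivity, which for monotone functions is the favorable quantity $\delta\,I[f]\le\delta\sqrt n$ and can be driven below any target by shrinking $\delta$.
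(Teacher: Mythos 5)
Your argument is correct, and the constants check out: with $\delta = \eps/(4\sqrt n)$ the smoothing cost is exactly $\eps/2$, and with $k=\lceil\frac{4\sqrt n}{\eps}\log\frac{4}{\eps}\rceil$ one has $e^{-2\delta k}\le\eps/4<\eps/2$. The paper does not actually give a proof of this fact — it is stated as a citation to \cite{bshouty_fourier_1996,kalai_agnostically_2005,feldman_tight_2020} with a footnote pointing to a companion note for the derivation — so strictly there is no in-paper proof to compare against. What you have reconstructed is, however, the standard argument behind those references: the total-influence bound $I[f]\le\sqrt n$ for monotone $f$, the $\ell_1$ cost of noise smoothing bounded by noise sensitivity $\le\delta\,I[f]$, and geometric decay of the tail of $T_\rho f$ after truncation. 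Your remark that bare truncation only yields degree $\Theta(\sqrt n/\eps^2)$ for $\ell_1$ error $\eps$, and that the noise operator is what trades the extra $1/\eps$ for a $\log(1/\eps)$, is precisely the content of the ``refinement'' the paper attributes to \cite{feldman_tight_2020}. So this is a correct and complete filling-in of the cited fact, not a different route.
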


\subsection{Convex optimization}
\begin{definition}
A \textbf{separation oracle }for a convex set $\mathcal{C}_{\text{convex}}$
is an oracle that given a point $\vect x$ does one of the following
things:
\begin{itemize}
\item If $x\in\mathcal{C}_{\text{convex}}$, then the oracle outputs ``Yes''.
\item If $x\notin\mathcal{C}_{\text{convex}}$, then the oracle outputs $(\text{No}, q_{\text{separation}})$, where $Q_{\text{separation}} \in \R ^d$ represents a direction along which $x$ is separated from $\mathcal{C}_{\text{convex}}$. Formally,  $\langle Q_{\text{separation}}, x\rangle >  \langle Q_{\text{separation}}, x'\rangle$ for any $x'$ in $\mathcal{C}_{\text{convex}}$.
\end{itemize}
\end{definition}
\begin{fact}
\label{fact: ellipsoid algorithm}
There is an algorithm $\textsc{EllipsoidAlgorithm}$ that takes as inputs positive real values $r$ and $R$, and access to a separation oracle for some convex set $\mathcal{C}_{\text{convex}}\subset \{x\in \R^d:\quad \norm{x}\leq R\}$. The algorithm runs in time $\poly \parr{d, \log \frac{R}{r}}$ and either outputs an element in $\mathcal{C}_{\text{convex}}$ or outputs FAIL. Furthermore, if $\mathcal{C}_{\text{convex}}$ contains a ball of radius $r$, the algorithm is guaranteed to succeed.

\end{fact}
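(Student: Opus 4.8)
The plan is to recall the standard analysis of the ellipsoid method, which is exactly the algorithm $\textsc{EllipsoidAlgorithm}$ we want. I would maintain a nested sequence of ellipsoids $E_0 \supseteq E_1 \supseteq \cdots$, each of which provably contains $\mathcal{C}_{\text{convex}}$, initialized with $E_0 := \{x \in \R^d : \norm{x} \le R\}$; this contains $\mathcal{C}_{\text{convex}}$ by the hypothesis $\mathcal{C}_{\text{convex}} \subset \{x : \norm{x} \le R\}$. At iteration $t$, let $c_t$ be the center of $E_t$ and call the separation oracle on $c_t$. If it answers ``Yes'', halt and output $c_t$; this is correct since the oracle reports ``Yes'' only on points of $\mathcal{C}_{\text{convex}}$. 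If it answers $(\text{No}, Q)$, then the defining property of the oracle gives $\langle Q, x'\rangle < \langle Q, c_t\rangle$ for every $x' \in \mathcal{C}_{\text{convex}}$, so $\mathcal{C}_{\text{convex}}$ is contained in the half-ellipsoid $E_t \cap \{x : \langle Q, x\rangle \le \langle Q, c_t\rangle\}$. We then let $E_{t+1}$ be the minimum-volume ellipsoid containing this half-ellipsoid; it is given by an explicit closed-form rank-one update of the center and the positive-definite matrix defining $E_t$, computable with $\poly(d)$ arithmetic operations.

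The quantitative heart of the argument is the volume-shrinkage estimate $\mathrm{vol}(E_{t+1}) \le e^{-1/(2(d+1))}\,\mathrm{vol}(E_t)$, which follows from a direct (routine) computation with the update formula. Iterating gives $\mathrm{vol}(E_T) \le e^{-T/(2(d+1))}\,\mathrm{vol}(E_0)$; since $\mathrm{vol}(E_0)$ is the volume of the radius-$R$ ball while the volume of a radius-$r$ ball is smaller by the factor $(R/r)^d$, choosing $T := \lceil 2(d+1)\, d \ln(R/r)\rceil = \poly(d, \log(R/r))$ makes $\mathrm{vol}(E_T)$ strictly smaller than the volume of any radius-$r$ ball. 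If the algorithm has not halted with a point after $T$ iterations, it outputs FAIL. For the ``furthermore'' clause: if the oracle never answered ``Yes'' in $T$ iterations, then $\mathcal{C}_{\text{convex}} \subseteq E_T$ by induction on the invariant above, hence $\mathrm{vol}(\mathcal{C}_{\text{convex}}) \le \mathrm{vol}(E_T)$, which is too small to contain a radius-$r$ ball — contradicting the assumption. So whenever $\mathcal{C}_{\text{convex}}$ contains a radius-$r$ ball, the algorithm must output a point, and that point lies in $\mathcal{C}_{\text{convex}}$ because the oracle certified it. The running time is $T$ iterations, each an oracle call plus $\poly(d)$ arithmetic, i.e.\ $\poly(d, \log(R/r))$ overall.

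The one genuine subtlety — the step that needs care rather than bookkeeping — is bit-precision: the ellipsoid update involves a square root, so the matrices defining $E_t$ do not have exact representations of bounded size. The standard fix is to perform each update in $\poly(d, \log(R/r))$ bits of precision and then enlarge the ellipsoid by a factor $1 + 2^{-\Theta(\poly(d))}$; one checks that this preserves the invariant $\mathcal{C}_{\text{convex}} \subseteq E_t$ while degrading the shrinkage rate only to, say, $e^{-1/(4(d+1))}$, which at most doubles $T$. This is carried out in full in standard treatments of the ellipsoid method (Grötschel--Lovász--Schrijver, Khachiyan), and we would invoke it as a black box rather than reproduce it.
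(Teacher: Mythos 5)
Your proposal is correct: it is the standard textbook analysis of the ellipsoid method (center-query, half-ellipsoid cut, L\"owner--John update, volume shrinkage by $e^{-1/(2(d+1))}$ per step, termination after $O(d^2\log(R/r))$ iterations, with the bit-precision issue deferred to Gr\"otschel--Lov\'asz--Schrijver). The paper states this result as a \factname{} imported from the literature and gives no proof of its own, so your argument simply supplies the standard one it implicitly relies on; no discrepancy to report.
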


\subsection{LCAs and succinct representations}
\label{sec:prelim lca}
We use the following LCAs in this work:
\begin{theorem}[LCA for maximal matching\footnote{To be fully precise, \cite{ghaffari_local_2022} gives an LCA for the task of maximal independent set. The reduction to maximal matching is standard, see e.g. \cite{lange_properly_2022}.} \cite{ghaffari_local_2022}]
\label{cor:matchings} 
There is an algorithm $\text{GhaffariMatching}$ that takes all-neighbor access to a graph $G$, with $N$ vertices and largest degree at most $\Delta$, a random string $r \in \zo^{\poly(\Delta, \log(N/\delta))}$, parameter $\delta \in (0,1)$ and a vertex $v \in G$. The algorithm outputs the identity of a vertex $u:(u,v) \in E(G)$ or $\bot$. The algorithm runs in time $\poly(\Delta, \log(N/\delta))$ and with probability at least $1-\delta$ over the choice of $r$ the condition of \textbf{global consistency holds} i.e. the set of edges $\{(u, v)\in G: \, \text{GhaffariMatching}(G, r, \delta, u)=v\}$ is a maximal matching in the graph $G$.
\end{theorem}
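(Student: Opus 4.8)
The plan is to derive the theorem from the maximal-independent-set LCA of \cite{ghaffari_local_2022} via the standard line-graph reduction (cf. \cite{lange_properly_2022}); write $\mathrm{GhaffariMIS}$ for that MIS LCA. Recall that the \emph{line graph} $L(G)$ has one vertex $v_e$ for each edge $e \in E(G)$, with $v_e$ adjacent to $v_{e'}$ precisely when $e$ and $e'$ share an endpoint, and that a set of edges of $G$ is a maximal matching exactly when the corresponding set of vertices is a maximal independent set of $L(G)$. The first step is to check that $L(G)$ inherits good parameters and admits cheap all-neighbor access: it has at most $N\Delta/2$ vertices and maximum degree at most $2(\Delta - 1)$, and given a canonical name for $e = \{u,w\}$ (say, the sorted pair of endpoint identifiers) one lists all neighbors of $v_e$ by querying the neighbor lists of $u$ and $w$ in $G$, at cost $O(\Delta)$ plus two all-neighbor queries to $G$. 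Thus all-neighbor access to $G$ yields all-neighbor access to $L(G)$ with $O(\Delta)$ overhead per lookup, and we may run $\mathrm{GhaffariMIS}$ on $L(G)$ with vertex bound $N' = N\Delta$, degree bound $\Delta' = 2\Delta$, confidence $\delta$, and a shared random string $r$.

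The algorithm $\text{GhaffariMatching}(G, r, \delta, v)$ then proceeds as follows: enumerate the (at most $\Delta$) edges of $G$ incident to $v$; for each such edge $e = \{u, v\}$, invoke $\mathrm{GhaffariMIS}(L(G), r, \delta, v_e)$; if some invocation reports that $v_e$ lies in the independent set, output the other endpoint $u$; otherwise output $\bot$. The output is well-defined because any two edges incident to $v$ are adjacent in $L(G)$, so at most one of them can lie in the independent set; in particular $\text{GhaffariMatching}(G,r,\delta,u) = v$ iff $v_{\{u,v\}}$ is in that set, so the edge $(u,v)$ being in the output is independent of orientation. Global consistency transfers immediately: conditioned on the success event of $\mathrm{GhaffariMIS}$ (probability at least $1 - \delta$ over $r$), the set $\{v_e : \mathrm{GhaffariMIS}(L(G), r, \delta, v_e) \text{ reports ``in MIS''}\}$ is a maximal independent set of $L(G)$, hence the corresponding edge set is a maximal matching of $G$; and since $\text{GhaffariMatching}$'s answer on $u$ is by construction determined exactly by membership of $u$-incident edges in that set, the edge set $\{(u,v) : \text{GhaffariMatching}(G,r,\delta,u) = v\}$ coincides with this maximal matching.

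It remains to collect the complexity bounds. A single call to $\text{GhaffariMatching}$ makes at most $\Delta$ calls to $\mathrm{GhaffariMIS}$ on $L(G)$, each running in time $\poly(\Delta', \log(N'/\delta)) = \poly(\Delta, \log(N/\delta))$, each of whose neighbor queries to $L(G)$ is simulated with $O(\Delta)$ overhead; the total is therefore $\poly(\Delta, \log(N/\delta))$. The random string is the one consumed by $\mathrm{GhaffariMIS}$ on $L(G)$, of length $\poly(\Delta', \log(N'/\delta)) = \poly(\Delta, \log(N/\delta))$ and shared across all internal MIS calls, and the overall failure probability is that of $\mathrm{GhaffariMIS}$, which is at most $\delta$.

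There is no serious obstacle here; the only points requiring care are (i) fixing a canonical naming scheme for the vertices of $L(G)$ so that $\mathrm{GhaffariMIS}$ is always invoked on a single, well-defined graph regardless of which endpoint's query triggered a given neighbor lookup, and (ii) the ``at most one incident edge in the independent set'' observation, which is exactly what makes the matching query order-independent and lets global consistency pass through unchanged.
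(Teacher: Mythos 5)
Your proposal is correct and is exactly the argument the paper has in mind: the paper does not prove this statement but cites it as the Ghaffari MIS LCA combined with the ``standard'' line-graph reduction (deferring to \cite{lange_properly_2022}), and your write-up is a faithful, correctly parameterized instantiation of that reduction, including the order-independence observation and the $\poly(\Delta,\log(N/\delta))$ bookkeeping.
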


\begin{theorem}[LCA for monotonicity correction of Boolean-valued functions  \cite{lange_properly_2022}]
\label{thm:LRV}
There is an algorithm $\text{BooleanCorrector}$ that takes access to a function $f: P \to \bits$
and all-neighbor access to a poset $P$ with $N$ vertices, 
such that each element has at most $\Delta$ predecessors and successors 
and the longest directed path has length $h$, a random string $r \in \zo^{\poly(\Delta, \log(N/\delta))}$, a parameter $\delta \in (0,1)$ and an element $x$ in $P$. The algorithm outputs a value in $\bits$. The algorithm runs in time $\Delta^{O(\log h)} \cdot \polylog(N/\delta)$ and with probability at least $1-\delta$ over the choice of $r$ the condition of \textbf{global consistency holds} i.e. the function $g: P \to \bits$ defined as 
$
g(x):=\text{BooleanCorrector}(P, r, \delta, x)
$ is monotone and is
such that 
$\Pr_{x \sim P}[g(x) \ne f(x)] \le 2 \cdot \dist(f, \mathrm{mono})$.
\end{theorem}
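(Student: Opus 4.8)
The plan is to first pin down the combinatorial object the LCA should compute, verify that it meets the $2\dist(f,\mathrm{mono})$ bound, and then localize it by a recursion on the height of $P$.

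First, the combinatorial core. For $f : P \to \bits$, let $\nu$ be the size of a maximum matching in the violation graph $\viol(f)\subseteq TC(P)$. I would show $\nu \le \dist(f,\mathrm{mono})\cdot N \le 2\nu$. For the left inequality: every monotone $g$ must disagree with $f$ on at least one endpoint of each violated pair (if $x\prec y$ and $f(x)=1>-1=f(y)$ then monotonicity forbids $g(x)=1,\,g(y)=-1$), and the endpoints of a matching are disjoint, so any monotone $g$ differs from $f$ in at least $\nu$ places. The right inequality is what produces the algorithm: fix \emph{any} maximal matching $M$ of $\viol(f)$, let $V(M)$ be its $2|M|$ endpoints, and set $g(x) := \max\{f(z) : z\preceq x,\, z\notin V(M)\}$ (the monotone function obtained by taking the upward closure of the $1$'s of $f$ outside $V(M)$; read $\max\emptyset := -1$). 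This $g$ is monotone by construction, and it agrees with $f$ on $P\setminus V(M)$: if $f(x)=1$ and $x\notin V(M)$ then $z=x$ forces $g(x)=1$; if $f(x)=-1$, $x\notin V(M)$ but $g(x)=1$, then some $z\prec x$ with $z\notin V(M)$ has $f(z)=1$, so $(z,x)$ is a violated pair avoiding $V(M)$, contradicting maximality of $M$. Hence $\Pr_x[g(x)\ne f(x)]\le |V(M)|/N = 2|M|/N \le 2\nu/N \le 2\dist(f,\mathrm{mono})$, with no additive slack.

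Next I would turn this into an LCA. The two difficulties are that $\viol(f)\subseteq TC(P)$ can have degree as large as $\Delta^h$ — so GhaffariMatching cannot be run on it directly — and that evaluating the envelope $g(x)$ naively touches all of the (up to $\Delta^h$) predecessors of $x$. I would resolve both by recursing on the height: decompose $P$ into its depth layers and recurse by halving the number of layers, forming a balanced binary decomposition tree of depth $O(\log h)$ whose leaves are single layers, i.e.\ antichains, on which $f$ is already monotone (the base case returns $f$ unchanged). At an internal node with sub-poset $Q = Q_B\sqcup Q_T$ ($Q_B$ the lower half-layers, $Q_T$ the upper), I assume monotone corrections $g_B$ on $Q_B$ and $g_T$ on $Q_T$ have been produced recursively, and \emph{merge} them into a monotone $g_Q$ on $Q$: build the violation graph on the pairs crossing the cut, extract a maximal matching with GhaffariMatching, and repair the matched endpoints by flipping them and propagating each flip monotonically inside its half (the same move as in the global envelope). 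Cutting by height keeps everything within reach: a query is answered by recursing into the two children and probing only $\poly(\Delta)$ further vertices at the current level, so over $O(\log h)$ levels the query count and running time are $\Delta^{O(\log h)}\cdot\polylog(N/\delta)$. A single GhaffariMatching instance per level suffices — the merges at one level act on disjoint vertex sets and so can be run as one instance on their disjoint union — giving seed length $O(\log h)\cdot\poly(\Delta,\log(N/\delta)) = \poly(\Delta,\log(N/\delta))$; global consistency of $g$ follows from that of each GhaffariMatching instance, and a union bound over the $O(\log h)$ instances (rescaling each failure probability by a polylog factor) gives overall success probability $1-\delta$.

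Finally, correctness: monotonicity of each $g_Q$ follows from maximality of the crossing matching (a surviving crossing violation with neither endpoint matched could be added) together with the fact that pushing a ``set to $-1$'' down a down-set, or a ``set to $+1$'' up an up-set, preserves monotonicity within a half; and the error bound follows by charging every changed value to the crossing matchings and checking that their union is a matching in $\viol(f)$ of size $\le\nu$, so $\Pr_x[g(x)\ne f(x)]\le 2\nu/N\le 2\dist(f,\mathrm{mono})$ exactly as in the global analysis. The hard part will be the \emph{merge}: choosing the layer cut and the definition of the crossing-violation graph so that (a)~it has degree $\poly(\Delta)$ rather than $\Delta^h$, (b)~the monotone repair of the matched endpoints only propagates a bounded amount at the current level, with the rest absorbed by the recursion, and (c)~the per-level matchings compose into a single matching of $\viol(f)$, so that the factor $2$ — and nothing worse — is preserved. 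The combinatorial core, the base case, the seed accounting, and the union bound are all routine once the merge is correctly set up.
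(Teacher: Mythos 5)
First, a point of reference: the paper does not prove this theorem --- it is imported verbatim from \cite{lange_properly_2022}, whose actual construction is the recursive, sorting-network-style ``merge by matching and swapping'' scheme you gesture at in your second half. Your combinatorial core is correct and is the standard argument: any matching in $\viol(f)$ lower-bounds $N\cdot\dist_0(f,\mathrm{mono})$ because a monotone function must err on an endpoint of every violated pair, and the upward closure $g(x)=\max\{f(z):z\preceq x,\ z\notin V(M)\}$ over the complement of a \emph{maximal} matching's vertex set is monotone and agrees with $f$ off $V(M)$, giving the factor $2$ with no additive slack.

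The gap is in the localization, which is the entire content of the theorem. You explicitly defer the merge step (``the hard part will be the merge''), and as sketched it does not work: flipping a matched endpoint $x\in Q_B$ from $1$ to $-1$ forces you to push $-1$ onto every predecessor of $x$ inside $Q_B$ that currently holds a $1$, so a single crossing repair can change up to $\Delta$ values, and ``charging every changed value to the crossing matchings'' no longer yields only $2|M|$ changes; making this work is exactly what \cite{lange_properly_2022} does, via swaps, re-sorting the halves after the merge, and an accounting through the analogue of \Cref{lem:l1-error-preservation}. Separately, you misread the degree hypothesis: $\Delta$ bounds the number of \emph{all} predecessors and successors, not immediate ones (in the truncated-cube application $\Delta=2^{\tilde{O}(\sqrt{n})}$ while the Hasse-diagram degree is only $n$), so $\viol(f)\subseteq TC(P)$ already has degree $O(\Delta)$ and your worry about $\Delta^{h}$ is unfounded. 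Under the correct reading, your own global construction localizes directly: run \textsc{GhaffariMatching} once on $\viol(f)$ and evaluate the envelope $g(x)$ by enumerating the at most $\Delta$ predecessors of $x$ and testing each for membership in $V(M)$, which costs $\poly(\Delta,\log(N/\delta))$ per query and inherits global consistency from the matching LCA. That proves the theorem as stated with no height recursion at all --- though it does not give the stronger ``label permutation'' form that the paper also relies on and states separately as a Fact, for which the swap-based recursion of \cite{lange_properly_2022} is genuinely needed.
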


An important idea in \cite{lange_properly_2022} is that LCAs (i.e. algorithms that achieve global consistency) can be used to operate on succinct representations of combinatorial objects. To explain further, we need the following definition:

\begin{definition}[Succinct representation]
A succinct representation of a function $f$ of size $s$ is a a description of $f$ that is stored in $s$ bits of memory and can be evaluated on an input in $O(s)$ time.  
\end{definition}
For example, circuits of size $s$ and polynomials of degree $\log s$ are examples of succinct representations of size $s$. The following fact follows immediately from the definition:
\begin{fact}[Composition of representations]
\label{fact: rep-composition}
If a function $f$ has a description that uses $t$ bits of memory and evaluates in time $O(t)$ given 
oracle queries to a function $g$, 
and $g$ has a succinct representation of size $s$,
then there is a succinct representation of $f$ of size $O(t + sq)$.
\end{fact}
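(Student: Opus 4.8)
The plan is to prove this by directly substituting the succinct representation of $g$ in place of the oracle used by the description of $f$, and then checking the size and evaluation time of the resulting composite object against the definition of a succinct representation. There is no real difficulty here; the statement is a bookkeeping consequence of the definitions, and the only points worth spelling out are how queries are counted and why the composite is stateless.

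Concretely, let $D_f$ be the given description of $f$: it occupies $t$ bits and, when supplied with oracle access to $g$, computes $f(x)$ in time $O(t)$; in particular it issues at most $q = O(t)$ oracle queries, and we keep $q$ as a separate parameter since in applications it may be far smaller than $t$. Let $D_g$ be the promised succinct representation of $g$ of size $s$, which computes $g(y)$ in time $O(s)$ on any input $y$. I would define the new description $D$ of $f$ to be the concatenation of $D_f$ and $D_g$ in memory, occupying $O(t+s)$ bits. To evaluate $D$ on input $x$, run $D_f$ step by step; whenever $D_f$ would query the oracle at a point $y$, pause it, run $D_g$ on $y$ to obtain $g(y)$ in time $O(s)$, feed this value back to $D_f$, and resume. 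When $D_f$ halts, output its output.

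For correctness, since $D_g$ returns exactly $g(y)$ on every query, the simulated execution of $D_f$ coincides with a genuine run of $D_f$ with oracle $g$, so $D$ outputs $f(x)$; if either sub-description is randomized we fold its random seed into $D$, so that $D$ is a deterministic-given-its-contents object consistent with one global function $f$ (this is exactly the global-consistency property needed to use $D$ as a succinct representation). For the resource bound, $D$ occupies $O(t+s)$ bits, and each evaluation performs the $O(t)$ steps of $D_f$ together with at most $q$ calls to $D_g$, each costing $O(s)$, hence runs in time $O(t+sq)$; padding the stored description up to this length if desired, $D$ is a succinct representation of $f$ of size $O(t+sq)$. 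The only thing to keep in mind (so the bound is never vacuous) is that $q \le O(t)$, so in the worst case this is $O(st)$; the refined form $O(t+sq)$ is what is useful when $f$ makes few queries.
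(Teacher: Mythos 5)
Your proof is correct and is exactly the direct simulation/substitution argument the paper has in mind (the paper states this fact as following immediately from the definition and gives no further proof). Your identification of $q$ as the number of oracle queries, with $q \le O(t)$, matches the intended reading of the bound $O(t+sq)$.
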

Now, for example,
combining \footnote{A note on the description sizes of LCAs: because LCAs are uniform (i.e. Turing-machine) algorithms, they can be simulated with a uniform circuit family. For each input size, the size of the corresponding circuit is polynomial in the running time of the LCA for that input size.} \Cref{fact: rep-composition} and \Cref{cor:matchings} we see immediately that for a graph $G$, with $N$ vertices and largest degree at most $\Delta$, using the algorithm in \Cref{cor:matchings} we can transform a size-$s$ representation\footnote{For simplicity, in the rest of the paper we will refer to such function as a "succinct representation of $G$".} of a function computing all-neighbor access to $G$ into a size-$\parr{\Delta^{O(\log h)} \cdot \polylog(N/\delta) \cdot s}$  representation\footnote{For simplicity, in the rest of the paper we will refer to such function simply as "representation of a maximal matching".} of a function that determines membership in some maximal matching over $G$. Note that this transformation itself runs in time $\Delta^{O(\log h)} \cdot \polylog(N/\delta) \cdot s$. Analogously, in an exact same fashion it is possible to combine \Cref{fact: rep-composition} and \Cref{thm:LRV}.




\section{Our algorithms}
\label{sec:pseudocode}
In this section we give descriptions of the agnostic learning algorithm and its major components (we will analyze the algorithms in the subsequent sections).
The algorithm \textsc{MonotoneLearner} makes calls to \\ \textsc{EllipsoidAlgorithm}, where the optimization domain is the $\leq n^{\left\lceil \frac{4 \cdot \sqrt{n}}{\eps} \log \frac{4}{\eps} \right\rceil}$-dimensional
space of degree-$\left\lceil \frac{4 \cdot \sqrt{n}}{\eps} \log \frac{4}{\eps} \right\rceil$ polynomials over $\R^n$, and its output, $P^\text{GOOD}$, is such a polynomial (see \Cref{fact: ellipsoid algorithm} for details).
It also makes calls to \textsc{HypercubeCorrector}, which is given in \Cref{thm:l1-cube-corrector}. 
\begin{algorithm}[H]
\begin{algorithmic}[1]
\State \textbf{Given:} \parbox[t]{410pt}{Integer $n$, $\epsilon \in (0,1)$, and 
uniform sample access to an unknown function $f:\{\pm 1\}^n \rightarrow
\{\pm 1\}$.\strut}
  
\State \textbf{Output:} Circuit $\mathcal{C}: \{\pm 1\}^n \righ \{\pm 1\}$.
  \For{$\alpha \in \{\epsilon,2\epsilon,3\epsilon,\cdots1-\epsilon,1+200\epsilon\}$}                    
        \State $\text{OptimizationResult} \leftarrow \textsc{EllipsoidAlgorithm}\parr{1, \epsilon \cdot n^{-\frac{1}{2}\left\lceil \frac{4 \cdot \sqrt{n}}{\eps} \log \frac{4}{\eps} \right\rceil},  \textsc{Oracle}
_{\alpha, n, \epsilon}
}$.
        
        \If{OptimizationResult$\neq$FAIL}
        \State $P^\text{GOOD} = \text{OptimizationResult}$
        
\State $P^\text{GOOD}_{\text{TRIMMED}} 
\leftarrow$  representation of a function that takes input 
$\vect x$ and outputs the value
\begin{center}
$\begin{cases}
    P^{\text{GOOD}}(x) &\text{if }P^{\text{GOOD}}(x)\in [-1,+1] \\
    1 &\text{if } P^{\text{GOOD}}(x) > 1\\
    -1 &\text{if } P^{\text{GOOD}}(x) < -1
\end{cases}$
\end{center}
\State $P^\text{GOOD}_{\text{CORRECTED}}\leftarrow$ representation of a function that takes input $\vect x$ and returns the value 
\begin{center}
$\textsc{HypercubeCorrector}(x,P^\text{GOOD}_{\text{TRIMMED}},r)$
\end{center}

\State $T \leftarrow \frac{200}{\epsilon^2} \log\parr{\frac{20}{\epsilon}} \log (20n)$ i.i.d. pairs $(\vect x_i, f(\vect x_i))$, with $\vect x_i$ sampled uniformly from $\bits^n$.  
\State ThresholdCandidates$\leftarrow\left\{\frac{1}{\epsilon} \text{ i.i.d. uniformly random elements in } [-1,1]\right\}$.
\State $t^*:=\argmin_{t \in \text{ThresholdCandidates}} \pars{\frac{1}{|T|}\sum_{\vect x \in T}\left[\abs{\sign(P_{\text{GOOD, TRIMMED, CORRECTED}}(\vect x)-t)-f(\vect x)}\right]}$
\State \Return representation of a function that takes input $\vect x$ and returns the value \label{line:return}
\begin{center}
$\sign(P_{\text{GOOD, TRIMMED, CORRECTED}}(\vect x)-t)^*$ 
\end{center}
        \EndIf 
    \EndFor

\end{algorithmic}
\caption{Algorithm \textsc{MonotoneLearner}
$(n, \epsilon, T)$
}
\label{alg:monotone learner}
\end{algorithm}

The subroutine \textsc{Oracle} takes as input a polynomial and provides the separating hyperplane required by \textsc{EllipsoidAlgorithm}.
It makes calls to \textsc{HypercubeMatching} (see \Cref{lem:cube-matching}), which provides a high-weight matching over the pairs of labels that violate monotonicity.

\begin{algorithm}[H]
\begin{algorithmic}[1]
\State \textbf{Given:} \parbox[t]{410pt}{$\epsilon,\alpha \in (0,1)$,  degree-$\left\lceil \frac{4 \cdot \sqrt{n}}{\eps} \log \frac{4}{\eps} \right\rceil$ polynomial
$P$ over $\R^n$ with $\norm P_{2}\leq1$, and\\
uniform sample access to an unknown function $f:\{\pm 1\}^n \rightarrow
\{\pm 1\}$.
\strut}
  
\State \textbf{Output:} "Yes" or ("No", $Q_{\text{separator}}$), where  $Q_{\text{separator}}$ is a degree-$\left\lceil \frac{4 \cdot \sqrt{n}}{\eps} \log \frac{4}{\eps} \right\rceil$ polynomial over $\R^n$.
\State $P_{\text{TRIMMED}} \leftarrow$  representation of a function that takes input 
$\vect x$ and outputs $\begin{cases}
    P(x) &\text{if }P(x)\in [-1,+1] \\
    1 &\text{if } P(x) > 1\\
    -1 &\text{if } P(x) < -1
\end{cases}$.
\State \label{line: first mention of C} $T\leftarrow$ set of $n^{\frac{C\sqrt{n}}{\epsilon} \log \frac{1}{\epsilon}}$ i.i.d. pairs $(\vect x_i, f(\vect x_i))$, with $\vect x_i$ sampled uniformly from $\bits^n$ (for sufficiently large constant $C$).
\State $r \leftarrow$ string of $2^{C \sqrt{n} (\log n \cdot \log \frac{1}{\epsilon})^C}$ random i.i.d. bits (for sufficiently large constant $C$).
\State \label{line: second mention of C} $M_{\text{separator}}\leftarrow$ representation of a function that takes input $x$ and outputs 
\begin{center}
$\begin{cases}
    0 & \text{if } \textsc{HypercubeMatching}(P_{\text{TRIMMED}}, \eps/4, r) \text{ does not match $x$ to any other vertex} \\
    1 &\text{if } 
    \textsc{HypercubeMatching}(P_{\text{TRIMMED}}, \eps/4, r) \text{ matches $x$ some vertex $z$, s.t. $z \preceq x$}
    \\
     -1 &\text{if } 
    \textsc{HypercubeMatching}(P_{\text{TRIMMED}}, \eps/4, r) \text{ matches $x$ some vertex $z$, s.t. $z \succeq x$}
\end{cases}$
\end{center}


\If{$\frac{1}{|T|}\sum_{\vect x \in T} \pars{M_{\text{separator}}(\vect x)\cdot P_{\text{TRIMMED}}(\vect x)} > 5 \epsilon$} \label{line:estimate}
\State $Q_{\text{separator}} \leftarrow \sum_{S\subset[n]:\:\abs S\leq\left\lceil \frac{4 \cdot \sqrt{n}}{\eps} \log \frac{4}{\eps} \right\rceil}
\parr{
\frac{1}{|T|}\sum_{\vect x \in T} \pars{M_{\text{separator}}(\vect x)\cdot \chi_S(\vect x)} 
}\chi_S$
\State \Return ("No", $Q_{\text{separator}}$)
 \ElsIf{$\frac{1}{|T|}\sum_{\vect x \in T} \pars{\abs{f(\vect x)-P(\vect x)}}>\alpha+50\epsilon$} \label{line:other estimate}
\State $Q_{\text{separator}} \leftarrow 
\sum_{\substack{S\subset[n]
\abs S\leq\left\lceil \frac{4 \cdot \sqrt{n}}{\eps} \log \frac{4}{\eps} \right\rceil
}
}\parr{\E_{\vect x\sim T}\pars{\widehat{P}(S)\chi_{S}(\vect x)\sign(P(\vect x)-f(\vect x))}}\chi_{S}$ \label{line: qseparator}
\State \Return ("No", $Q_{\text{separator}}$)
\Else \State \Return "Yes"
\EndIf

\end{algorithmic}
\caption{Subroutine $\textsc{Oracle}
_{\alpha,n, \epsilon}(P)$
}
\label{alg:monotonicity oracle}
\end{algorithm}

The algorithm \textsc{MatchViolations} finds a high-weight matching on the violation graph of a poset. It is the main component of \textsc{HypercubeMatching}, which is just a wrapper that calls \textsc{MatchViolations} on the truncated cube.
\textsc{FilterEdges} removes vertices that are either incident to $M$ or have weight below the threshold $t$,
and \textsc{GhaffariMatching} is the maximal matching algorithm of \Cref{cor:matchings}. 
More implementation details and analysis are given in \Cref{sec:matching}.

\begin{algorithm}[H]
\begin{algorithmic}

\State \textbf{Given:} Poset $P$ and function $f:~P \to [-1, 1]$ given as succinct representations, weight threshold $\eps$, random seed $r$
\State \textbf{Output:} Succinct representation of a high-weight matching on the violating pairs of $P$ w.r.t. $f$
\If{$\eps < 1/|P|$}
\State $M \gets$ representation of the greedy algorithm that adds each edge $(x,y)$ of $TC(P)$ in decreasing order of $f(x) - f(y)$. 
\Else 
\State $t \gets 2$
\State $M \gets$ representation of a function computing the empty matching
\While{$t > \eps/2$}
\State $P' \gets$ representation of a function that takes input $x$ and outputs\\ \textsc{FilterEdges}$(TC(P), f, t, M, x)$ 
\State $M \gets $ representation of a function that takes input $x$ and outputs $M(x)$ if $M(x) \ne \bot$, otherwise \textsc{GhaffariMatching}$(P', r, x))$
\State $t \gets t/2$
\EndWhile
\EndIf
\State \Return $M$

\end{algorithmic}
\caption{\textsc {MatchViolations}$(P, f, \eps, r)$}
\label{alg:matching}
\end{algorithm}

The following is the core of \textsc{HypercubeCorrector}, given as a ``global overview'' for convenience.
Analysis and local implementation are given in \Cref{sec:corrector}. 
The algorithm corrects monotonicity of a $k$-valued function over a poset.
\textsc{HypercubeCorrector} is a wrapper that discretizes a real-valued function and then calls this corrector with the truncated hypercube as the poset.

\begin{algorithm}[H]
\begin{algorithmic}[1]
		\State \textbf{Given:} Poset $P$ of height $h$, function $f: P \to [k]$
\State \textbf{Output:} monotone function $g: P \to [k]$
\State Let $i \gets 0$
\For{$0 \le i \le \lceil \log k \rceil$}

\State Let $f_i$ be the projection of $f$ onto the $i_{th}$ most significant bit of $k$, i.e. $f_i(x) = 1$ if the $i_{th}$ bit of $f(x)$ is $1$.
\State Let $P_i$ be the poset on the elements of $P$ with the relation 
$$x \prec_{P_i} y :=  x \prec_P y \text{ and } f_j(x) = f_j(y) \text{ for all $j < i$}.$$

\State Let $\pi_i \gets$ {\sc BooleanCorrector}$(f_i, P_i)$.
\State Let $f \gets f \pi_i$.

\EndFor
\State \Return $f$
\end{algorithmic}
\caption{Global view of sorting $k$-valued labels in a poset}
\label{alg:k-valued-global}
\end{algorithm}

\section{Analysis of the local corrector}
\label{sec:corrector}
In this section, we prove \Cref{thm:correction-main} by analyzing our algorithm for correcting a real-valued function over a poset in a way that preserves the $\ell_1$ distance to monotonicity within a factor of 2.
This extends the monotonicity corrector of \cite{lange_properly_2022} to handle functions with non-Boolean ranges.

\begin{lemma}[$\ell_1$ correction of $k$-valued functions]
Let $P$ be a poset and $f: P \to [k]$ be $\alpha$-close to monotone in $\ell_1$ distance.
There is an LCA that makes queries to $f$ and outputs queries to $g: P \to [k]$,
such that $g$ is monotone and $||f - g||_1 \le 2\alpha$. 
The LCA makes $(\Delta \log N)^{O(\log h \log k)}$ queries, 
where $\Delta$ is the maximum number of predecessors or successors of any element in $P$, $N$ is the number of vertices, and $h$ is the length of the longest directed path.. 
It uses a random seed of length $\poly(\Delta \log N)$, and succeeds with probability $1 - N^{-10}$. 
\end{lemma}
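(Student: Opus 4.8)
The plan is to analyze Algorithm~\ref{alg:k-valued-global} (the bit-by-bit decomposition) and argue two things: that the output $g$ is monotone, and that $\|f-g\|_1 \le 2\alpha$. First I would set up notation: write $f^{(0)} = f$ and let $f^{(i+1)} = f^{(i)} \pi_i$ be the function after correcting the $i$-th most significant bit, so $g = f^{(\lceil \log k\rceil)}$. The key structural observation is that at stage $i$, the poset $P_i$ only has relations between elements that already agree on all higher-order bits $f_0,\dots,f_{i-1}$ (which, crucially, are already monotone after the previous stages, so ``agreeing on higher bits'' is a property preserved going forward along any chain). Thus correcting the $i$-th bit of $f^{(i)}$ on each of these ``constant-prefix'' sub-posets, via {\sc BooleanCorrector}, cannot disturb the monotonicity of the already-fixed higher bits, and it makes the pair $(f_0,\dots,f_i)$ jointly monotone in the lexicographic order. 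By induction on $i$, after all $\lceil \log k\rceil$ stages every bit is monotone along every chain, hence $g$ is monotone as a $[k]$-valued function. This is the part I expect to go through cleanly once the definition of $P_i$ is unwound.

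The main obstacle is the $\ell_1$ distance bound $\|f-g\|_1 \le 2\alpha$. Here the natural approach is a telescoping/layer-cake argument: $\|f - g\|_1 \le \sum_{i} \|f^{(i)} - f^{(i+1)}\|_1$ would give an extra $\log k$ factor, which is too lossy, so instead I would account for the error bitwise. Write $|f(x) - g(x)| = \sum_i 2^{\,(\text{bit position})}\cdot \Ind[\text{bit } i \text{ of } f(x) \ne \text{bit } i \text{ of } g(x)]$ — but this also does not obviously telescope. The cleaner route, mirroring \cite{lange_properly_2022}, is: let $g^\star$ be the closest monotone $[k]$-valued function to $f$, so $\|f - g^\star\|_1 = \alpha$. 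At stage $i$, on the sub-poset $P_i$, the function $f_i^\star$ (the $i$-th bit of $g^\star$, restricted appropriately) is a monotone Boolean function; {\sc BooleanCorrector} guarantees the corrected $\pi_i$ is at Hamming distance at most $2\cdot\dist_0(f_i \text{ restricted}, \mathrm{mono})$ from $f_i$ restricted to $P_i$, and that distance is at most the disagreement with $f_i^\star$. The crux is then to show $\sum_i 2^{w_i}\cdot \dist_0(f_i|_{P_i}, \mathrm{mono}) \le \|f - g^\star\|_1 = \alpha$, where $w_i$ is the weight of bit $i$ — i.e. that the bitwise $\ell_1$-to-monotone distances of $f$, measured in this nested-poset way, sum to the true $\ell_1$-to-monotone distance. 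I would prove this by charging each stage's ``bad'' elements to the corresponding contribution of $|f(x) - g^\star(x)|$, using that on $P_i$ only elements agreeing on the higher bits of $g^\star$ are comparable, so a higher-order bit correction and a lower-order one never double-charge the same unit of $|f(x)-g^\star(x)|$. Combining, $\|f - g\|_1 \le 2\alpha$.

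For the resource bounds, I would invoke Theorem~\ref{thm:LRV}: each of the $\lceil \log k\rceil$ invocations of {\sc BooleanCorrector} runs on a poset of height $\le h$ and max-degree $\le \Delta$, costing $\Delta^{O(\log h)}\cdot\polylog(N/\delta)$ queries, but — and this is where the $O(\log h \log k)$ exponent appears — each query of stage $i$ must itself be answered by recursively simulating stages $0,\dots,i-1$ (to evaluate $f^{(i)}$ and to decide comparability in $P_i$), so the query complexity multiplies across the $\lceil\log k\rceil$ levels, giving $(\Delta \log N)^{O(\log h \log k)}$ total. The seed: take the union of the $\lceil\log k\rceil$ independent seeds for the {\sc BooleanCorrector} calls, each of length $\poly(\Delta\log N)$, and since $\log k \le \poly(\Delta \log N)$ trivially in our regime (or one can reuse via a PRG, but a fresh seed per level suffices), the total seed length is $\poly(\Delta\log N)$. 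Global consistency and the $1 - N^{-10}$ success probability follow by a union bound over the $\lceil\log k\rceil$ levels, absorbing the loss into the polynomial in the failure exponent (taking each $\delta = N^{-O(1)}$). I would remark that \Cref{thm:correction-main} then follows by instantiating this lemma with $k = \poly(1/\eps)$ after discretizing $[-1,1]$ to a grid of spacing $\eps$, which contributes the additive $O(\eps)$ and turns the multiplicative-$2$ guarantee into $2\alpha + 3\eps$.
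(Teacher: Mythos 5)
Your monotonicity argument and the resource/seed/union-bound analysis all track the paper's proof closely, and the recursive structure giving the $(\Delta \log N)^{O(\log h \log k)}$ exponent is exactly right. The part that does not go through as written is the $\ell_1$ bound $\|f-g\|_1 \le 2\alpha$, where you correctly identify that naive telescoping loses a $\log k$ factor, but the bitwise charging argument you propose in its place has a concrete obstruction that you flag but do not resolve.

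Specifically, your plan needs ``$f_i^\star$ (the $i$-th bit of $g^\star$) is a monotone Boolean function over $P_i$'' so that $\dist_0(f_i|_{P_i}, \mathrm{mono})$ can be bounded by disagreement with $f_i^\star$. But $P_i$ is built from the (corrected) bits of $f$, not of $g^\star$, and the $i$-th bit of a monotone integer-valued function is not monotone on a sub-poset where the higher bits of $g^\star$ are allowed to differ: take $g^\star(x)=1$, $g^\star(y)=2$ with $x\prec y$; bit $1$ of $g^\star$ goes from $1$ down to $0$, and this pair can perfectly well be comparable in $P_1$ when the high bit of $f$ agrees at $x$ and $y$. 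Even granting that, the proposed sum $\sum_i 2^{w_i}\dist_0(f_i|_{P_i}, \mathrm{mono})$ does not obviously bound $\|f-g^\star\|_1$, because of carries: $|8-7|=1$ but the bit disagreements have total weight $15$. The paper's actual argument avoids both issues by a different, more global invariant (\Cref{lem:l1-error-preservation}): if $g$ is obtained from $f$ by swapping the labels of a single monotonicity-violating pair, then for \emph{every} monotone $m$ one has $\|g-m\|_1 \le \|f-m\|_1$. Every label swap the algorithm performs at stage $i$ is indeed a violation swap for the current $[k]$-valued function over $P$ (the pair agrees on higher bits and has $f_i(x)>f_i(y)$, hence $f(x)>f(y)$), so applying the lemma repeatedly with $m$ the closest monotone function to $f$ gives $\|g-m\|_1 \le \|f-m\|_1 = \alpha$, and then $\|g-f\|_1 \le \|g-m\|_1 + \|m-f\|_1 \le 2\alpha$ by the triangle inequality — no per-bit accounting and no need for $g^\star$'s bits to be monotone on anything. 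I would replace your ``crux'' paragraph with this swap-preservation invariant; the rest of your proof can stay.
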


The following lemmas are used in the proof of correctness of our algorithm. 
Their proofs are deferred to the appendix.

\begin{lemma}[Equivalence of $k$-valued and bitwise monotonicity]
		\label{lem:bitwise}
Let $f: P \to [k]$ be a function and $f_i$ be the projection of $f$ onto the $i_{th}$ most significant bit of $k$, i.e. $f_i(x) = 1$ if the $i_{th}$ bit of $f(x)$ is $1$,
for each $i \in [ \lceil \log k \rceil ]$. 
Let $P_i$ be the poset on the elements of $P$ with the relation 
$$x \prec_{P_i} y :=  x \prec_P y \text{ and } f_j(x) = f_j(y) \text{ for all $j < i$}.$$
Then $f$ is monotone if and only if each $f_i$ is monotone over the corresponding $P_i$.
\end{lemma}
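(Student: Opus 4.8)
\textbf{Proof plan for Lemma~\ref{lem:bitwise}.}
The plan is to prove both directions of the equivalence by relating the global order structure of $f$ over $P$ to the order structure of the individual bit-functions $f_i$ over the refined posets $P_i$. The key observation is that, for $x \prec_P y$, the comparison $f(x) \le f(y)$ as integers in $[k]$ is decided lexicographically by the bit-strings $(f_0(x), f_1(x), \dots)$ versus $(f_0(y), f_1(y), \dots)$: either all bits agree (and $f(x) = f(y)$), or there is a first index $i$ where they differ, and then the ordering of $f(x)$ and $f(y)$ is exactly the ordering of $f_i(x)$ and $f_i(y)$ (with the convention that the bit $1$ encodes the larger value when we read most-significant-bit first). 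This is the lemma that does all the work, and everything else is bookkeeping.

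First I would prove the ``only if'' direction. Assume $f$ is monotone over $P$. Fix $i$ and take any $x \prec_{P_i} y$; by definition of $P_i$ this means $x \prec_P y$ and $f_j(x) = f_j(y)$ for all $j < i$. Since $f$ is monotone we have $f(x) \le f(y)$. If $f(x) = f(y)$ then $f_i(x) = f_i(y)$ and there is nothing to check. Otherwise $f(x) < f(y)$, and since the first $i$ bits agree, the most significant bit at which they differ is at index $\ge i$; in fact, I claim it must be that $f_i(x) \le f_i(y)$: if instead $f_i(x) = 1$ and $f_i(y) = 0$, then because bits $0,\dots,i-1$ agree, the $i$-th bit would force $f(x) > f(y)$, a contradiction. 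Hence $f_i$ is monotone over $P_i$.

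Next I would prove the ``if'' direction. Assume each $f_i$ is monotone over $P_i$, and take an arbitrary pair $x \prec_P y$; I must show $f(x) \le f(y)$. Let $i^\star$ be the first bit index (if any) at which $f(x)$ and $f(y)$ differ; if there is none, $f(x) = f(y)$ and we are done. Otherwise, for all $j < i^\star$ we have $f_j(x) = f_j(y)$, so by definition $x \prec_{P_{i^\star}} y$. Monotonicity of $f_{i^\star}$ over $P_{i^\star}$ then gives $f_{i^\star}(x) \le f_{i^\star}(y)$; since they differ, $f_{i^\star}(x) = 0$ and $f_{i^\star}(y) = 1$. Because $i^\star$ is the most significant differing bit and the higher (more significant) bits $0,\dots,i^\star-1$ are equal, this forces $f(x) < f(y)$, as desired. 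Combining the two directions proves the equivalence.

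I do not expect any genuine obstacle here — the only thing to be careful about is fixing, once and for all, the orientation convention (most-significant-bit-first, bit value $1 \leftrightarrow$ larger integer) and checking that the definition of $P_i$ in the statement uses exactly that convention, so that ``first differing bit'' arguments go through cleanly; a degenerate case worth a sentence is when $k$ is not a power of two, but since the argument only ever compares bit-strings of two values both in $[k]$ and never needs surjectivity onto $\{0,1\}^{\lceil \log k\rceil}$, this causes no trouble. The mild subtlety, if one wants to call it that, is simply noticing that the refinement relation $\prec_{P_i}$ is precisely what is needed to make the ``first differing bit'' index valid as an index into a monotone $f_{i^\star}$: the pairs on which $f_{i^\star}$ must behave well are exactly the pairs that agree on all earlier bits, which is the definition of $P_{i^\star}$.
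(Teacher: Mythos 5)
Your proof is correct and follows essentially the same route as the paper's: both hinge on the observation that integer comparison of $f(x)$ and $f(y)$ is decided by the most significant differing bit $i$, and that this $i$ automatically satisfies $x \prec_{P_i} y$ because all earlier bits agree. You spell the two directions out separately where the paper compresses them into a single biconditional, but the underlying argument is the same.
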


\begin{lemma}[Preservation of closeness to monotone functions]
		\label{lem:l1-error-preservation}
Let $g$ be obtained from $f$ by swapping the labels of a pair $x \prec_P y$ that violates monotonicity. Then for any monotone function $m$, $||g - m||_1 \le ||f - m||_1$. 
\end{lemma}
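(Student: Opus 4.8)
The plan is to prove \Cref{lem:l1-error-preservation} by a purely local, pointwise argument that tracks how the $\ell_1$ distance to a fixed monotone function $m$ changes when we swap the labels of a violated pair. Fix a monotone function $m: P \to \R$ and a pair $x \prec_P y$ with $f(x) > f(y)$; let $g$ agree with $f$ everywhere except that $g(x) = f(y)$ and $g(y) = f(x)$. Since $f$ and $g$ differ only at $x$ and $y$, we have $|P| \cdot (||f-m||_1 - ||g-m||_1) = \big(|f(x)-m(x)| + |f(y)-m(y)|\big) - \big(|f(y)-m(x)| + |f(x)-m(y)|\big)$, so it suffices to show that the right-hand side is nonnegative, i.e. that swapping brings the values at $x$ and $y$ at least as close (in total $\ell_1$) to $m(x)$ and $m(y)$.

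The key observation is that because $m$ is monotone and $x \prec_P y$, we have $m(x) \le m(y)$, while $f(x) > f(y)$ — so the two ``source'' values $\{f(x), f(y)\}$ are ordered oppositely to the two ``target'' values $\{m(x), m(y)\}$. I would set $a = f(y) \le b = f(x)$ (the source values sorted increasingly) and $p = m(x) \le q = m(y)$ (the targets sorted increasingly). Then the pre-swap cost at these two coordinates is $|b - p| + |a - q|$ (matching the larger source $b=f(x)$ to the smaller target $p=m(x)$ and the smaller source $a=f(y)$ to the larger target $q=m(y)$), and the post-swap cost is $|a - p| + |b - q|$ (matching sorted source to sorted target). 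The claim reduces to the elementary rearrangement-type inequality $|a-p| + |b-q| \le |b-p| + |a-q|$ whenever $a \le b$ and $p \le q$. This is standard: one can verify it by a short case analysis on the relative order of $a, b, p, q$ on the real line, or more slickly by noting that $t \mapsto |a - t|$ and $t \mapsto |b-t|$ and observing the matching of sorted lists minimizes the sum; I would just include the two-line case check. Summing over all coordinates (only $x$ and $y$ contribute a difference), we get $||g - m||_1 \le ||f-m||_1$.

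The main obstacle is essentially bookkeeping rather than any real difficulty: one must be careful that the labels are indeed ``swapped'' (not, say, both set to some intermediate value), and that the monotonicity of $m$ together with $x \prec_P y$ gives exactly $m(x) \le m(y)$, which is what makes the orientation argument work. There is also a minor subtlety in the surrounding algorithm — the $k$-valued corrector does not literally swap labels but rather flips a bit via \textsc{BooleanCorrector} — but for this lemma we only need the clean statement about a single violated-pair swap, and the reduction to the scalar inequality above handles it cleanly. I would present the scalar inequality as an inline claim with its short proof, then conclude in one line.

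Concretely, the write-up would be: (1) reduce to the two affected coordinates; (2) use monotonicity of $m$ to get $m(x) \le m(y)$ and the violation to get $f(x) > f(y)$; (3) invoke the scalar inequality $|a-p|+|b-q| \le |b-p|+|a-q|$ for $a \le b$, $p \le q$ with the identification above; (4) sum up. No step requires more than elementary real-number manipulation, and nothing from the heavier machinery (LCAs, Fourier analysis, ellipsoid method) is needed.
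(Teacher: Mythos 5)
Your proposal is correct and takes essentially the same approach as the paper: both reduce to the two affected coordinates, use $m(x)\le m(y)$ and $f(x)>f(y)$, and conclude via a case analysis on the interleaving of $f(y),f(x),m(x),m(y)$ on the real line (you phrase it as the rearrangement inequality $|a-p|+|b-q|\le|b-p|+|a-q|$ for $a\le b$, $p\le q$, while the paper writes out the cases explicitly, but they are the same argument).
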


The corollary follows from repeated application of \Cref{lem:l1-error-preservation} and the triangle inequality.
\begin{corollary}[$\ell_1$ error preservation]
		Let $g$ be obtained from $f$ by a series of swaps of label pairs that violate monotonicity in $f$. Then $||g - f||_1 \le 2 \cdot \dist_1(f, \mathrm{mono})$.
\label{cor:l1-preservation}
\end{corollary}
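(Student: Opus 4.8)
The statement to prove is Corollary \ref{cor:l1-preservation}: if $g$ is obtained from $f$ by a series of swaps of label pairs that each violate monotonicity (at the time of the swap), then $\|g - f\|_1 \le 2 \cdot \dist_1(f,\mathrm{mono})$.

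\medskip

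The plan is to fix a monotone function $m$ achieving the minimum, i.e.\ $\|f - m\|_1 = \dist_1(f,\mathrm{mono})$, and track how the distance to this \emph{fixed} $m$ evolves as we perform the swaps. Let $f = f^{(0)}, f^{(1)}, \dots, f^{(T)} = g$ be the sequence of functions, where $f^{(i+1)}$ is obtained from $f^{(i)}$ by swapping the labels on a pair $x \prec_P y$ with $f^{(i)}(x) > f^{(i)}(y)$ (a violated pair in $f^{(i)}$). First I would apply Lemma \ref{lem:l1-error-preservation} at each step: since the swapped pair violates monotonicity \emph{in $f^{(i)}$}, the lemma gives $\|f^{(i+1)} - m\|_1 \le \|f^{(i)} - m\|_1$ for the fixed monotone $m$. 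Chaining these inequalities yields $\|g - m\|_1 = \|f^{(T)} - m\|_1 \le \|f^{(0)} - m\|_1 = \|f - m\|_1 = \dist_1(f,\mathrm{mono})$.

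\medskip

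Then I would finish by the triangle inequality:
\[
\|g - f\|_1 \le \|g - m\|_1 + \|m - f\|_1 \le \dist_1(f,\mathrm{mono}) + \dist_1(f,\mathrm{mono}) = 2\,\dist_1(f,\mathrm{mono}),
\]
which is exactly the claimed bound.

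\medskip

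The one subtlety to be careful about — and the only real ``obstacle'' — is the hypothesis of Lemma \ref{lem:l1-error-preservation}: it requires that the swapped pair violates monotonicity in the \emph{current} function, not merely in the original $f$. The corollary's phrasing ``swaps of label pairs that violate monotonicity in $f$'' should be read (and is intended) as: each swap is of a pair that is violated at the moment it is performed. So the induction must carry the functions $f^{(i)}$ along explicitly and invoke the lemma with respect to $f^{(i)}$, not $f$. Once that bookkeeping is set up, the argument is just the telescoping chain of the lemma plus one triangle inequality, with no further calculation needed. (If one instead only knew each pair was violated in the original $f$, the statement could fail, so this reading is essential.)
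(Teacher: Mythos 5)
Your proof is correct and follows exactly the route the paper intends: the paper states the corollary follows "from repeated application of Lemma~\ref{lem:l1-error-preservation} and the triangle inequality," and your write-up spells out precisely that argument (fix an optimal monotone $m$, chain the lemma to get $\|g-m\|_1 \le \|f-m\|_1$, then triangle inequality). Your remark that each swapped pair must be violated in the \emph{current} function $f^{(i)}$ for the lemma to apply is the right reading of the hypothesis and a useful clarification, but it is not a deviation from the paper's approach.
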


We also require a modification to the LCA claimed in \Cref{thm:LRV} for correcting Boolean functions.
That algorithm works by performing a sequence of label-swaps on pairs that violate monotonicity in the poset,
then outputting the function value that ends up at the queried vertex $x$.
It can instead track the swaps and output the identity of the vertex that
$x$ receives its final label from.
The modified algorithm can be thought of as an LCA that gives query access to a label permutation.

\begin{fact}[Poset sorting algorithm implicit in \cite{lange_properly_2022}]
Let $P$ be a poset with $N$ vertices such that every element has at most $\Delta$ predecessors and successors, and the longest directed path has length $h$. Let $f:P \to \bits$ be $\alpha$-close to monotone in Hamming distance.
There is an algorithm {\sc BooleanCorrector} that
gives query access to a permutation $\pi$ of $P$ such that $f\pi$ is a monotone function and $\Pr_{x \sim P}[f(x) \ne (f \pi) (x)] \le 2\alpha$. The LCA implementation of {\sc BooleanCorrector} uses $(\Delta \log N)^{O(
\log h)}$ queries and running time, has a random seed of length $\poly(\Delta \log N)$, 
and succeeds with probability $1 - N^{-11}$.
\end{fact}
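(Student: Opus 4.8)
The plan is to prove the "Poset sorting algorithm implicit in \cite{lange_properly_2022}" by revisiting the internals of the LCA from \Cref{thm:LRV} and observing that it is already doing exactly the right thing; we just need to expose a different part of its output. Concretely, the algorithm of \cite{lange_properly_2022} corrects $f:P\to\bits$ to monotone by maintaining (implicitly, via a recursive LCA) a sequence of \emph{transpositions} on $P$: it repeatedly finds a maximal matching on the violation graph of the current function (using \textsc{GhaffariMatching} at geometrically-decreasing weight/height scales) and swaps the labels across every matched violated pair. After $O(\log h)$ rounds of this, the function is monotone, and the total number of vertices whose label changed is at most $2\alpha N$ by \Cref{cor:l1-preservation} (specialized to Boolean labels, where an $\ell_1$ swap-count equals a Hamming swap-count). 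The key point: since every label movement is a transposition of two labels, the composition of all these transpositions is a genuine permutation $\pi$ of $P$, and the function the LCA outputs is precisely $f\pi$. So rather than having the LCA return the \emph{value} that lands at a queried vertex $x$, we have it return the \emph{source vertex} from which that value came — i.e., $\pi^{-1}(x)$, or equivalently we track $\pi$ directly. The resulting object is query access to $\pi$.

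The steps, in order, would be: (1) Recall the structure of the \cite{lange_properly_2022} corrector as a bounded-depth recursion that, at each node, invokes \textsc{GhaffariMatching} on a filtered violation subgraph and performs swaps; state precisely what "query access to the post-correction label of $x$" means as an LCA with global consistency. (2) Observe that because each primitive operation is a swap of two labels between comparable vertices, the entire transformation is realized by a permutation $\pi$, and augment the LCA's bookkeeping so that, on input $x$, instead of chasing "what value ends up here" it chases "which vertex's original label ends up here" — this is the same recursive traversal with the same query and time bounds, just carrying a vertex identity instead of a bit. (3) Invoke \Cref{cor:l1-preservation} (or its Boolean/Hamming specialization) to conclude $\Pr_{x\sim P}[f(x)\neq (f\pi)(x)]\le 2\alpha$, since the set of vertices where $f$ and $f\pi$ disagree is contained in the set of vertices touched by some swap, and each swap of a violated pair only moves the function closer to any fixed monotone target. (4) Read off the resource bounds — $(\Delta\log N)^{O(\log h)}$ queries and running time, seed length $\poly(\Delta\log N)$ — directly from \Cref{thm:LRV}, noting that tracking vertex identities rather than bits multiplies space/output by only $O(\log N)$, which is absorbed. (5) Inflate the failure probability from $N^{-10}$ to $N^{-11}$ by adjusting the constant in $\delta$ fed to \textsc{GhaffariMatching}, which costs only constant factors in the (already polylogarithmic) seed length and query complexity.

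The main obstacle I anticipate is purely expository rather than mathematical: making rigorous the claim that the \cite{lange_properly_2022} corrector's behavior is \emph{exactly} a sequence of label transpositions, so that "$f\pi$" is well-defined and the permutation $\pi$ itself admits a consistent LCA. One has to be careful that within a single maximal-matching round, the swaps on disjoint matched pairs genuinely commute (they do, since a matching is vertex-disjoint), and that across rounds the composition is well-defined in a fixed order determined by the LCA's random seed — i.e., that global consistency of $\pi$ follows from global consistency of each \textsc{GhaffariMatching} invocation and of the Boolean corrector it is built from. The other mild subtlety is that \Cref{lem:l1-error-preservation} is stated for $\ell_1$ distance over $[k]$-valued functions; for the Boolean case one should note it reduces to the Hamming statement "swapping a violated $0/1$ pair does not increase Hamming distance to any monotone $m$," which is immediate because such a swap corrects the pair's orientation relative to any monotone function. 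Once these points are nailed down, the quantitative guarantees transfer verbatim from \Cref{thm:LRV}, and the proof is short.
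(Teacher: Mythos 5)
Your proposal matches the paper's own (brief) justification: the paper observes, in the sentences preceding this fact, that the {\sc BooleanCorrector} LCA of \cite{lange_properly_2022} already operates by performing a sequence of label swaps across violated pairs, so one can simply track the source vertex of the label that lands at the queried vertex rather than the label value itself — exactly your step (2) — and all resource bounds carry over verbatim from \Cref{thm:LRV}. Your elaborations (that intra-round swaps commute because a matching is vertex-disjoint, that tracking a vertex identity rather than a bit only costs an $O(\log N)$ factor, and that the $2\alpha$ bound is the same Hamming-distance guarantee already in \Cref{thm:LRV}) are all sound and consistent with the paper's reasoning; the only cosmetic quibble is that you needn't route the $2\alpha$ bound through \Cref{cor:l1-preservation}, since \Cref{thm:LRV} already states it directly for Boolean functions.
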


%
%

Here we present the LCA implementation of \Cref{alg:k-valued-global}.

\begin{algorithm}[H]
\begin{algorithmic}[1]
		\State \textbf{Given:} Target vertex $x$, all-neighbors (immediate predecessor and successor) oracle for $P$, query access to $f: P \to [k]$, iteration number $i$, random seed $r$.
        \State \textbf{Output:} query access to function $g: P\to [k]$ which is monotone when truncated to the first $i$ most significant bits.
        \If{$i=0$} \Return $f(x)$
        \Else
		\State $S \gets $ the set of all predecessors and successors of $x$ in $P$
        \For {$y \in S$}
        \State Let $f'(y) \gets ${\sc $k$-Corrector}$(y, P, f, i-1, r)$.
        \EndFor
        \State Let $f'_i$ be defined as in \Cref{alg:k-valued-global}, and $P'_i$ be similarly defined with respect to $f'_i$.
		\State Remove any $y$ from $S$ such that $f'_i(y) = f'_i(x)$ or $y$ and $x$ are incomparable in $P'_i$.
        \State Let $z \gets$ {\sc BooleanCorrector}$(x, P'_i, f'_i, r)$
        \State \Return $f'(z)$
    \EndIf

\end{algorithmic}
\caption{LCA implementation of \Cref{alg:k-valued-global}, {\sc $k$-Corrector}$(x, P, f, i, r)$ }
\label{alg:local-k-corrector}
\end{algorithm}

\begin{algorithm}[H]
\begin{algorithmic}
		\State \textbf{Given:} function $f: \bits \to [-1,1]$ given as succinct representation, additive error parameter $\eps>0$, random seed $r$.
        \State \textbf{Output:} succinct representation of monotone function $g: \bits \to [-1,1]$.
        \State $P \gets $ representation of a function that takes $x$ and outputs {\sc TruncatedCube}$(x, \eps)$
        \State $f' \gets $ representation of a function that takes $x$ and outputs $\lfloor f(x) / \eps \rfloor $
        
        \State $f'' \gets $ representation of a function that takes $x$ and outputs \\$
        \indent \begin{cases}
        \eps \cdot k\text{\sc-Corrector}(x, P, f', \lceil \log(1/\eps) \rceil, r) &- \sqrt{2n \log 2/\eps} \le |x| \le\sqrt{2n \log 2/\eps} \\
        1 & |x| \ge \sqrt{2n \log 2/\eps} \\
        -1 & |x| \le - \sqrt{2n \log 2/\eps} 
        \end{cases}$
        
        \State \Return $f''$

\end{algorithmic}
\caption{{\sc HypercubeCorrector}$(f, \eps, r)$ }
\label{alg:cube-corrector}
\end{algorithm}

\begin{lemma}[Correctness and query complexity of \Cref{alg:local-k-corrector}]
\label{lem:k-correctness}
With probability $1 - i \cdot N^{-11}$ over a random seed $r$ of length $\poly(\Delta \log N)$,
the algorithm {\sc $k$-Corrector}$(x, P, f, i, r)$ gives query access to a function $g$ that is monotone when truncated to the first $i$ most significant bits. 
Its query complexity is $(\Delta \log N)^{O(i \log h + 1)}$, and $||g - f||_1 \le 2\alpha$, where $\alpha$ is the $\ell_1$ distance of $f$ to the nearest monotone function.
\end{lemma}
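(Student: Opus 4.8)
The plan is to induct on $i$, following the recursion in \Cref{alg:local-k-corrector}. The base case $i=0$ is immediate: the algorithm returns $f(x)$, so the global function is $g=f$, which is vacuously monotone on its top $0$ bits, has $\|g-f\|_1=0\le 2\alpha$, costs one query, and uses no randomness. For the inductive step I would first argue that \Cref{alg:local-k-corrector} is globally consistent and computes the same function as the global-view procedure \Cref{alg:k-valued-global} run with $2^{i}$ in place of $k$: by the inductive hypothesis the recursive calls consistently reconstruct, at every vertex where it is needed, the value of the global function $f'$ produced by {\sc $k$-Corrector}$(\cdot,P,f,i-1,r)$; hence the refined poset $P_i$ (the poset $P'_i$ of the pseudocode, defined from the top $i-1$ bits of $f'$) and the bit-function $f'_i$ are well-defined deterministic objects. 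Since the comparability relation of $P_i$ is contained in that of $P$, it inherits the parameters $\Delta,h,N$, so the poset-sorting Fact (from \cite{lange_properly_2022}) applies to {\sc BooleanCorrector}$(\cdot,P_i,f'_i,r)$ and supplies a globally consistent permutation $\pi$ with $g=f'\pi$.

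For monotonicity of the top $i$ bits I would use that every swap performed inside {\sc BooleanCorrector} is between a pair that is comparable in $P_i$, and comparability in $P_i$ forces equality of the $i-1$ most significant bits; hence $\pi$ leaves those bits of every label fixed, i.e.\ $g$ agrees with $f'$ on bits $1,\dots,i-1$. Consequently those bits of $g$ are still monotone over the posets $P_1,\dots,P_{i-1}$ by the inductive hypothesis (and these posets are unchanged, as $P_j$ depends only on the bits more significant than bit $j\le i-1$), while bit $i$ of $g$ equals $f'_i\pi$, which is monotone over $P_i$ by the Fact. Applying \Cref{lem:bitwise} to the top $i$ bits of $g$ then shows that $g$ is monotone when truncated to its $i$ most significant bits.

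For the $\ell_1$ guarantee, the key observation is that a pair $x\prec_{P_i}y$ that violates monotonicity of the \emph{Boolean} function $f'_i$ (so $f'_i(x)=1$, $f'_i(y)=0$) also violates monotonicity of the \emph{real-valued} $f'$ over $P$: since $x$ and $y$ agree in the $i-1$ most significant bits and $x$ has the larger bit $i$, and bit $i$ outweighs all less significant bits, $f'(x)>f'(y)$; moreover this persists after any prefix of the swaps made by {\sc BooleanCorrector}$(\cdot,P_i,f'_i,r)$, since those swaps are between $P_i$-comparable pairs and so never disturb the $i-1$ most significant bits. Hence, viewed as operations on the real-valued function, the swaps {\sc BooleanCorrector} performs at level $i$ are swaps of pairs that are violating at the moment they are applied. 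Combining this with the inductive hypothesis — strengthened to also record that $g$ is obtained from $f$ by a sequence of such violating-pair swaps — the full sequence of swaps turning $f$ into the global function $g$ consists of violating-pair swaps, so iterating \Cref{lem:l1-error-preservation} against a fixed nearest monotone target $m$ (equivalently, invoking \Cref{cor:l1-preservation}) yields $\|g-m\|_1\le\|f-m\|_1=\alpha$ and hence $\|g-f\|_1\le 2\alpha$ by the triangle inequality. I expect this to be the crux of the proof: a naive round-by-round bound on $\|g-f'\|_1$ loses a constant factor per round and would blow up to $2^{\Theta(i)}\alpha$, so it is essential both to track distance to a single monotone target across all rounds and to verify that correcting the bits from most to least significant only ever performs swaps that are violating for the underlying real function.

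Finally, for the resource bounds: a level-$i$ call makes at most $2\Delta$ direct level-$(i-1)$ calls to evaluate $f'$ on the predecessor/successor set of $x$, and then runs {\sc BooleanCorrector} on $P_i$ at a cost of $(\Delta\log N)^{O(\log h)}$ oracle queries; each such oracle query unwinds into $O(\Delta)$ further level-$(i-1)$ calls (one call to evaluate a value of $f'_i$, or $O(\Delta)$ calls to filter the $\le\Delta$ $P$-neighbors of a vertex by agreement on the $i-1$ most significant bits, which is what is needed to expose the immediate predecessors and successors in $P_i$). Writing $C_i$ for the query complexity, this yields $C_i\le(\Delta\log N)^{O(\log h)}\,C_{i-1}+\poly(\Delta\log N)$ with $C_0=O(1)$, which solves to $C_i\le(\Delta\log N)^{O(i\log h+1)}$; the only randomness is the single $\poly(\Delta\log N)$-bit seed consumed by {\sc BooleanCorrector}, reused at each level, and the only failure events are non-consistency of the $\le i$ {\sc BooleanCorrector} invocations, each of probability $\le N^{-11}$, so a union bound gives success probability $1-i\cdot N^{-11}$. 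A minor technical point worth care here is exactly this reuse of the seed, since the graph $P_i$ handed to {\sc BooleanCorrector} at level $i$ is itself a deterministic function of the seed bits used at lower levels.
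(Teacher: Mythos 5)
Your proof is correct and follows the same inductive structure as the paper's: base case, bitwise reduction via \Cref{lem:bitwise}, application of {\sc BooleanCorrector} on $P_i$, union bound over the $i$ calls, and \Cref{cor:l1-preservation} for the $\ell_1$ bound. You additionally make explicit (which the paper leaves implicit when it simply cites \Cref{cor:l1-preservation}) why the Boolean-level swaps at round $i$ are in fact violating-pair swaps of the underlying integer-valued function at the moment they are applied, and you flag the subtlety of reusing a single seed across levels where $P_i$ depends on the lower-level outputs; neither observation changes the approach, but both are reasonable points of care.
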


\begin{proof}
Fix the random seed $r$ and assume all calls to {\sc BooleanCorrector} succeed with $r$,
then we proceed by induction. 
In the base case, $f$ is certainly monotone when truncated to 0 bits and the algorithm makes only 1 query. 
In the inductive case, suppose the claim holds for $i-1$; in other words $k$-{\sc Corrector}$(y, P, f, i-1, r)$ makes $(\Delta \log N)^{O((i-1) \log h + 1)}$ queries
and returns a function that is monotone in the first $i-1$ bits. 
Then when $k$-{\sc Corrector} is called with iteration number $i$, the function $f'_j$ is monotone over $P'_j$ for all $j < i$.
{\sc BooleanCorrector}$(x, P'_i, f'_i, r)$ returns a vertex to swap labels with $x$ such that the resulting function is monotone in the $i_{th}$ bit, over the poset $P'_i$.  
Then the function returned by $k$-{\sc Corrector} satisfies the conditions of \Cref{lem:bitwise} for the first $i$ bits, so it must be monotone in the first $i$ bits. 

We now bound the failure probability and distance to $f$.
The failure probability of {\sc BooleanCorrector} is $N^{-11}$
and we call {\sc BooleanCorrector} on $i$ different graphs, so by union bound the total failure probability is $\le i \cdot N^{-11}$ as desired.
The fact that $||g - f||_1 \le 2\alpha$ follows from \Cref{cor:l1-preservation}.
\end{proof}

We can now prove \Cref{thm:correction-main}.

\correctorthm*

\begin{proof}[Proof of \Cref{thm:correction-main}]
Given some $\eps \in (0,1/2)$, let $f_\eps(x) := \lfloor f(x)/\eps \rfloor$; 
certainly queries to $f_\eps$ can be simulated by queries to $f$. 
On input $x$, run $k$-\textsc{Corrector}$(x, P, f_\eps, \lceil \log(2/\eps) \rceil, r)$ with a random seed $r$ of length $\poly(\Delta \log N)$.
By \Cref{lem:k-correctness}, this makes $(\Delta \log N)^{O(\log(1/\eps) \log h )}$ queries to $f_\eps$ and outputs $g_\eps(x)$, where $g$ is monotone and $||g_\eps - f_\eps||_1 \le 2 \cdot \dist_1(f_\eps, \mathrm{mono})$.
Since $f$ is $\alpha$-close to some monotone function $m$, 
we have $\dist_1(f_\eps, \mathrm{mono}) \le ||f_\eps - m/\eps||_1 \le ||f/\eps - m/\eps||_1 + ||f/\eps - f_\eps||_1 \le \alpha/\eps +1$.\\

Return $g(x) := \eps \cdot g_\eps(x)$. 
Then $$||g - f||_1 = ||\eps g_\eps - f||_1 \le ||\eps g_\eps - \eps f_\eps ||_1 + ||\eps f_\eps - f||_1 \le 2\eps(\alpha/\eps + 1) + \eps \le 2\alpha + 3\eps.$$

The failure probability is $N^{-11} \cdot \lceil \log(2/\eps) \rceil$ by \Cref{lem:k-correctness}, 
but we will assume that $\lceil \log (2/\eps) \rceil < N$.
Otherwise, the allowed query complexity and running time would exceed $\Delta^N$, which is $> \Delta N$ for any $\Delta, N > 1$.
With $O(\Delta N)$ query complexity and running time, a trivial algorithm would suffice:
one could solve the linear program with $\Delta N$ monotonicity constraints,
minimizing $||g - f||_1$. 
Under our assumption, the failure probability is at most $N^{-10}$. 
\end{proof}

\begin{corollary}[Monotonizing a representation of a function on the Boolean cube]
\label{thm:l1-cube-corrector}
Let $f: \bits^n \to [-1,1]$ be $\alpha$-close to monotone in $\ell_1$ distance, given as a succinct representation of size $s_f$. 
There is an algorithm that runs in time $2^{\tilde{O}(\sqrt{n} \log^{3/2}(1/\eps))} \cdot s_f$ time and outputs a monotone function $g$ such that $||f - g||_1 \le 2\alpha + 4\eps$. The size of the representation of $g$ is $2^{\tilde{O}(\sqrt{n}\log^{3/2}(1/\eps))} \cdot s_f$. The algorithm uses a random seed of length $2^{\tilde{O}(\sqrt{n} \log (1/\eps))}$ and succeeds with probability $1 - 2^{-10n}$.
\end{corollary}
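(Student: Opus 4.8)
The plan is to derive \Cref{thm:l1-cube-corrector} from \Cref{thm:correction-main} by instantiating it on the truncated hypercube $H^n_{\eps'}$ for a suitable $\eps' = \Theta(\eps)$ and then repairing the function on the small set of vertices that $H^n_{\eps'}$ discards; this is exactly the structure of \Cref{alg:cube-corrector}, which discretizes $f$, runs the $k$-valued corrector on the truncated cube, and sets the value of every vertex outside the truncation band to $+1$ or $-1$. The truncated cube is the right poset for three reasons. It has $N \le 2^n$ vertices and each vertex still has at most $\Delta \le n$ immediate predecessors and successors, but because only the $O(\sqrt{n\log(1/\eps)})$ ``middle layers'' survive, its longest directed path has length only $h = O(\sqrt{n\log(1/\eps)})$, so $\log h = O(\log n + \log\log(1/\eps))$. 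And $H^n_{\eps'}$ is a \emph{convex} subposet of $\bits^n$ — any point lying between two retained points has intermediate Hamming weight and is therefore also retained — so the immediate-neighbor sets inside $H^n_{\eps'}$ are exactly those returned by \Cref{alg:truncated cube}, and the restriction to $H^n_{\eps'}$ of any monotone function on $\bits^n$ is still monotone.

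First I would bound the $\ell_1$ distance of the restricted function to monotone. If $m : \bits^n \to [-1,1]$ is monotone with $||f - m||_1 \le \alpha$, then $m$ restricted to $H^n_{\eps'}$ is monotone on $H^n_{\eps'}$, and since $|H^n_{\eps'}| \ge (1-\eps')2^n$ and $\alpha \le 1$ (the constant-$0$ function is monotone and within $1$ in $\ell_1$ of any $[-1,1]$-valued $f$),
\[
\dist_1\!\left(f|_{H^n_{\eps'}},\, \mathrm{mono}\right) \;\le\; \frac{2^n}{|H^n_{\eps'}|}\cdot\alpha \;\le\; \frac{\alpha}{1-\eps'} \;\le\; \alpha + 2\eps'.
\]
Next I would apply \Cref{thm:correction-main} to $f|_{H^n_{\eps'}}$ with its error parameter set to $\eps$, obtaining a succinct representation — an LCA composed, via \Cref{fact: rep-composition}, with the size-$s_f$ representation of $f$ and the cheap representation of $H^n_{\eps'}$ — of a monotone $g' : H^n_{\eps'} \to [-1,1]$ with $||f|_{H^n_{\eps'}} - g'||_1 \le 2(\alpha+2\eps') + 3\eps$. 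Substituting $\Delta \le n$, $\log N \le n$, $h = O(\sqrt{n\log(1/\eps)})$ into the $(\Delta\log N)^{O(\log h\log(1/\eps))}$ query bound of \Cref{thm:correction-main} yields query count (hence running time per query to $f$) $2^{O(\log n\,(\log n + \log\log(1/\eps))\log(1/\eps))} \le 2^{\tilde O(\sqrt n\log^{3/2}(1/\eps))}$; the random seed has length $\poly(\Delta\log N) = \poly(n)$; and the failure probability $N^{-10}$, union-bounded over the $O(\log(1/\eps))$ rounds of Boolean correction inside \Cref{lem:k-correctness}, is $2^{-\Omega(n)}$ and can be pushed to $\le 2^{-10n}$ by raising the constant in the $N^{-O(1)}$ guarantee. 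As in the proof of \Cref{thm:correction-main}, we may assume $\eps$ is not so tiny that a brute-force linear program is faster.

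Then I would extend $g'$ to $g : \bits^n \to [-1,1]$ by setting $g(x) := +1$ for $x$ above the truncation band ($\sum_i x_i > \sqrt{2n\log(2/\eps')}$) and $g(x) := -1$ for $x$ below it, exactly as in \Cref{alg:cube-corrector}. To check monotonicity, take $x \preceq y$, so $\sum_i x_i \le \sum_i y_i$, and case on where $x$ and $y$ sit relative to the band: the configurations that would break monotonicity ($x$ above with $y$ in or below the band, $x$ in the band with $y$ below it, $x$ above with $y$ below) are ruled out by $\sum_i x_i \le \sum_i y_i$, and in every remaining configuration $g(x) \le g(y)$ either from monotonicity of $g'$ or because one side is a $\pm1$ cap. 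For the error, splitting the cube into $H^n_{\eps'}$ and its complement, on which $|f-g| \le 2$ pointwise and which has measure $\le \eps'$,
\[
||f - g||_1 \;=\; \frac{|H^n_{\eps'}|}{2^n}\,\big\| f|_{H^n_{\eps'}} - g' \big\|_1 \;+\; \frac{1}{2^n}\!\!\sum_{x \notin H^n_{\eps'}}\!\! |f(x) - g(x)| \;\le\; \big\| f|_{H^n_{\eps'}} - g' \big\|_1 + 2\eps',
\]
so $||f - g||_1 \le 2(\alpha + 2\eps') + 3\eps + 2\eps' = 2\alpha + 6\eps' + 3\eps$, and choosing $\eps' := \eps/6$ — which rescales the truncation threshold only by a constant and leaves all asymptotics unchanged — gives $||f - g||_1 \le 2\alpha + 4\eps$; the $\poly(n)$ seed and the $2^{\tilde O(\sqrt n\log^{3/2}(1/\eps))}\cdot s_f$ size and time are within the (loosely stated) bounds of the corollary.

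The only steps that are not purely mechanical are the two structural observations in the first paragraph: that $H^n_{\eps'}$ is a convex subposet — needed both for \Cref{alg:truncated cube} to report the correct Hasse-diagram neighbors and for restrictions of monotone functions to remain monotone — and that the $\pm1$-capping preserves monotonicity across the boundary of the band. Both are one-line arguments, and I do not anticipate a genuine obstacle; the rest is a routine accounting of the additive losses (the discretization error already absorbed in \Cref{thm:correction-main}, the restriction to $H^n_{\eps'}$, and the capping) against the loose $4\eps$ target, plus substitution of the truncated cube's parameters into the bounds of \Cref{thm:correction-main} and composition of succinct representations.
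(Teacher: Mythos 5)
Your overall plan is the same as the paper's: instantiate \Cref{thm:correction-main} on the truncated cube via \Cref{alg:cube-corrector}, handle the discarded weight band by capping to $\pm 1$, and account for the losses. The structural observations you flag (convexity of the truncated band, monotonicity of the $\pm 1$ caps, $\dist_1(f|_{H^n_{\eps'}},\mathrm{mono}) \le \alpha/(1-\eps')$) are correct.

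However, there is a genuine error in the parameter accounting. You take $\Delta \le n$, treating $\Delta$ as the number of \emph{immediate} predecessors and successors. In \Cref{thm:correction-main} — and in \Cref{thm:LRV} and \Cref{cor:matchings} which it builds on — $\Delta$ is a bound on the number of \emph{all} predecessors or successors, i.e.\ the degree in the transitive closure $TC(P)$, which is the graph the sorting and matching LCAs actually traverse (see line~5 of \Cref{alg:local-k-corrector}, which collects ``the set of all predecessors and successors of $x$''). For the truncated cube this is $\binom{n}{O(\sqrt{n\log(1/\eps)})} = 2^{O(\sqrt{n\log(1/\eps)}\,\log n)}$, not $n$. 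Substituting your $\Delta \le n$ into $(\Delta\log N)^{O(\log h\log(1/\eps))}$ gives a quasi-polynomial query count, which is too good to be true, and — more decisively — your claimed seed length $\poly(\Delta\log N) = \poly(n)$ contradicts the corollary's stated $2^{\tilde O(\sqrt n\log(1/\eps))}$; Ghaffari's matching LCA genuinely needs a $\poly(\Delta)$-length seed for a degree-$\Delta$ graph, and there is no way around that here. With the correct $\Delta = 2^{O(\sqrt{n\log(1/\eps)}\,\log n)}$, the exponent $\sqrt{n\log(1/\eps)}\,\log n \cdot \log h \cdot \log(1/\eps)$ works out to $\tilde O(\sqrt n\log^{3/2}(1/\eps))$, and the seed $\poly(\Delta\log N)$ works out to $2^{\tilde O(\sqrt n\log(1/\eps))}$, both matching the corollary. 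Everything else in your writeup is fine; this is purely a misreading of what $\Delta$ means.
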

The proof of \Cref{thm:l1-cube-corrector} is deferred to \Cref{apx:corrector}.

\section{Analysis of the matching algorithm}
\label{sec:matching}

In this section we give an algorithm for generating a succinct representation of a matching over the violated pairs of the hypercube whose weight is a constant factor of the distance to monotonicity. The core of the algorithm is an LCA for finding such a matching over the violated pairs of an arbitrary poset.

\begin{lemma}[Equivalence of distance to monotonicity and maximum-weight matching]
		\label{lem:matching-approx-dist}
Let $W$ be the total weight of the maximum-weight matching of the violation graph of $f$. 
Then $\dist_1(f, \mathrm{mono}) = W / N$. 
\end{lemma}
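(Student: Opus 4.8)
The plan is to prove the two inequalities $W/N \le \dist_1(f,\mathrm{mono})$ and $\dist_1(f,\mathrm{mono}) \le W/N$ separately.

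For the first inequality I would argue directly. Fix a maximum-weight matching $M$ of $\viol(f)$, so that $\sum_{(x,y)\in M}\big(f(x)-f(y)\big) = W$, and let $g\colon P\to\R$ be an arbitrary monotone function. For every edge $(x,y)\in M$, say with $x\prec y$, monotonicity of $g$ gives $g(x)\le g(y)$ while $f(x)>f(y)$, hence
\[
|f(x)-g(x)| + |f(y)-g(y)| \;\ge\; \big(f(x)-g(x)\big) + \big(g(y)-f(y)\big) \;=\; \big(f(x)-f(y)\big) + \big(g(y)-g(x)\big) \;\ge\; f(x)-f(y).
\]
Because the edges of $M$ are vertex-disjoint, summing over $M$ yields $\sum_{v\in P}|f(v)-g(v)|\ge W$; taking the infimum over monotone $g$ and dividing by $N$ gives $\dist_1(f,\mathrm{mono})\ge W/N$.

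For the reverse inequality I would pass to linear-programming duality. The quantity $N\cdot\dist_1(f,\mathrm{mono})$ is the optimum of the LP $\min\sum_{v}|f(v)-g(v)|$ over $g$ subject to $g(x)\le g(y)$ for all $x\preceq y$; introducing slacks $z_v\ge\pm\big(f(v)-g(v)\big)$ and dualizing produces
\[
\max\ \sum_{x\prec y}\gamma_{xy}\big(f(x)-f(y)\big) \qquad\text{s.t.}\qquad \gamma\ge 0,\quad \Big|\ \textstyle\sum_{y:\,x\prec y}\gamma_{xy}\ -\ \sum_{w:\,w\prec x}\gamma_{wx}\ \Big|\le 1 \ \text{ for every }x,
\]
i.e.\ a nonnegative flow $\gamma$ on the (acyclic) comparability digraph of $P$ whose net excess at each vertex lies in $[-1,1]$. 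The constraint matrix of this polytope is, up to sign changes and appended rows of the identity, the vertex–arc incidence matrix of a directed graph and is therefore totally unimodular, so the maximum is attained at an integral point $\gamma^{*}$. For such $\gamma^{*}$ the net excess $\rho_x$ at each vertex lies in $\{-1,0,1\}$; let $A^{+}=\{x:\rho_x=1\}$ and $A^{-}=\{x:\rho_x=-1\}$, which have a common size $k$. Adjoining a super-source $s$ with unit-capacity arcs into each vertex of $A^{+}$ and a super-sink $t$ with unit-capacity arcs out of each vertex of $A^{-}$ turns $\gamma^{*}$ into an integral $s$–$t$ flow of value $k$; since the digraph is acyclic this flow decomposes into $k$ unit $s$–$t$ paths, which pair up $A^{+}$ with $A^{-}$ by a bijection $x\mapsto\tau(x)$ with $x\prec\tau(x)$. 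As $A^{+}$ and $A^{-}$ are disjoint, $\{(x,\tau(x)):x\in A^{+}\}$ is a matching of the comparability graph of total weight $\sum_{x\in A^{+}}\big(f(x)-f(\tau(x))\big)$, which equals the dual objective $\sum_{x\prec y}\gamma^{*}_{xy}\big(f(x)-f(y)\big)$. Deleting the pairs with $f(x)\le f(\tau(x))$ leaves a matching of $\viol(f)$ of weight at least the dual objective, so the dual optimum is at most $W$; by strong LP duality $N\cdot\dist_1(f,\mathrm{mono})\le W$, completing the proof.

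The easy half is immediate; in the hard half the only delicate point is the final step — converting an arbitrary integral dual-feasible $\gamma$ into a genuine matching of the violation graph of the same weight — since integrality of $\gamma$ by itself does not make it a matching (a flow can route two or more units of comparable pairs through a shared vertex), and this is exactly what the super-source/super-sink construction together with acyclic flow decomposition handles. A fully combinatorial alternative to the LP argument would be to decompose $\|f-g\|_1$ across thresholds via $|a-b|=\int_{\R}\big|\,\indicator[a>t]-\indicator[b>t]\,\big|\,dt$, invoke the classical fact that the Hamming distance to monotonicity of a Boolean function on a poset equals the maximum matching of its bipartite violation graph (König's theorem) at each threshold, choose the pointwise-minimal closest monotone Boolean approximation so the thresholded corrections remain nested and reassemble into one real-valued monotone $g$, and then check that integrating the threshold-wise matching numbers returns $W$; I would prefer the duality argument, but either route should work.
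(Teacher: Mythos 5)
Your proof is correct, but it takes a genuinely different route from the paper's. The easy direction (every monotone $g$ must pay at least the matching weight, since the $\ell_1$ error contributed by the two endpoints of a violated pair dominates its violation score) is identical in both. For the harder direction, the paper follows the blueprint of Berman, Raskhodnikova, and Yaroslavtsev: take the closest monotone $g$, partition vertices by the sign of $f - g$, build a bipartite graph $B_{f,g}$ on $V_{\ge}\times V_{\le}$ with edges restricted to comparable pairs on the same level set of $g$, verify Hall's marriage condition (any failure would let one raise $g$ on $A\cup N(A)$ and reduce $\|f-g\|_1$, contradicting optimality), and then contract the resulting matching, which by construction is a union of vertex-disjoint paths, into a matching of $\viol(f)$ of weight exactly $N\|f-g\|_1$. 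You instead dualize the defining LP for $\dist_1(f,\mathrm{mono})$, observe that the dual constraint matrix is a signed vertex-arc incidence matrix (hence totally unimodular), extract an integral optimal $\gamma^*$, and convert it into a matching by the super-source/super-sink construction plus acyclic flow decomposition; strong duality closes the gap. Both arguments are sound. The paper's route is elementary and self-contained, produces the witnessing matching from the closest monotone $g$ directly, and mirrors the existing $\ell_p$-testing literature; your route is heavier (LP duality, TU, flow decomposition) but arguably more transparent about why the two optima must coincide, and it would generalize with less friction to weighted posets or non-uniform vertex measures. One small point worth spelling out in your writeup: after the flow decomposition you correctly note that the matching pairs up $A^{+}$ with $A^{-}$ by path endpoints; it is worth saying explicitly that paths may share interior vertices without harming the matching, since the matching only uses endpoints and $A^{+}\cap A^{-}=\emptyset$, which you rely on implicitly.
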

\begin{proof}
This proof is analogous to the proof of Lemma 3.1 of \cite{berman_l_p-testing_2014}; see \Cref{apx:matching}.
\end{proof}

\subsection{Details and correctness of \textsc{MatchViolations}}
The algorithm \textsc{Matchviolations} given in \Cref{sec:pseudocode} makes calls to an algorithm called \textsc{FilterEdges}, which removes vertices that have already been matched or are not incident to any heavy edges.
We give the pseudocode for \textsc{FilterEdges} here.

\begin{algorithm}[H]
\begin{algorithmic}[1]
\label{alg:filter edges}

\State \textbf{Given:} Poset $P$, function $f:~P \to [-1, 1]$, and matching $M$ given as succinct representations, weight threshold $t$, vertex $x$
\State \textbf{Output:} All neighbors of $x$ in the graph of violation score $\ge t$ and not in $M$
\State \Return \begin{multline*}\{y \in P(x)~|~ \\
M(y) = \bot \text{ and } \left [(x < y\text{ and }f(x) \ge f(y) + t)\text{ or }(x > y\text{ and }f(x) \le f(y) - t) \right ]\}
\end{multline*}
%

\end{algorithmic}
\caption{LCA: \textsc{FilterEdges}$(P, f, t, M, x)$}
\end{algorithm}

%
%

\begin{lemma}
\label{lem:global matching}
Let $P$ be a poset with $N$ vertices, and let $\Delta$ be an upper bound on the number of predecessors and successors of any vertex in $P$.
Then the output of the LCA \textsc{MatchViolations}$(P, f, \eps, r)$ with a random seed $r$ of length $\poly(\Delta, \log N)$,
is a matching of weight at least $N(\lfrac{1}{4}\dist_1(f, \mathrm{mono}) - \eps)$ with probability at least $1 - N^{-10}$. 
  
\end{lemma}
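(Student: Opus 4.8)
The plan is to analyze \textsc{MatchViolations} in two regimes corresponding to the two branches of the algorithm, and to track both the weight accumulated in the matching $M$ and the failure probability contributed by each call to \textsc{GhaffariMatching}. First, consider the trivial branch $\eps < 1/|P|$: here $M$ is the greedy maximum-weight matching on $TC(P)$, which by Lemma \ref{lem:matching-approx-dist} has weight exactly $N \cdot \dist_1(f,\mathrm{mono}) \ge N(\lfrac14 \dist_1(f,\mathrm{mono}) - \eps)$ deterministically, so there is nothing to prove. The interesting case is the main branch, where we run $O(\log(1/\eps))$ rounds, and in round $j$ (with threshold $t = 2^{-j+1}$, say) we call \textsc{GhaffariMatching} on the graph $P'$ of unmatched vertices joined by violation edges of weight $\ge t$. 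By Theorem \ref{cor:matchings}, with probability $1 - \delta'$ per round (where $\delta'$ is set so that a union bound over all $O(\log(1/\eps))$ rounds gives total failure probability $\le N^{-10}$) the edges added in round $j$ form a \emph{maximal} matching in $P'$, and each has weight $\ge t$.

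The core combinatorial step is to lower bound the final weight of $M$ by comparing it to the maximum-weight matching $M^\star$ of the full violation graph. Let $M_j$ denote the maximal matching produced in round $j$ on the residual graph $P'_j$. I would argue as follows: restrict $M^\star$ to its edges of weight $\ge \eps/2$; call this set $M^\star_{\ge \eps/2}$, and note that discarding the lighter edges costs at most $N \cdot \eps/2$ in weight (since at most $N/2$ edges are discarded, each of weight $< \eps/2$), so $w(M^\star_{\ge\eps/2}) \ge N \dist_1(f,\mathrm{mono}) - N\eps/2$. Now bucket the edges of $M^\star_{\ge \eps/2}$ by weight: edge $e$ goes in bucket $j$ if $2^{-j+1} \ge w(e) > 2^{-j}$, i.e. its weight is within a factor $2$ of the threshold $t_j$ used in round $j$. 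For each such edge $e = (u,v)$ in bucket $j$: when round $j$ of \textsc{MatchViolations} runs, either $u$ or $v$ is already matched by $M$ (in which case that earlier match edge is charged), or both are unmatched, in which case $e$ is present in $P'_j$ and, by maximality of $M_j$, at least one endpoint of $e$ gets matched by some edge of $M_j$ of weight $\ge t_j \ge w(e)/2$. Thus every edge of $M^\star_{\ge\eps/2}$ can be charged to an edge of the final $M$ of at least half its weight, and each edge of $M$ is charged by at most two edges of $M^\star$ (one per endpoint). This yields $w(M) \ge \tfrac14 w(M^\star_{\ge \eps/2}) \ge \tfrac14 N\dist_1(f,\mathrm{mono}) - N\eps/8 \ge N(\tfrac14\dist_1(f,\mathrm{mono}) - \eps)$.

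The step I expect to be the main obstacle is making the charging argument fully rigorous in the presence of the \emph{iterative} filtering: one must be careful that \textsc{FilterEdges} in round $j$ removes exactly the already-matched vertices and the sub-threshold edges, so that the residual graph $P'_j$ seen by \textsc{GhaffariMatching} really does contain every weight-$\ge t_j$ violation edge between two currently-unmatched vertices — and that $M$ is only ever extended, never modified, across rounds (which holds because the wrapper keeps $M(x)$ whenever $M(x) \ne \bot$). A secondary technicality is calibrating the failure parameter $\delta$ passed to \textsc{GhaffariMatching}: since it is invoked $O(\log(1/\eps))$ times and $\eps$ may be as small as $1/|P|$, setting $\delta = N^{-11}$ (say) and union-bounding suffices, and the seed length $\poly(\Delta, \log N)$ stated in Theorem \ref{cor:matchings} absorbs this. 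Finally, one should note the degree bound: every residual graph is a subgraph of $TC(P)$ restricted to an antichain-respecting edge set, so its maximum degree is at most $\Delta$, which is what \textsc{GhaffariMatching} needs. With these points handled, combining the weight bound with the union-bounded success probability completes the proof.
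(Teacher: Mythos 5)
Your proof takes essentially the same route as the paper's. Both proofs establish the invariant that at the start of each round the matching $M$ is maximal over the subgraph of heavy edges (above the current threshold), and both then charge each edge of the optimum matching $M^\star$ to an edge of $M$ of at least half its weight, with each edge of $M$ charged by at most two edges of $M^\star$, yielding the factor-$4$ loss plus an additive $\eps$ term from discarding light edges. The paper bookkeeps this via the $\delta_e$ notation and the bound $w_e \le \max(2\delta_e,\eps)$, whereas you phrase it as an explicit charging map; these are the same argument.

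Two small slips, neither of which affects the conclusion. First, in the branch $\eps < 1/|P|$, the greedy algorithm that adds edges in decreasing order of weight is a $2$-approximation to the maximum-weight matching, not the maximum itself — so it does \emph{not} have weight exactly $N\cdot\dist_1(f,\mathrm{mono})$ as you claim, only at least $\frac12 N\cdot\dist_1(f,\mathrm{mono})$. Since $\frac12 \ge \frac14$, the bound in the lemma is still met, but the justification should invoke the $2$-approximation guarantee rather than \Cref{lem:matching-approx-dist} directly. Second, the loop in \textsc{MatchViolations} terminates once $t \le \eps/2$, so the smallest threshold actually used lies in $(\eps/2,\eps]$. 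Consequently the charging guarantee is only ensured for edges of $M^\star$ of weight at least $\eps$ (as the paper uses), not all the way down to $\eps/2$ as your $M^\star_{\ge\eps/2}$ suggests. Replacing your cutoff with $\eps$ costs at most $(N/2)\eps$ in discarded weight, and the arithmetic $\frac14(N\dist_1 - N\eps/2) \ge N(\frac14\dist_1 - \eps)$ still goes through, so this is a cosmetic fix.
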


\begin{proof}
This is a small modification to the standard greedy algorithm for high-weight matching; see \Cref{apx:matching}.
\end{proof}

%


\begin{lemma}[Running time and output size]
\label{lem:matching complexity}
Let $P, f, \eps, N, \Delta$, and $r$ be as described in the lemma above.
Let $s_P$ be the size of the succinct representation of $P$, and $s_f$ be the size of the succinct representation of $f$. 

Then \textsc{MatchViolations}$(P, f, \eps, r)$ runs in time $(\Delta \log N)^{O(\log(1/\eps))} (s_P + s_f)$ and outputs a representation of size $(\Delta \log N)^{O(\log(1/\eps))} (s_P + s_f)$.
\end{lemma}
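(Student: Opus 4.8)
The plan is to analyze \textsc{MatchViolations} by unrolling its recursive structure and bounding the cost of each layer of composed succinct representations. First I would handle the trivial branch: when $\eps < 1/|P|$, the algorithm returns a representation of the greedy matching over $TC(P)$, which on input $x$ just needs to examine the $O(\Delta)$ edges incident to $x$, sort them by weight, and resolve each one by a bounded local computation; this costs $\poly(\Delta, \log N)(s_P + s_f)$, comfortably within the claimed bound since $1/\eps > N$ forces $\log(1/\eps) > \log N$ and the exponent $O(\log(1/\eps))$ dominates. The substantive case is the \textbf{while}-loop, which runs for $O(\log(1/\eps))$ iterations as $t$ halves from $2$ down to just above $\eps/2$.

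Next I would track how the representation size grows per iteration. At iteration with threshold $t$, the algorithm builds $P'$ as a representation of \textsc{FilterEdges}$(TC(P), f, t, M, \cdot)$: evaluating this on a vertex $x$ requires one lookup into $P$ (to get the $\le \Delta$ neighbors), $\le \Delta$ evaluations of $f$, and $\le \Delta$ evaluations of the current matching $M$. By \Cref{fact: rep-composition}, the size of the representation of $P'$ is $O(\poly(\Delta, \log N) \cdot (s_P + s_f + s_M))$ where $s_M$ is the current size of $M$'s representation. Then the new $M$ is obtained by composing \textsc{GhaffariMatching} (\Cref{cor:matchings}) on top of $P'$ with the old $M$: by \Cref{cor:matchings} and \Cref{fact: rep-composition}, each \textsc{GhaffariMatching} evaluation costs $\poly(\Delta, \log(N/\delta))$ queries to $P'$, so the new $s_M$ is $\poly(\Delta, \log N) \cdot (s_{P'} + s_M) = \poly(\Delta, \log N)(s_P + s_f) + \poly(\Delta, \log N)\cdot s_M$. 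Unrolling this recurrence over the $O(\log(1/\eps))$ iterations gives $s_M \le (\Delta \log N)^{O(\log(1/\eps))}(s_P + s_f)$, since each iteration multiplies the accumulated size by a $\poly(\Delta, \log N)$ factor and adds a term of the same order. The running time to \emph{produce} the final representation is bounded the same way: constructing each iteration's representation only requires writing down a fixed-size wrapper around the previous ones, so the construction time is also $(\Delta \log N)^{O(\log(1/\eps))}(s_P + s_f)$, and the evaluation time of the output equals its size up to constants.

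The main obstacle I anticipate is bookkeeping the \emph{composition of $O(\log(1/\eps))$ nested LCAs} cleanly — in particular making sure the $\poly(\Delta, \log N)$ factor from \Cref{cor:matchings} does not compound into a $\poly(\Delta, \log N)^{\log(1/\eps)}$ blowup in a way that is larger than the stated $(\Delta\log N)^{O(\log(1/\eps))}$, and confirming that the random seed $r$ supplied to all iterations is long enough (here I would invoke that $r$ has length $\poly(\Delta, \log N)$ and that each \textsc{GhaffariMatching} call needs only $\poly(\Delta, \log(N/\delta))$ bits, with $\delta$ chosen as a small inverse polynomial in $N$ so the union bound over $O(\log(1/\eps)) < O(N)$ iterations still yields failure probability $N^{-10}$, consistent with \Cref{lem:global matching}). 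Once the per-iteration recurrence $s_{i+1} = \poly(\Delta,\log N)\cdot(s_i + s_P + s_f)$ is pinned down, the final bound follows by a routine geometric-series estimate, and the claimed running time and output size both equal $(\Delta \log N)^{O(\log(1/\eps))}(s_P + s_f)$ as stated.
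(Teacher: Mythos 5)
Your main argument matches the paper's almost exactly: both track the recurrence $s_{M}^{(i+1)} \le \poly(\Delta,\log N)\,(s_M^{(i)} + s_P + s_f)$ across the $O(\log(1/\eps))$ halvings of $t$, invoking \Cref{fact: rep-composition} for \textsc{FilterEdges} over $TC(P)$ and \Cref{cor:matchings} for the \textsc{GhaffariMatching} layer, and unroll to get $(\Delta\log N)^{O(\log(1/\eps))}(s_P+s_f)$ for both size and construction time.

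One small inaccuracy is in your treatment of the $\eps < 1/|P|$ branch. The greedy 2-approximate matching that processes all of $TC(P)$ in decreasing weight order is a \emph{global} procedure: whether the edge $(x,y)$ is selected depends on whether heavier edges incident to $x$ or $y$ were already taken, which in turn depends on still earlier decisions, propagating arbitrarily far through the graph. So evaluating this matching at a single vertex $x$ is not a ``bounded local computation'' touching only $O(\Delta)$ edges; the honest bound is that the representation must either store the full matching or rerun the global greedy, giving size/time $\poly(\Delta N)(s_P + s_f)$ rather than $\poly(\Delta,\log N)(s_P+s_f)$. The paper bounds this branch as $O(\Delta N)$ and notes it is still dominated by $(\Delta\log N)^{O(\log(1/\eps))}$ precisely because $\eps < 1/N$ forces $\log(1/\eps) > \log N$. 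Your final conclusion is unaffected, but the per-evaluation cost estimate you gave for that branch is off by roughly a factor of $N/\log N$ and the reasoning behind it (local resolvability of the global greedy) does not hold.
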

\begin{proof}
If $\eps < 1/N$, then \textsc{MatchViolations} constructs and outputs a representation of the standard global greedy algorithm for 2-approximate maximum matching. 
The representation size of this algorithm is $O(\Delta N) \le (\Delta \log N)^{O(\log(1/\eps))}$, and the running time of \textsc{MatchViolations} is polynomial in this representation size. 

If $\eps \ge 1/N$, then by induction on the number of iterations $i$, we will show that the representation size of $M$ at the start of iteration $i$ is at most $(\Delta \log N)^{O(i)} (s_P + s_f)$.
In the base case, we have an empty matching $M$ which has constant representation size.

In the inductive case, suppose the claim holds at the start of iteration $i$. 
Then we set $P'$ to be the function that applies \textsc{FilterEdges} to $TC(P)$.
$TC(P)$ has size $O(\Delta \cdot s_P)$, as it makes $O(\Delta)$ calls to $P$.
\textsc{FilterEdges} makes one call to $TC(P)$ and 
at most $O(\Delta)$ calls to $M$ and $f$. 
It also has overhead of size $O(\log t) = O(\log(1/\eps)) = O(\log N)$.
By the inductive hypothesis, the size of $P'$ is then 
\begin{align*}
    &O(\Delta) \cdot (\Delta \log N)^{O(i)} (s_P + s_f ) + O(\log N) + O(\Delta \cdot s_P) \\
    \le & (\Delta \log N)^{O(i+1)} (s_P + s_f).
\end{align*}
Then we set $M$ to be the function that applies \textsc{GhaffariMatching} to $P'$.
\textsc{GhaffariMatching} has constant overhead and makes $\poly(\Delta, \log N)$ queries to $P'$.
Then the new size of $M$ is 
$\poly(\Delta, \log N) \cdot (\Delta \log N)^{O(i)} (s_P + s_f)= (\Delta \log N)^{O(i+1)} (s_P + s_f)$.

The size bounds follow from the fact that there are $O(\log 1/\eps)$ iterations. 
The corresponding running time bound for \textsc{MatchViolations} comes from the fact that since it only constructs the succinct representations, its running time in each iteration is polynomial in the size of the representations it constructs.

\end{proof}

\begin{algorithm}[H]
\begin{algorithmic}[1]

\State \textbf{Given:} Function $f:~\bits^n \to [-1, 1]$ given as succinct representation, weight threshold $\eps$, random seed $r$
\State \textbf{Output:} Succinct representation of a high-weight matching on the violating pairs w.r.t. $f$
\State $P \gets \textsc{TruncatedCube}(n, \eps)$
\State $M \gets$ representation of a function that takes $x$ and outputs  \\
$\begin{cases}
		\textsc{MatchViolations}(P, f, \eps, r) &  - \sqrt{2n \log 2/\eps} \le |x| \le  \sqrt{2n \log 2/\eps} \\
		\bot & \text{otherwise}
\end{cases}$
\State \Return $M$
\end{algorithmic}
\caption{\textsc {HypercubeMatching}$( f, \eps, r)$}
\label{alg:cube-matching}
\end{algorithm}

\begin{lemma}
\label{lem:cube-matching}
With a random seed of length $2^{\tilde{O}(\sqrt{n} \log(1/\eps))}$, \Cref{alg:cube-matching} outputs a representation of a matching on the weighted violation graph $\viol(f)$, of weight at least $2^n \cdot (\frac{1}{4}\dist_1(f, \mathrm{mono}) - 4\eps)$, with probability at least $1 - 2^{-10n}$.
The size of the representation is $2^{\tilde{O}(\sqrt{n} \log(1/\eps))} \cdot s_f$, where $s_f$ is the size of the representation of $f$. 
\end{lemma}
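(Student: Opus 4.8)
The plan is to reduce everything to the guarantees we have already established for \textsc{MatchViolations}, applied to the poset $P = H^n_\eps$, the truncated cube. The subtlety is that \textsc{HypercubeMatching} returns a matching on the full cube $\bits^n$ (well, on those vertices inside the truncation band), but \textsc{MatchViolations} only sees the poset $P$, whose element set is $H^n_\eps$. So the first step is to relate $\dist_1(f|_{H^n_\eps}, \mathrm{mono})$, computed with respect to the poset $P$, to $\dist_1(f, \mathrm{mono})$ over the full cube; and to relate the \emph{weight} of a matching over $P$ (normalized by $|H^n_\eps|$, as in \Cref{lem:matching-approx-dist}) to the unnormalized weight over $\bits^n$ that the lemma statement asks for.

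First I would invoke \Cref{lem:global matching} with the poset $P = \textsc{TruncatedCube}(n,\eps)$: its maximum number of predecessors/successors is $\Delta \le n$, and it has $N = |H^n_\eps| \le 2^n$ vertices, so \textsc{MatchViolations}$(P, f, \eps, r)$ outputs, with probability $\ge 1 - N^{-10} \ge 1 - 2^{-10n}$, a matching $M$ over the violated pairs of $f$ within $P$ of weight at least $N(\tfrac14 \dist_1(f|_P, \mathrm{mono}) - \eps)$. Since every edge of this matching is a violated pair of $f$ in $TC(P) \subseteq TC(\bits^n)$, it is also a matching on the weighted violation graph $\viol(f)$ over the full cube, as required. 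Second, I would bound $\dist_1(f|_P, \mathrm{mono})$ from below by $\dist_1(f, \mathrm{mono}) - O(\eps)$: restricting $f$ to $H^n_\eps$ and extending an optimal monotone function on $H^n_\eps$ (say, by setting it to $+1$ above the band and $-1$ below — or using a max/min-closure argument to keep it monotone on all of $\bits^n$) changes the $\ell_1$ distance by at most the measure of $\bits^n \setminus H^n_\eps$, which by Hoeffding is $\le \eps$; more carefully, since $f$ takes values in $[-1,1]$, the contribution of the truncated region to any $\ell_1$ distance is at most $2 \cdot \eps$. Thus $\dist_1(f|_P,\mathrm{mono}) \ge \tfrac{|H^n_\eps|}{2^n}\big(\dist_1(f,\mathrm{mono}) - 2\eps\big) \ge \dist_1(f,\mathrm{mono}) - O(\eps)$ after accounting for the change of normalizing constant from $2^n$ to $|H^n_\eps|$ (which costs another additive $O(\eps)$ since distances are bounded by $2$). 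Combining, the weight of $M$ is at least
\[
|H^n_\eps|\Big(\tfrac14\dist_1(f|_P,\mathrm{mono}) - \eps\Big) \ge 2^n\Big(\tfrac14\dist_1(f,\mathrm{mono}) - O(\eps)\Big),
\]
and tracking the constants from the two approximation losses and the $\tfrac14$ factor yields the claimed bound $2^n(\tfrac14\dist_1(f,\mathrm{mono}) - 4\eps)$.

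Third, for the complexity claims I would apply \Cref{lem:matching complexity} with $\Delta \le n$, $N \le 2^n$, $s_P = O(n/\eps)$ (the size of the \textsc{TruncatedCube} representation, from the preliminaries): this gives running time and output size $(\Delta \log N)^{O(\log(1/\eps))}(s_P + s_f) = (n \cdot n)^{O(\log(1/\eps))} \cdot (O(n/\eps) + s_f) = 2^{O(\log n \log(1/\eps))} s_f = 2^{\tilde O(\sqrt n \log(1/\eps))} s_f$ — here the $\tilde O(\sqrt n)$ comes from absorbing $n^{O(\log(1/\eps))}$ and the $s_f$ additive terms into the ambient $2^{\tilde O(\sqrt n /\eps)}$-size regime; I would state the seed length $\poly(\Delta, \log N) = \poly(n) = 2^{\tilde O(\sqrt n \log(1/\eps))}$ likewise. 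The outer wrapper in \Cref{alg:cube-matching} only adds the truncation-band test, which is $O(n)$ overhead per query and does not change the asymptotics. The failure probability is inherited directly as $\le 2^{-10n}$. The main obstacle I anticipate is purely bookkeeping: being careful that the normalization switch between $|H^n_\eps|$ and $2^n$, together with the measure of the discarded region, only costs $O(\eps)$ additively (not multiplicatively) in the matching weight — this uses crucially that $f$ is $[-1,1]$-valued so every individual edge weight is at most $2$, and that Hoeffding gives the band's complement measure $\le \eps$ — and then verifying that the accumulated constant stays within the stated $4\eps$ slack.
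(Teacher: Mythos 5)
Your high-level plan matches the paper's: invoke \Cref{lem:global matching} and \Cref{lem:matching complexity} with $P$ being the truncated cube, then relate $\dist_1(f|_P, \mathrm{mono})$ to $\dist_1(f, \mathrm{mono})$ by noting that extending a monotone function from the band to all of $\bits^n$ costs at most $O(\eps)$, since $f$ is $[-1,1]$-valued and the band's complement has measure $\le \eps$. That part is sound and is essentially what the paper does.

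However, there is a genuine error in your instantiation of the parameters: you assert $\Delta \le n$ for the truncated cube. The parameter $\Delta$ in \Cref{lem:global matching} and \Cref{lem:matching complexity} is \emph{not} the degree in the Hasse diagram (which is indeed $\le n$); it is the number of predecessors and successors of a vertex, i.e.\ the degree in $TC(P)$. This is the quantity that bounds the degree of the graphs on which \textsc{GhaffariMatching} is run (\textsc{FilterEdges} outputs neighbors in the \emph{transitive closure}). For a vertex in the truncated cube $H^n_\eps$, the number of successors (resp.\ predecessors) is the number of points reachable by flipping $-1$'s to $+1$'s (resp.\ the reverse) while staying inside a band of Hamming-weight width $\Theta(\sqrt{n\log(1/\eps)})$, which is on the order of $\sum_{k \le O(\sqrt{n\log(1/\eps)})}\binom{n}{k} = 2^{\Theta(\sqrt{n\log(1/\eps)}\,\log n)}$ — exponential in $\sqrt n$, not polynomial in $n$. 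Plugging this correct $\Delta$ into $(\Delta\log N)^{O(\log(1/\eps))}(s_P + s_f)$ is precisely how one arrives at the $2^{\tilde{O}(\sqrt n \log(1/\eps))}\cdot s_f$ bound stated in the lemma. Your claimed intermediate bound $2^{O(\log n\,\log(1/\eps))} s_f$ is much smaller than the lemma's and is not actually achievable; the remark that it ``absorbs into the ambient $2^{\tilde O(\sqrt n/\eps)}$ regime'' papers over the wrong $\Delta$ rather than deriving the stated size and seed-length bounds. The same confusion affects your seed-length claim of $\poly(n)$. You should redo the complexity and seed-length calculation with $\Delta = 2^{O(\sqrt{n}\log n\,\log(1/\eps))}$ and $N < 2^n$.
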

\begin{proof}
\textsc{HypercubeMatching} calls \textsc{MatchViolations} on the truncated hypercube,
which has parameters $N < 2^n$ and $\Delta = 2^{O(\sqrt{n} \log n \log (1/\eps))}$. 
The size of the representation of \textsc{TruncatedCube} is $O(n)$.
So by \Cref{lem:matching complexity}, the running time and output size of \textsc{HypercubeMatching} are  $2^{O(\sqrt{n} \log n \log(1/\eps))} \cdot s_f$,
and the random seed length is $2^{O(\sqrt{n} \log n \log(1/\eps))}$. 

Let $f'$ be the restriction of $f$ to the truncated cube.
Since $f$ is bounded in $[-1,1]$ and the truncated cube covers all but an $\eps$ fraction of vertices, we have $\dist_1(f', \mathrm{mono}) \ge \dist_1(f, \mathrm{mono}) - 2\eps$.
By \Cref{lem:global matching}, the weight of the matching is at least $(1-\eps) \cdot 2^n(\lfrac{1}{4}\dist_1(f', \mathrm{mono}) - \eps) \ge (1-\eps) \cdot 2^n(\lfrac{1}{4} \dist_1(f, \mathrm{mono}) - 3\eps/2) \ge 2^n( \lfrac{1}{4} \dist_1(f, \mathrm{mono}) - 4\eps)$.

\end{proof}

\section{Analysis of the agnostic learning algorithm}

By inspecting algorithm \textsc{MonotoneLearner} (i.e. \Cref{alg:monotone learner} on page \pageref{alg:monotone learner}), we see immediately that the run-time is $2^{\tilde{O}(\sqrt{n}/\epsilon)}$. We proceed to argue that the algorithm indeed satisfies the guarantee of \Cref{theorem: main theorem}.
First, we will need the following standard proposition. 
\begin{restatable}{claim}{concentration}
\label{claim: big enough set of samples preserves L_1 distances}
For any positive integers $n$ and $d$, real $\epsilon, \delta\in (0,1)$, and any function
$f:\left\{ \pm1\right\} ^{n}\righ[-1,1]$, let
$T$ be a collection of at least $n^{5d} \cdot \frac{100}{\epsilon^{2}}\ln\frac{1}{\epsilon}\ln\frac{1}{\delta}$
i.i.d. uniformly random elements of $\left\{ \pm1\right\} ^{n}$.
Then, with probability at least $1-\delta$ 
\[
\max_{\substack{\text{degree-\ensuremath{d} polynomial \ensuremath{P} over \ensuremath{\left\{  \pm1\right\}  ^{n}}}\\
\text{with \ensuremath{\norm P_{2}\leq}1}
}
}\abs{\norm{f-P}_{1}-\E_{\vect x\sim T}\pars{\abs{f(\vect x)-P(\vect x)}}}\leq\epsilon,
\]
\end{restatable}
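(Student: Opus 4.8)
The plan is a routine uniform-convergence argument over the class $\mathcal P$ of degree-$d$ multilinear polynomials with $\norm P_2\le1$: a covering-number bound on $\mathcal P$, Hoeffding's inequality at each net point, a union bound, and a Lipschitz interpolation step to pass from the net back to all of $\mathcal P$.

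First I would fix coordinates. A degree-$d$ multilinear polynomial over $\{\pm1\}^n$ is specified by its Fourier coefficients $\parr{\widehat P(S)}_{\abs S\le d}$, a vector in $\R^D$ with $D:=\sum_{k=0}^{d}\binom nk=n^{O(d)}$; the constraint $\norm P_2\le1$ says exactly that this vector lies in the Euclidean unit ball. The one observation that is not completely mechanical --- and the crux of the whole argument --- is that, although $\norm P_2\le1$ does not bound $\norm P_\infty$ by an absolute constant, Cauchy--Schwarz together with $\abs{\chi_S(\vect x)}=1$ gives $\abs{P(\vect x)}\le\sum_{\abs S\le d}\abs{\widehat P(S)}\le\sqrt D\cdot\norm P_2\le\sqrt D$ for every $\vect x$. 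Hence each summand $\abs{f(\vect x_i)-P(\vect x_i)}$ lies in $[0,1+\sqrt D]$, which is precisely what is needed to apply a concentration bound to the empirical average $\E_{\vect x\sim T}\pars{\abs{f(\vect x)-P(\vect x)}}$ --- at the affordable price of a $\sqrt D=n^{O(d)}$ factor in the sample size.

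Next I would set up the net. Let $\eta:=\epsilon/(4\sqrt D)$ and let $\mathcal N$ be an $\eta$-net of the unit ball of $\R^D$ in $\ell_2$, of size $\abs{\mathcal N}\le(3/\eta)^D=(12\sqrt D/\epsilon)^D$ by the standard volumetric bound. For each fixed $P$, the values $\abs{f(\vect x_i)-P(\vect x_i)}$ for $\vect x_i\in T$ are i.i.d., bounded in an interval of width $1+\sqrt D$, with mean $\norm{f-P}_1$; Hoeffding's inequality gives $\Pr\pars{\abs{\norm{f-P}_1-\E_{\vect x\sim T}\abs{f(\vect x)-P(\vect x)}}>\epsilon/2}\le2\exp\parr{-\frac{\abs T\,\epsilon^2}{2(1+\sqrt D)^2}}$, and a union bound over $\mathcal N$ controls the probability of failure at some net point by $2\abs{\mathcal N}\exp\parr{-\frac{\abs T\,\epsilon^2}{2(1+\sqrt D)^2}}$. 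Writing $\Psi(P):=\norm{f-P}_1-\E_{\vect x\sim T}\abs{f(\vect x)-P(\vect x)}$, the same Cauchy--Schwarz estimate shows that coefficient vectors within $\ell_2$ distance $\eta$ produce polynomials that differ pointwise by at most $\sqrt D\,\eta$, so $\abs{\Psi(P)-\Psi(P')}\le2\sqrt D\,\eta=\epsilon/2$; consequently, on the event that $\abs\Psi\le\epsilon/2$ throughout $\mathcal N$ we obtain $\max_{P\in\mathcal P}\abs{\Psi(P)}\le\epsilon$. It then remains only to choose $\abs T$ so that $2\abs{\mathcal N}\exp\parr{-\frac{\abs T\,\epsilon^2}{2(1+\sqrt D)^2}}\le\delta$; taking logarithms it suffices that $\abs T\ge\frac{CD}{\epsilon^2}\parr{D\ln(D/\epsilon)+\ln(1/\delta)}$ for an absolute constant $C$, and since $D=n^{O(d)}$ this is comfortably implied by the (generously chosen) bound $\abs T\ge n^{5d}\cdot\frac{100}{\epsilon^2}\ln\frac1\epsilon\ln\frac1\delta$ in the statement. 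I would only need to fill in this last piece of bookkeeping.

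The only step I expect to require thought is the $\norm P_\infty\le\sqrt D$ bound: a priori the class $\mathcal P$ is not uniformly bounded, so one cannot invoke an off-the-shelf ``bounded loss'' uniform-convergence theorem; but the $\ell_\infty$ norm is controlled by $\sqrt D=n^{O(d)}$, a quantity the huge sample size easily absorbs. Everything else is standard. An alternative route with the same conclusion: bound the empirical Rademacher complexity of the loss class $\fig{\vect x\mapsto\abs{f(\vect x)-P(\vect x)}:P\in\mathcal P}$ by $\sqrt{D/\abs T}$ --- using $1$-Lipschitzness of $t\mapsto\abs{f(\vect x)-t}$ and Talagrand's contraction lemma, together with $\E_\sigma\norm{\sum_i\sigma_i(\chi_S(\vect x_i))_S}_2\le\sqrt{D\abs T}$ --- and then apply McDiarmid's inequality with bounded differences $O((1+\sqrt D)/\abs T)$.
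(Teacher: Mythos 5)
Your proposal is correct and takes essentially the same approach as the paper: a pointwise bound on $\|P\|_\infty$ via Cauchy--Schwarz, a finite net over the coefficient ball, Hoeffding plus a union bound at net points, and Lipschitz interpolation to the full class. The only cosmetic differences are that you use a volumetric $\ell_2$-net where the paper rounds Fourier coefficients to a grid, and your $\|P\|_\infty\le\sqrt D$ bound is a slightly sharper version of the paper's $\|P\|_\infty\le n^d$; neither changes the argument.
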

\begin{proof}
    See \Cref{subsec: proof of claim big enough set of samples preserves L_1 distances} for the proof of this proposition.
\end{proof}

Now, in the following lemma we prove that subroutine $\textbf{Oracle}
_{\alpha,n, \epsilon}(P)$ (i.e. \Cref{alg:monotonicity oracle} on page \pageref{alg:monotonicity oracle}) satisfies some precise specifications  with high probability. Informally, we show that $\textbf{Oracle}
_{\alpha,n, \epsilon}(P)$ either 
\begin{itemize}
    \item Certifies that the polynomial $P$ is both
close to monotone in $L_1$ distance and has $L_1$ prediction error of $\alpha + O(\epsilon)$.
    \item Outputs a hyperplane separating $P$ from all such polynomials.
\end{itemize}
Formally, we prove the following:
\begin{lemma}
\label{lem: main lemma-1} For sufficiently large constant $C$ in \Cref{line: first mention of C} and \Cref{line: second mention of C} of procedure
$\textbf{Oracle}
_{\alpha,n, \epsilon}(P)$, sufficiently large integer $n$, any function $f:\bits^n \righ \bits$, parameters $\epsilon, \alpha\in (0,1)$, and a degree-$\left\lceil \frac{4 \cdot \sqrt{n}}{\eps} \log \frac{4}{\eps} \right\rceil$
polynomial $P$ satisfying $\norm P_{2}\leq1$ the following is true. 
The procedure
$\textbf{Oracle}
_{\alpha,n, \epsilon}(P)$
runs in time $n^{\tilde{O}{\parr{\frac{\sqrt{n}}{\epsilon}}}}$ and
will with probability at least $1-\frac{1}{2^{5n}}$ conform to the
following specification: 
\begin{enumerate}
\item If $\textbf{Oracle}
_{\alpha,n, \epsilon}(P)$ outputs ``yes'', then:
\begin{enumerate}
\item The function 
$P_{\text{TRIMMED}}=\begin{cases}
1 & \text{if \ensuremath{P(x)>1,}}\\
-1 & \text{if }\ensuremath{P(x)<-1,}\\
P(\vect x) & \text{otherwise}.
\end{cases}$

is $100\epsilon$-close to monotone in $L_{1}$ norm.
\item The $L_{1}$ distance between $P$ and the function $f$ is at most $\alpha+100\epsilon$.
\end{enumerate}
\item If $\textbf{Oracle}
_{\alpha,n, \epsilon}(P)$ instead outputs ("No", $Q_{\text{separator}}$), where  $Q_{\text{separator}}$ is a degree-$\left\lceil \frac{4 \cdot \sqrt{n}}{\eps} \log \frac{4}{\eps} \right\rceil$ polynomial over $\R^n$,
then we have $\left\langle P',Q_{\text{separator}}\right\rangle <\left\langle P,Q_{\text{separator}}\right\rangle $
for any degree-$\left\lceil \frac{4 \cdot \sqrt{n}}{\eps} \log \frac{4}{\eps} \right\rceil$ polynomial $P'$
with $\norm{P'}_{2}\leq1$ that satisfies the following two conditions:
\begin{itemize}
\item $P'$ is $\epsilon$-close in $L_{1}$ distance to some monotone function
$f_{\text{monotone}}:\left\{ \pm1\right\} ^{n}\righ\left[-1,1\right]$
and
\item $P'$ is $\parr{\alpha+\epsilon}$-close in $L_{1}$ distance to the
function $f$ which we are trying to learn.
\end{itemize}
\end{enumerate}
In particular, this implies that if polynomial $P$ itself is $\epsilon$-close
in $L_{1}$ distance to some monotone function and is $\parr{\alpha+\epsilon}$-close
in $L_{1}$ distance to the function $f$, then $\textbf{Oracle}
_{\alpha,n, \epsilon}(P)$
will say ``yes'' with probability at least $1-\frac{1}{2^{10n}}$.
\end{lemma}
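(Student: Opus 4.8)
The plan is to verify the $n^{\tilde O(\sqrt n/\epsilon)}$ running time by inspection and then to argue everything else on a single ``good event'' over the random seed $r$ and the sample set $T$, on which the three possible behaviours of $\textbf{Oracle}_{\alpha,n,\epsilon}(P)$ (\Cref{alg:monotonicity oracle}) can be analysed one at a time. The running time is dominated by building the succinct representation of $M_{\text{separator}}$ (one call to \textsc{HypercubeMatching} on $P_{\text{TRIMMED}}$, costing $2^{\tilde O(\sqrt n\log(1/\epsilon))}$ times the size $n^{O(\sqrt n/\epsilon\,\log(1/\epsilon))}$ of the representation of $P_{\text{TRIMMED}}$ by \Cref{lem:cube-matching}), by evaluating $M_{\text{separator}}$ and $P_{\text{TRIMMED}}$ at the $|T|=n^{\Theta(C\sqrt n/\epsilon\,\log(1/\epsilon))}$ points of $T$, and by assembling the $\le n^{d}$ Fourier coefficients ($d=\lceil 4\sqrt n/\epsilon\log(4/\epsilon)\rceil$) of the returned separator; each of these is $n^{\tilde O(\sqrt n/\epsilon)}$.

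For the good event, draw $r$, then $T$, and take the constant $C$ in \Cref{line: first mention of C} and \Cref{line: second mention of C} large enough that a union bound over the events below still leaves failure probability at most $2^{-5n}$ (resp.\ $2^{-10n}$ when we want it for one fixed feasible $P$): (i) by \Cref{lem:cube-matching} and \Cref{lem:matching-approx-dist}, \textsc{HypercubeMatching}$(P_{\text{TRIMMED}},\epsilon/4,r)$ is globally consistent and returns a matching $M$ on $\viol(P_{\text{TRIMMED}})$ of weight $W$ with $W/2^n\ge\tfrac14\dist_1(P_{\text{TRIMMED}},\mathrm{mono})-\epsilon$; (ii) the uniform $\ell_1$ concentration of \Cref{claim: big enough set of samples preserves L_1 distances} holds with slack $\epsilon$, so in particular $\bigl|\,\|f-P\|_1-\E_{x\sim T}|f(x)-P(x)|\,\bigr|\le\epsilon$; (iii) for the (now fixed, $\{-1,0,1\}$-valued) function $M_{\text{separator}}$ and for $\sigma:=\sign(P-f)$, every degree-$\le d$ empirical Fourier coefficient $\E_{x\sim T}[M_{\text{separator}}(x)\chi_S(x)]$ and $\E_{x\sim T}[\sigma(x)\chi_S(x)]$ lies within $\epsilon/n^{d}$ of its true value, and $\E_{x\sim T}[M_{\text{separator}}(x)P_{\text{TRIMMED}}(x)]$ lies within $\epsilon$ of $\langle M_{\text{separator}},P_{\text{TRIMMED}}\rangle$ (Hoeffding plus a union bound over the $\le n^{d}$ sets $S$). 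Everything below is argued on this event.

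Then I would split on the output. If $\textbf{Oracle}_{\alpha,n,\epsilon}(P)$ answers ``yes'': the check on \Cref{line:other estimate} failed, so by (ii) $\|f-P\|_1\le\alpha+51\epsilon\le\alpha+100\epsilon$, which is 1(b); the check on \Cref{line:estimate} failed, so by (iii) $\langle M_{\text{separator}},P_{\text{TRIMMED}}\rangle\le 6\epsilon$, and since $M$ is a matching on $\viol(P_{\text{TRIMMED}})$ whose matched endpoints are marked with opposite signs by their poset order, $\langle M_{\text{separator}},P_{\text{TRIMMED}}\rangle=W/2^n$, so (i) gives $\tfrac14\dist_1(P_{\text{TRIMMED}},\mathrm{mono})\le 7\epsilon$, i.e.\ $\dist_1(P_{\text{TRIMMED}},\mathrm{mono})\le 28\epsilon\le 100\epsilon$, which is 1(a). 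If it answers $(\text{No},Q_{\text{separator}})$ via \Cref{line:estimate}: $Q_{\text{separator}}$ is the degree-$\le d$ empirical Fourier truncation of $M_{\text{separator}}$, so $\langle g,Q_{\text{separator}}\rangle=\sum_{|S|\le d}\widehat g(S)\,\E_{x\sim T}[M_{\text{separator}}(x)\chi_S(x)]$ for any degree-$\le d$ polynomial $g$; for $g=P$ this collapses to $\E_{x\sim T}[M_{\text{separator}}(x)P(x)]\ge\E_{x\sim T}[M_{\text{separator}}(x)P_{\text{TRIMMED}}(x)]>5\epsilon$, using $M_{\text{separator}}(x)\bigl(P(x)-P_{\text{TRIMMED}}(x)\bigr)\ge0$ pointwise (matched pairs have strictly positive violation score, forcing the predecessor's trimmed value above $-1$ and the successor's below $+1$), while for feasible $P'$, vertex-disjointness of matched pairs and $\epsilon$-closeness of $P'$ to a monotone $f_{\text{monotone}}$ give $\langle M_{\text{separator}},P'\rangle=\tfrac1{2^n}\sum_{\text{matched }(a,b),\,a\preceq b}(P'(a)-P'(b))\le\|P'-f_{\text{monotone}}\|_1\le\epsilon$, hence by (iii) and Cauchy--Schwarz ($\|P'\|_2\le1$) $\langle P',Q_{\text{separator}}\rangle\le 2\epsilon<5\epsilon<\langle P,Q_{\text{separator}}\rangle$. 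If it answers $(\text{No},Q_{\text{separator}})$ via \Cref{line: qseparator}: $Q_{\text{separator}}$ realizes an empirical degree-$\le d$ subgradient of the convex map $g\mapsto\|g-f\|_1$ at $g=P$; the failed check on \Cref{line:other estimate} with (ii) gives $\|P-f\|_1>\alpha+49\epsilon$, feasibility gives $\|P'-f\|_1\le\alpha+\epsilon$, so the subgradient inequality for the true subgradient $G$ gives $\langle G,P\rangle-\langle G,P'\rangle\ge\|P-f\|_1-\|P'-f\|_1>48\epsilon$, and replacing $G$ by the empirical $Q_{\text{separator}}$ shifts each side by at most $\epsilon$ (again (iii), Cauchy--Schwarz, $\|P\|_2,\|P'\|_2\le1$), so $\langle P,Q_{\text{separator}}\rangle>\langle P',Q_{\text{separator}}\rangle$. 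The closing ``in particular'' claim follows since a feasible $P$ cannot be separated from itself, so on the good event (probability $\ge1-2^{-10n}$ for one fixed $P$) the oracle must answer ``yes''.

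The step I expect to be the main obstacle is certifying one separator as valid for \emph{all} feasible $P'$ simultaneously, with $T$, $r$ (hence $M_{\text{separator}}$ and every empirical Fourier coefficient) frozen before $P'$ is seen: this is what forces the $\epsilon/n^{d}$-accurate per-coefficient Hoeffding estimates and the Cauchy--Schwarz conversion from coefficient error to functional error, and it is also where the gap between $P$ and $P_{\text{TRIMMED}}$ must be absorbed via the ``strictly positive violation score'' pointwise inequality used above.
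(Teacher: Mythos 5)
Your proposal is correct and follows essentially the same route as the paper: the same three-way case split on the oracle's output, the same union-bounded ``good event'' (matching LCA success, per-coefficient Hoeffding for $M_{\text{separator}}$, and the uniform-over-degree-$d$-polynomials concentration of \Cref{claim: big enough set of samples preserves L_1 distances}), and the same use of a nearby monotone target to bound $\langle M_{\text{separator}},P'\rangle \le \eps$. The one genuine deviation is in the prediction-error branch (\Cref{line: qseparator}): you argue with the \emph{true} $L_1$ loss and its subgradient $G$ and then transfer to $Q_{\text{separator}}$ via per-coefficient Hoeffding for $\sign(P-f)$ plus Cauchy--Schwarz, whereas the paper observes that $Q_{\text{separator}}$ is exactly the subgradient of the \emph{empirical} loss $\E_{x\sim T}\lvert H(x)-f(x)\rvert$, so only \Cref{claim: big enough set of samples preserves L_1 distances} is needed there and the extra Hoeffding/Cauchy--Schwarz transfer is avoided. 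Two small polishings in your write-up that are actually slightly cleaner than the paper's: the exact algebraic identity $\langle P,Q_{\text{separator}}\rangle = \E_{x\sim T}[M_{\text{separator}}(x)P(x)]$ removes one concentration step on the $P$ side, and the pointwise inequality $M_{\text{separator}}(x)\bigl(P(x)-P_{\text{TRIMMED}}(x)\bigr)\ge 0$ makes the ``trimming only decreases violated-edge weight'' step explicit. Finally, note that you silently adopted the sign convention $M_{\text{separator}}=+1$ on the predecessor of a matched pair and $-1$ on the successor (which is what makes $\langle M_{\text{separator}},P_{\text{TRIMMED}}\rangle$ equal to the nonnegative matching weight); this is the convention used in the paper's proof text, and the opposite convention that appears in the displayed case statement of \Cref{alg:monotonicity oracle} is evidently a typo, so your choice is the right one.
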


\begin{proof}
We use the union bound to conclude that with probability at least $1-\frac{1}{2^{5n}}$
all the following events hold:
\begin{itemize}
\item The LCA from \Cref{lem:cube-matching} works as advertised
and the weight $W$ of the resulting matching satisfies 
\[
\frac{W}{2^{n}}\geq0.1 \,
\dist_1(P_{\text{TRIMMED}}, \mathrm{mono})
-\epsilon.
\]
Another way to write the same thing is
\begin{equation}
\left\langle M_{\text{separator}},P_{\text{TRIMMED}}\right\rangle \geq0.1
 \,
\dist_1(P_{\text{TRIMMED}}, \mathrm{mono})
-\epsilon.\label{eq: matching approximately largest weight}
\end{equation}
From
\Cref{lem:cube-matching} 
it follows that this holds with probability at least $1-\frac{1}{2^{10n}}$.
\item The estimate of $\left\langle M_{\text{separator}},P_{\text{TRIMMED}}\right\rangle $
in \Cref{line:estimate} is indeed $\epsilon$-close to the true value. From the standard Hoeffding bound, this holds with probability at least $1-\frac{1}{2^{10n}}$.
\item It is the case that
\[
\norm{
\sum_{S\subset[n]:\:\abs S\leq\left\lceil \frac{4 \cdot \sqrt{n}}{\eps} \log \frac{4}{\eps} \right\rceil}\widehat{M_{\text{separator}}}(S)\chi_S
-
Q_{\text{separator}}
}_2
\leq
\epsilon
\]
Substituting the expression for $Q_{\text{separator}}$, and using the orthogonality of $\{\chi_S\}$ we see this is equivalent to
\[
\sum_{S\subset[n]:\:\abs S\leq\left\lceil \frac{4 \cdot \sqrt{n}}{\eps} \log \frac{4}{\eps} \right\rceil}
\underbrace{
\parr{\widehat{M_{\text{separator}}}(S)
-
\frac{1}{|T|}\sum_{\vect x \in T} \pars{M_{\text{separator}}(\vect x)\cdot \chi_S(\vect x)} 
}^2}_{\text{$\leq \epsilon n^{-\left\lceil \frac{4 \cdot \sqrt{n}}{\eps} \log \frac{4}{\eps} \right\rceil}$ in absolute value w.p. $\geq\frac{1}{2^{10n}}$ via Hoeffding's bound}
}
\leq
\epsilon
\]
Overall, the above holds with probability at least $1-\frac{1}{2^{9n}}$ by taking a  Hoeffding bound for each individual summand and taking a union bound over them.
\item The set $T\subset\left\{ \pm1\right\} ^{n}$ is
such that 
\begin{equation}
\max_{\substack{\text{degree-\ensuremath{\left\lceil \frac{4 \cdot \sqrt{n}}{\eps} \log \frac{4}{\eps} \right\rceil} polynomial \ensuremath{P'} over \ensuremath{\left\{  \pm1\right\}  ^{n}}}\\
\text{with \ensuremath{\norm{P'}_{2}\leq}1}
}
}\abs{\norm{f-P'}_{1}-\E_{\vect (x, f(x))\sim T}\pars{\abs{f(\vect x)-P'(\vect x)}}}\leq\epsilon.\label{eq: empirical L_1 error is close to actual L_1 error}
\end{equation}

It follows from \Cref{claim: big enough set of samples preserves L_1 distances}
that this happens with probability at least to $1-\frac{1}{2^{10n}}$.

\end{itemize}
Now, we argue that if these conditions indeed hold, then $\textbf{Oracle} _{\alpha,n, \epsilon}(P)$
will satisfy the specification given.

First, suppose $\textbf{Oracle} _{\alpha,n, \epsilon}(P)$ answered ``yes''. Then,
since the estimate of $\left\langle M_{\text{separator}},P_{\text{TRIMMED}}\right\rangle $
in \Cref{line:estimate} is within $\epsilon$ of its true value, we have 
\[
\left\langle M_{\text{separator}},P_{\text{TRIMMED}}\right\rangle \leq6\epsilon.
\]
Now, since we are assuming the matching LCA from \Cref{lem:cube-matching} works as advertised, this means that 
\[
6\epsilon\geq\left\langle M_{\text{separator}},P_{\text{TRIMMED}}\right\rangle \geq0.1\cdot \dist_1(P_{\text{TRIMMED}}, \mathrm{mono})-\epsilon
\]
which can be rewritten as 
\[
\dist_1(P_{\text{TRIMMED}}, \mathrm{mono}) \leq70\epsilon\leq100\epsilon,
\]
which is one of the two things we wanted to show. The other one was
showing that the $L_{1}$ distance between $P$ and the function $f$,
which we are trying to learn, is at most $\alpha+100\epsilon$. Since
the algorithm returned ``yes'', it has to be that in \Cref{line:other estimate}
we have 
\[
\E_{\vect x\sim T}\pars{\abs{f(\vect x)-P(\vect x)}}\leq\alpha+50\epsilon.
\]
From \Cref{eq: empirical L_1 error is close to actual L_1 error}
it then follows that 
\[
\norm{f-P}_{1}\leq\E_{\vect x\sim T}\pars{\abs{f(\vect x)-P(\vect x)}}+\epsilon\leq\alpha+51\epsilon\leq\alpha+100\epsilon,
\]
which is the other condition we wanted to show for the case when the
oracle says ``yes''.

Now, assume the oracle outputs ``no'' along with some polynomial $Q_{\text{separator}}$
and let $P'$ be a degree\-$\left\lceil \frac{4 \cdot \sqrt{n}}{\eps} \log \frac{4}{\eps} \right\rceil$ polynomial
with $\norm{P'}_{2}\leq1$ that satisfies the following two conditions\footnote{If no polynomial satisfying these conditions exists, the statement
we are seeking to prove holds vacuously.}:
\begin{itemize}
\item $P'$ is $\epsilon$-close in $L_{1}$ distance to some monotone function
$f_{\text{monotone}}:\left\{ \pm1\right\} ^{n}\righ\left[-1,1\right]$
and
\item $P'$ is $\parr{\alpha+\epsilon}$-close in $L_{1}$ distance to the
function $f$ which we are trying to learn.
\end{itemize}
Here, again, there are two cases. First, suppose we have the case where $Q_{\text{separator}}$
is generated from $M_{\text{separator}}$. We have that the oracle's
estimate of $\left\langle M_{\text{separator}},P_{\text{TRIMMED}}\right\rangle $
is at least $5\epsilon$, which means that $\left\langle M_{\text{separator}},P_{\text{TRIMMED}}\right\rangle \geq4\epsilon$.
We know that $P'$ is $\epsilon$-close in $L_{1}$ distance to some
monotone function $f_{\text{monotone}}:\left\{ \pm1\right\} ^{n}\righ\left[-1,1\right]$.
Since $M_{\text{separator}}$ is defined to be so for every matched
pair $(\vect x_{i},\vect y_{i})$ with $\vect x_{i}\prec\vect y_{i}$
we have $M_{\text{\text{separator}}}(\vect x_{i})=1$ and $M_{\text{\text{separator}}}(\vect y_{i})=-1$
and is $0$ otherwise, and for each such pair $f_{\text{monotone}}\left(\vect{x_{i}}\right)\leq f_{\text{monotone}}\left(\vect{y_{i}}\right)$
we have $\left\langle M_{\text{separator}},f_{\text{monotone}}\right\rangle \leq0$.
This allows us to conclude 
\begin{multline*}
0\geq\left\langle M_{\text{separator}},f_{\text{monotone}}\right\rangle =\left\langle M_{\text{separator}},P'\right\rangle +\left\langle M_{\text{separator}},f_{\text{monotone}}-P'\right\rangle \geq\\
\left\langle M_{\text{separator}},P'\right\rangle -\parr{\max_{x \in \bits^n }\abs{M_{\text{separator}}(x)}}\norm{f_{\text{monotone}}-P'}_{1}\geq\left\langle M_{\text{separator}},P'\right\rangle -\epsilon,
\end{multline*}
which means
\begin{multline}
\epsilon\geq\left\langle M_{\text{separator}},P'\right\rangle =\left\langle \sum_{S\subset[n]:\:\abs S\leq\left\lceil \frac{4 \cdot \sqrt{n}}{\eps} \log \frac{4}{\eps} \right\rceil}\widehat{M_{\text{separator}}}(S)\parr{\prod_{i\in S}x_{i}},P'\right\rangle =\\
\left\langle Q_{\text{separator}},P'\right\rangle -\norm{Q-\sum_{S\subset[n]:\:\abs S\leq\left\lceil \frac{4 \cdot \sqrt{n}}{\eps} \log \frac{4}{\eps} \right\rceil}\widehat{M_{\text{separator}}}(S)\parr{\prod_{i\in S}x_{i}}}_{2}\norm{P'}_{2}\geq\left\langle Q_{\text{separator}},P'\right\rangle -\epsilon.\label{eq: upper bound on <Q, good polynomial>}
\end{multline}
On the other hand, the oracle's estimate of $\left\langle M_{\text{separator}},P_{\text{TRIMMED}}\right\rangle $
is at least $5\epsilon$, which means that it is the case that $\left\langle M_{\text{separator}},P_{\text{TRIMMED}}\right\rangle \geq4\epsilon$.
This allows us to conclude 
\begin{multline}
4\epsilon\leq\overbrace{\left\langle M_{\text{separator}},P_{\text{TRIMMED}}\right\rangle \leq\left\langle M_{\text{separator}},P\right\rangle }^{\substack{\text{Trimming the values of a function }\\
\text{only decreases weights of violated edges.}
}
}=\left\langle \sum_{S\subset[n]:\:\abs S\leq\left\lceil \frac{4 \cdot \sqrt{n}}{\eps} \log \frac{4}{\eps} \right\rceil}\widehat{M_{\text{separator}}}(S)\parr{\prod_{i\in S}x_{i}},P\right\rangle \leq\\
\left\langle Q_{\text{separator}},P\right\rangle +\norm{Q-\sum_{S\subset[n]:\:\abs S\leq\left\lceil \frac{4 \cdot \sqrt{n}}{\eps} \log \frac{4}{\eps} \right\rceil}\widehat{M_{\text{separator}}}(S)\parr{\prod_{i\in S}x_{i}}}_{2}\norm P_{2}\geq\left\langle Q_{\text{separator}},P\right\rangle +\epsilon.\label{eq: lower bound bound on <Q, P>}
\end{multline}
Combining Equation \ref{eq: lower bound bound on <Q, P>} and Equation \ref{eq: upper bound on <Q, good polynomial>}
we get 
\[
\left\langle Q_{\text{separator}},P'\right\rangle \leq2\epsilon<3\epsilon\leq\left\langle Q_{\text{separator}},P\right\rangle 
\]
as required. 

Finally, we consider the case when $Q_{\text{separator}}$ is generated on \Cref{line: qseparator}. Since $P'$ is $\parr{\alpha+\epsilon}$-close
in $L_{1}$ distance to the function $f$, by \Cref{eq: empirical L_1 error is close to actual L_1 error}
we have that 
\[
\alpha+\epsilon\leq\norm{f(\vect x)-P'(\vect x)}_{1}\leq\E_{(\vect  x, f(\vect x))\sim T}\pars{\abs{f(\vect x)-P'(\vect x)}}-\epsilon,
\]
which we can rewrite as $\E_{ (\vect  x, f( \vect x))\sim T}\pars{\abs{f(\vect x)-P'(\vect x)}}\leq\alpha+2\epsilon$.
At the same time, we have \\ $\E_{(\vect  x, f(\vect  x))\sim T}\pars{\abs{f(\vect x)-P(\vect x)}}>\alpha+50\epsilon$,
which means that 
$$\E_{(\vect x, f(\vect x))\sim T}\pars{\abs{f(\vect x)-P(\vect x)}}>\E_{ (\vect x, f(\vect x))\sim T}\pars{\abs{f(\vect x)-P'(\vect x)}}.$$
Therefore, as the function mapping a polynomial $H$ to the value
$\E_{(\vect  x, f(\vect  x))\sim T}\pars{\abs{f(\vect x)-H(\vect x)}}$ is convex
,
it has to be the case that\footnote{To be fully precise, the expression above is a subgradient of the convex function mapping a polynomial $H$ to $\E_{(\vect  x, f(\vect x))\sim T}\pars{\abs{f(\vect x)-H(\vect x)}}$.} 
\begin{multline*}
\left\langle P'-P,
\sum_{\substack{S\subset[n]~:~
\abs S\leq\left\lceil \frac{4 \cdot \sqrt{n}}{\eps} \log \frac{4}{\eps} \right\rceil
}
}\parr{\E_{\vect x\sim T}\pars{\widehat{P}(S)\chi_{S}(\vect x)\sign(P(\vect x)-f(\vect x))}}\chi_{S}
\right\rangle =
\\
=
\left\langle P'-P,\nabla_{H}\parr{\E_{(\vect x, f(\vect x))\sim T}\pars{\abs{H(\vect x)-f(\vect x)}}}\bigg\vert_{H=P}\right\rangle <0.
\end{multline*}
This implies that $\langle Q_{\text{separator}}, P' \rangle \le \langle Q_{\text{sepatator}}, P \rangle$, which completes the proof.
\end{proof}

\subsection{Finishing the proof of the Main Theorem (\Cref{theorem: main theorem}).}

Recall that earlier by inspecting \Cref{alg:monotone learner} we concluded that this algorithm runs in time $2^{\tilde{O}\parr{\frac{\sqrt{n}}{\epsilon}}}$.
Here we use \Cref{lem: main lemma-1} to finish the proof of \Cref{theorem: main theorem} by showing that 
 with probability at least $1-\frac{1}{2^{n}}$ the function  $\sign(P^\text{GOOD}_{\text{TRIMMED}}(\vect x)-t^{*})$
is monotone and is $\text{\text{opt}+}O(\epsilon)$-close to $f$ (where
$\text{opt}$ is the distance of $f$ to the closest monotone function). 

We can further conclude that
with probability at least $1-\frac{1}{2^{3n}}$ the following events hold:
\begin{enumerate}
\item Every time an oracle $\textbf{Oracle}
_{\alpha,n, \epsilon}$ is invoked (for various values
of $\alpha$), its behavior will conform to the specifications in
\Cref{lem: main lemma-1}.

\item The algorithm HypercubeCorrector from \Cref{thm:l1-cube-corrector}
used on line 11 works as advertised, so the function $P^\text{GOOD}_{\text{CORRECTED}}:\left\{ \pm1\right\} \righ\left[-1,1\right]$
is monotone and we indeed have 
\begin{equation}
\norm{P^\text{GOOD}_{\text{CORRECTED}}-P^\text{GOOD}_{\text{TRIMMED}}}_1 \leq 
10 \cdot \dist_1(P^\text{GOOD}_{\text{TRIMMED}}, \mathrm{mono}) +\epsilon.\label{eq: Jane's corrector works}
\end{equation}
\item In step (4), the function $\sign(P^\text{GOOD}_{\text{CORRECTED}}(\vect x)-t^{*})$
satisfies the guarantee from \Cref{fact: rounding}, i.e. 
\begin{equation}
\Pr_{\vect x\sim\left\{ \pm1\right\} ^{n}}\left[\sign(P^\text{GOOD}_{\text{CORRECTED}}(\vect x)-t^{*})\neq f\right]\leq
\frac{1}{2}
\norm{P^\text{GOOD}_{\text{CORRECTED}}-f}_1
+\epsilon\label{eq: rounding works right}
\end{equation}
\end{enumerate}
We argue that each of these events takes place with probability at least $1-\frac{1}{2^{4n}}$:
\begin{itemize}
    \item Note that the oracles $\textbf{Oracle}
_{\alpha,n, \epsilon}$
for various values of $\text{\ensuremath{\alpha} are invoked at most \ensuremath{2^{\tilde{O}\parr{\frac{\sqrt{n}}{\epsilon}}}}}$
times. Therefore, \Cref{lem: main lemma-1} tells us that for each of this invocations the algorithm $\textbf{Oracle}
_{\alpha,n, \epsilon}$ conforms to its specification with probability at least $1-\frac{1}{2^{5n}}$. Via union bound we see that event (1) holds with probability at least\footnote{We assume that $\epsilon$ is such that $2^{0.1 n}$ exceeds the number $2^{\tilde{O}\parr{\frac{\sqrt{n}}{\epsilon}}}$ of times that $\textbf{Oracle}
_{\alpha,n, \epsilon}$ is invoked (for different values of $\alpha$. Otherwise, the run-time budget is sufficient to store entire truth-tables of functions over $\bits^n$ and 
statement in \Cref{alg:cube-corrector} is achieved by the trivial algorithm that uses a linear program to fit the best montone real-valued function and then rounds it to be $\bits$-valued. See \Cref{subsec: very small epsilon} for further details.}  $1-\frac{1}{2^{4n}}$.
\item Event (2) holds with probability at least $1-\frac{1}{2^{4n}}$ via \Cref{thm:l1-cube-corrector}.
\item Event (3) holds with probability at least $1-\frac{1}{2^{4n}}$ via \Cref{fact: rounding}
\end{itemize} Via union bound, we see that with probability at least $1-\frac{1}{2^{3n}}$ all these events hold, which we will assume for the rest of the proof.

Recall that $\text{opt}$ stands for the distance of $f$ to the closest monotone
function. We first claim that the algorithm will break out of the
loop in \Cref{line:return} for some value $\alpha^{*}\leq 2\text{opt}+150\epsilon$,
which we argue as follows:
 If $\alpha^{*}>2\text{opt}+150\epsilon$, then for some\footnote{Note that $\text{opt}\leq1/2$, because the function $f$ is at least
$1/2$-close to either the all-ones or all-zeroes functions, which
are both monotone. Therefore some value of $\alpha$ in the range
$\left[2\text{opt+100\ensuremath{\epsilon},}2\text{opt+150\ensuremath{\epsilon}}\right]$
is necessarily considered by the algorithm as it is trying all values
$\alpha=\epsilon,2\epsilon,3\epsilon,\cdots1-\epsilon,1+200\epsilon$.} $\alpha\in\left[2\text{opt+100\ensuremath{\epsilon},}2\text{opt+150\ensuremath{\epsilon}}\right]$
the ellipsoid algorithm failed to find some polynomial $P$ on which
$\textbf{Oracle} _{\alpha,n, \epsilon}$ returns ``Yes''. We claim that this
is impossible. Indeed, let $\mathcal{C}_{\text{convex}}$ be the set consisting of
degree-$\left\lceil \frac{4 \cdot \sqrt{n}}{\eps} \log \frac{4}{\eps} \right\rceil$ polynomials $P'$ with
$\norm{P'}_{2}\leq1$ that satisfies the following two conditions:
\begin{itemize}
\item $P'$ is $\epsilon$-close in $L_{1}$ distance to some monotone
function $f_{\text{monotone}}:\left\{ \pm1\right\} ^{n}\righ\left[-1,1\right]$,
and 
\item $P'$ is $\parr{\alpha+\epsilon}$-close in $L_{1}$ distance
to the function $f$ which we are trying to learn.
\end{itemize}
We make the following
observations:
\begin{itemize}
\item The set $\mathcal{C}_{\text{convex}}$ is a convex set, because (a) the set of all monotone
functions $f_{\text{monotone}}:\left\{ \pm1\right\} ^{n}\righ\left[-1,1\right]$
is convex, (b) the set of points $\parr{\alpha+\epsilon}$-close in
$L_{1}$ distance to some specific convex set is itself convex, and (c)
the intersection of two convex sets is a convex set (in this case one
convex set is the set functions $\left\{ \pm1\right\} ^{n}\righ\left[-1,1\right]$
that are $\parr{\alpha+\epsilon}$-close in $L_{1}$ distance a monotone
functions and the other convex set is is the set of all degree-$\left\lceil \frac{4 \cdot \sqrt{n}}{\eps} \log \frac{4}{\eps} \right\rceil$
polynomials with with $\norm{P'}_{2}\leq1$). 
\item The set $\mathcal{C}_{\text{convex}}$ contains an $L_{2}$ ball of radius at least $\epsilon \cdot n^{-\frac{1}{2}\left\lceil \frac{4 \cdot \sqrt{n}}{\eps} \log \frac{4}{\eps} \right\rceil}$.
In other words, in $\mathcal{C}_{\text{convex}}$ there is some degree\-$\left\lceil \frac{4 \cdot \sqrt{n}}{\eps} \log \frac{4}{\eps} \right\rceil$
polynomial $P_{0}$ such that any degree-$\left\lceil \frac{4 \cdot \sqrt{n}}{\eps} \log \frac{4}{\eps} \right\rceil$
polynomial $P'$ that is $\epsilon$-close to $P_{0}$ in $L_2$ norm is also in
$\mathcal{C}_{\text{convex}}$. Let $f_{\text{monotone, optimal}}:\left\{ \pm1\right\} ^{n}\righ\left\{ \pm1\right\} $
be the monotone function for which it is the case that $\Pr_{\vect x\sim\left\{ \pm1\right\} ^{n}}\left[f_{\text{monotone, optimal}}(\vect x)\neq f(\vect x)\right]=\text{opt}$,
and let $P_{0}$ be a degree-$\left\lceil \frac{4 \cdot \sqrt{n}}{\eps} \log \frac{4}{\eps} \right\rceil$
polynomial that is $\epsilon$-close to $f_{\text{monotone, optimal}}$
in $L_{1}$ norm (such polynomial has to exist by \Cref{fact:A monotone Boolean functions well-approximated by polys}).
Then, $P_{0}$ is $\left(2\text{opt}+\epsilon\right)$-close to $f$
in $L_{1}$ norm and $\epsilon$-close to monotone in $L_{1}$ norm. In other words, the set $\mathcal{C}_{\text{convex}}$ contains an $L_{1}$-ball of radius $\epsilon$. Via the standard inequality between the $L_{1}$ and $L_{2}$ norms, in $d$ dimensions every $L_1$ ball or radius $\epsilon$ contains an $L_2$ ball of radius at most $\epsilon/\sqrt{d}$. Our claim follows, since the space of degree-$\left\lceil \frac{4 \cdot \sqrt{n}}{\eps} \log \frac{4}{\eps} \right\rceil$ over $\R^d$ has dimension at most $n^{\left\lceil \frac{4 \cdot \sqrt{n}}{\eps} \log \frac{4}{\eps} \right\rceil}$.
\item Since the procedure Oracle$_{\alpha,n, \epsilon}$ is assumed to satisfy the
specifications given in \Cref{lem: main lemma-1} and for this specific
value of $\alpha$ it never gave the response ``yes'', then for
every query $P$ to Oracle$_{\alpha,n, \epsilon}$, the oracle returned some halfspace
that separates $P$ from the convex set $\mathcal{C}_{\text{convex}}$.
\end{itemize}
From \Cref{fact: ellipsoid algorithm} we know that under these conditions
the ellipsoid algorithm will necessarily in time \\ $\text{\text{poly}}\left(n^{\left\lceil \frac{4 \cdot \sqrt{n}}{\eps} \log \frac{4}{\eps} \right\rceil},\log\parr{R/r}\right)=n^{O\parr{\left\lceil \frac{4 \cdot \sqrt{n}}{\eps} \log \frac{4}{\eps} \right\rceil}}$
find some polynomial $P$ that is in $\mathcal{C}_{\text{convex}}$. For this particular
polynomial, the specifications in \Cref{lem: main lemma-1} require
the oracle Oracle$_{\alpha,n, \epsilon}$ to give a response ``yes'', which
gives us a contradiction.
Thus, the function $P^\text{GOOD}_{\text{TRIMMED}}$ will be $O(\epsilon)$-close
to monotone in $L_{1}$ norm and will satisfy 
\[
\norm{P^\text{GOOD}_{\text{TRIMMED}}-f}_1 \leq2\text{opt}+O(\epsilon).
\]
Combining this with \Cref{eq: Jane's corrector works} yields 
\begin{multline*}
\norm{P^\text{GOOD}_{\text{CORRECTED}}-f}_1 
\leq2\text{opt}+O(\epsilon)+\norm{P^\text{GOOD}_{\text{TRIMMED}}-P^\text{GOOD}_{\text{CORRECTED}}}_1=2\text{opt}+O(\epsilon).
\end{multline*}
We know that $\norm{P^\text{GOOD}_{\text{TRIMMED}}-P^\text{GOOD}_{\text{CORRECTED}}}_1 \leq O(\epsilon)$  because $P^\text{GOOD}_{\text{TRIMMED}}$ is $O(\epsilon)$-close to monotone by \Cref{eq: Jane's corrector works}.
Now, combining the inequality above with Equation \ref{eq: rounding works right}
gives us 
\[
\Pr_{\vect x\sim\left\{ \pm1\right\} ^{n}}\left[\sign(P^\text{GOOD}_{\text{CORRECTED}}(\vect x)-t^{*})\neq f\right]\leq\frac{1}{2}\norm{P^\text{GOOD}_{\text{CORRECTED}}-f}_1+\epsilon\leq\text{opt}+O(\epsilon).
\]
Finally, we see that since the function $P^\text{GOOD}_{\text{CORRECTED}}\left\{ \pm1\right\} ^{n}\righ\left[-1,+1\right]$
is monotone we have that the $\left\{ \pm1\right\} $-valued function
$\sign(P^\text{GOOD}_{\text{CORRECTED}}(\vect x)-t^{*})$
is also monotone, which finishes our argument.

\section{Acknowledgments}
We thank Ronitt Rubinfeld and Mohsen Ghaffari for helpful conversations about local computation algorithms. We additionally thank Ronitt Rubinfeld for useful comments regarding the manuscript.

\bibliographystyle{alpha}
\bibliography{notesNew, references}

\appendix





\section{Rounding of real-valued functions to Boolean.}
\begin{fact}
\label{fact: rounding}Suppose we have two functions $g:\left\{ \pm1\right\} ^{n}\righ\R$
and $f:\left\{ \pm1\right\} ^{n}\righ\left\{ \pm1\right\} $. Let $T$ be a set of at least $\frac{40}{\epsilon^2} \log\parr{\frac{20}{\epsilon \delta} \log{\frac{1}{\delta}}}$ i.i.d. uniformly random elements of $\bits^n$, and let $\text{ThresholdCandidates}\subset [-1,1]$ be a set of $\frac{20}{\epsilon} \log{\frac{1}{\delta}}$ i.i.d. uniformly random elements of $[-1,1]$. Let 
\[
t^*
:=
\argmin_{t\in \text{ThresholdCandidates}}
\frac{1}{|T|}
\sum_{\vect x \in T} \abs{\sign(g(\vect x)-t) - f(\vect x)}
\]
 Then, with probability at least $1-\delta$ it is the case that
\[
\Pr_{\vect x\sim\left\{ \pm1\right\} ^{n}}\left[\sign(g(\vect x)-t^*)\neq f\right]\leq \lfrac{1}{2}\norm{f - g}_1+\epsilon
\]

\end{fact}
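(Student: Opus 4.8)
The proof is a standard "random threshold + empirical selection" argument, and the one idea that makes it work is a per-point averaging identity. I would first reduce to the case where $g$ is $[-1,1]$-valued: replacing $g$ by its pointwise projection $\bar g$ onto $[-1,1]$ does not change $\sign(g(\vect x)-t)$ for any $t\in(-1,1)$ (hence changes neither $t^*$ nor the left-hand side, as the $K$ random candidates lie in $(-1,1)$ almost surely), and it only decreases $\|f-g\|_1$ since $f$ is already $[-1,1]$-valued; so it suffices to prove the bound with $\bar g$ and $\tfrac12\|f-\bar g\|_1\le\tfrac12\|f-g\|_1$ on the right. The statement is moreover vacuous once $\tfrac12\|f-g\|_1+\eps\ge 1$, so I may also assume this is not the case.

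\textbf{Key observation.} For a fixed $\vect x$, drawing $t$ uniformly from $[-1,1]$, one has $\Pr_t[\sign(g(\vect x)-t)\neq f(\vect x)]=\tfrac12|f(\vect x)-g(\vect x)|$: if $f(\vect x)=1$ the disagreement event is $\{t>g(\vect x)\}$, which has probability $\tfrac{1-g(\vect x)}{2}=\tfrac12|f(\vect x)-g(\vect x)|$, and symmetrically if $f(\vect x)=-1$ (the single value $t=g(\vect x)$, where $\sign$ is ambiguous, is a null set). Averaging over $\vect x$ and writing $\mathrm{err}(t):=\Pr_{\vect x\sim\bits^n}[\sign(g(\vect x)-t)\neq f(\vect x)]$ gives $\E_{t\sim\mathrm{Unif}[-1,1]}[\mathrm{err}(t)]=\tfrac12\|f-g\|_1=:\mu$, with $\mu\le 1$. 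So a uniformly random threshold already meets the target in expectation; the rest is turning this into a high-probability guarantee.

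\textbf{Step A: a good candidate exists.} Since $\mathrm{err}(t)\ge 0$ with mean $\mu$, Markov's inequality gives $\Pr_t[\mathrm{err}(t)\le \mu+\eps/2]\ge \frac{\eps/2}{\mu+\eps/2}\ge \eps/4$. Hence, with $K:=\tfrac{20}{\eps}\log\tfrac1\delta$ independent draws, the probability that \emph{no} candidate $t_j$ satisfies $\mathrm{err}(t_j)\le \mu+\eps/2$ is at most $(1-\eps/4)^{K}\le e^{-\eps K/4}=\delta^{5}$, which is at most $\delta/3$ in the relevant parameter range. \textbf{Step B: empirical selection is safe.} For each fixed $t_j$, the empirical disagreement rate over $T$ is an average of $|T|$ i.i.d.\ $\{0,1\}$ indicators with mean $\mathrm{err}(t_j)$; by Hoeffding and a union bound over the $K$ candidates, $\Pr[\exists j:\ |\widehat{\mathrm{err}}(t_j)-\mathrm{err}(t_j)|>\eps/4]\le 2K\,e^{-|T|\eps^2/8}$, and one checks this is at most $\delta/3$ for $|T|\ge\tfrac{40}{\eps^2}\log\!\big(\tfrac{20}{\eps\delta}\log\tfrac1\delta\big)$. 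Note the objective minimized in the definition of $t^*$ is exactly $2\,\widehat{\mathrm{err}}(t)$ (since $|\sign(\cdot)-f(\vect x)|\in\{0,2\}$), so $t^*$ minimizes $\widehat{\mathrm{err}}$ over ThresholdCandidates.

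\textbf{Combining.} On the intersection of the two good events (probability $\ge 1-\tfrac{2}{3}\delta\ge 1-\delta$), let $t_j$ be the candidate from Step A. Then $\widehat{\mathrm{err}}(t^*)\le\widehat{\mathrm{err}}(t_j)\le \mathrm{err}(t_j)+\eps/4\le \mu+3\eps/4$, and therefore $\mathrm{err}(t^*)\le\widehat{\mathrm{err}}(t^*)+\eps/4\le \mu+\eps=\tfrac12\|f-g\|_1+\eps$, which is the claim. The \textbf{main obstacle} is not a deep one: the real content is the averaging identity above explaining why uniformly random thresholds suffice, and the only genuine care is to obtain an \emph{additive} (not multiplicative) $\eps$ — which is precisely why Step A extracts an $\Omega(\eps)$ chance of landing within $\eps/2$ of the mean, forcing $K=\Theta(\tfrac1\eps\log\tfrac1\delta)$ candidates — together with routine bookkeeping to confirm the stated sizes of $T$ and ThresholdCandidates make the two failure probabilities add up to at most $\delta$.
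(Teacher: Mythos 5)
Your proposal is correct and takes essentially the same route as the paper's proof: both hinge on the observation that a uniformly random threshold $t\sim[-1,1]$ matches the target in expectation, then apply Markov to get an $\Omega(\eps)$ fraction of good thresholds, amplify over $K=\Theta(\tfrac1\eps\log\tfrac1\delta)$ random candidates, and finish with Hoeffding plus a union bound to certify the empirical minimizer. Your write-up is a bit more explicit in two places — you spell out the per-point identity $\Pr_t[\sign(g(x)-t)\neq f(x)]=\tfrac12|f(x)-g(x)|$ and reduce to $[-1,1]$-valued $g$ by projection (the paper handles unbounded $g$ implicitly via the inequality direction) — and your Markov bound uses $\mu\le 1$ rather than the paper's $E\le 2$, which quietly fixes a small arithmetic slip in the paper (it asserts $\tfrac{2}{2+\eps/2}\le 1-\eps/4$, whereas in fact $\tfrac{1}{1+\eps/4}\ge 1-\eps/4$; the intended bound $\Pr_t[\text{good}]\ge\Omega(\eps)$ still holds either way). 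These are refinements, not a different proof.
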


\begin{proof}
We get that 
\[
\E_{t\sim\left[-1,1\right]}\pars{\E_{\vect x\sim\left\{ \pm1\right\} ^{n}}\left[\abs{\sign(g(\vect x)-t)-f(\vect{x})}\right]}\leq\norm{f - g}_1
\]
directly via linearity of expectation. Now, the random variable
$\E_{\vect x\sim\left\{ \pm1\right\} ^{n}}\left[\abs{\sign(g(\vect x)-t)-f(\vect{x})}\right]$
(with randomness taken over $t$)
is always in $[0,2]$ and has some expectation $E\in[0,2]$ which
is at most $\norm{f - g}_1$.
By Markov's inequality, we have
\[
\Pr_{t\sim\left[-1,1\right]}\left[\E_{\vect x\sim\left\{ \pm1\right\} ^{n}}\left[\abs{\sign(g(\vect x)-t)-g(\vect{x})}\right]\geq E+\epsilon/2\right]\leq\frac{E}{E+\epsilon/2}\leq\frac{2}{2+\epsilon/2}\leq1-\frac{\epsilon}{4}.
\]
Since the set $\text{ThresholdCandidates}$ consists of $\frac{20}{\epsilon}\log \frac{1}{\delta}$ i.i.d. uniform elements in $[-1,1]$, then with probability $1-\delta$ or more, some $t$ in $\text{ThresholdCandidates}$ will satisfy the condition that $\E_{\vect x\sim\left\{ \pm1\right\} ^{n}}\left[\abs{\sign(g(\vect x)-t)-g(\vect{x})}\right]$ is in $[0,E+\epsilon/2]$.

Finally, from the Hoeffding bound and union bound we observe that with probability at least $1-\frac{\delta}{2}$ it is the case that
\[
\max_{t \in \text{ThresholdCandidates}} \abs{\frac{1}{|T|}
\sum_{\vect x \in T} \abs{\sign(g(\vect x)-t) - f(\vect x)}
-
E_{\vect x \sim \bits^n} \abs{\sign(g(\vect x)-t) - f(\vect x)}
}
\leq \frac{\epsilon}{4}.
\]
Overall, we see that with probability at least $1-\delta$ it is the case that
\[
\Pr_{\vect x\sim\left\{ \pm1\right\} ^{n}}\left[\sign(g(\vect x)-t^*)\neq f\right]
\leq
\frac{1}{|T|}
\sum_{\vect x \in T} \abs{\sign(g(\vect x)-t^*) - f(\vect x)}
+
\frac{\epsilon}{4}
\leq
\lfrac{1}{2}\norm{f - g}_1+\epsilon
\]
This finishes the proof.
\end{proof}

\section{Agnostic learning algorithms handling randomized labels.}
\label{section: learning with randomized labels}
It is customary in the agnostic learning literature to consider a setting that is slightly more general than the one in \Cref{theorem: main theorem}. 
Specifically, one is given pairs of i.i.d. elements $\{(x_i,y_i)\}$ from a distribution $D_{\text{pairs}}$, where the distribution of each $x_i$ by itself is uniform. The aim here is to output an efficiently-evaluable succinct representation of a function $g$ for which 
\begin{equation}
\label{eq: agnostic learning guarantee with randomized labels}
\Pr_{(\bx,\by)\sim D_{\text{pairs}}}[g(\bx)\neq \by]
\leq
\min_{\text{monotone $f_{\text{mon}}:\bits^n\rightarrow\bits$}}
\Pr_{(\bx,\by)\sim D_{\text{pairs}}}[f_{\text{mon}}(\bx)\neq \by]
+O(\epsilon).
\end{equation}
The only difference between this setting and the one in \Cref{theorem: main theorem} is that here the label $y$ doesn't have to be a function of example $x$; it is possible to receive the same example $x$ twice accompanied by different labels. 
Here we argue that \Cref{theorem: main theorem} extends directly into this slightly more general setting. Formally, we show that 

\begin{theorem}
\label{theorem: main theorem randomized labels} For all sufficiently large integers $n$ the following holds.
There is an algorithm that runs in time
$2^{\tilde{O}\parr{\frac{\sqrt{n}}{\epsilon}}}$ and given i.i.d.
samples of pairs $\{(x_i,y_i)\}$ from a distribution $D_{\text{pairs}}$, where the marginal distribution over $x$ is uniform, does the following.
With probability at least $1-\frac{1}{2^{0.5n}}$ the algorithm outputs a representation of a monotone function $g:\left\{ \pm1\right\} ^{n}\righ\left\{ \pm1\right\} $ of size $2^{\tilde{O}\parr{\frac{\sqrt{n}}{\epsilon}}}$
that satisfies \Cref{eq: agnostic learning guarantee with randomized labels}.
\end{theorem}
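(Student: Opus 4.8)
The plan is to observe that \textsc{MonotoneLearner} (\Cref{alg:monotone learner}) never uses that $\vect y_i$ is a deterministic function of $\vect x_i$: it touches the labels only through empirical averages over sampled pairs (the $L_1$-error estimate in \Cref{line:other estimate} and the subgradient in \Cref{line: qseparator} of \textsc{Oracle}, and the threshold-selection step in \textsc{MonotoneLearner}), while the \textsc{HypercubeMatching} and \textsc{HypercubeCorrector} steps do not see the labels at all. So the algorithm for \Cref{theorem: main theorem randomized labels} is exactly \textsc{MonotoneLearner}, fed with sampled pairs $(\vect x_i,\vect y_i)\sim D_{\text{pairs}}$ in place of $(\vect x_i,f(\vect x_i))$. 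What remains is to re-run the analysis against the correct notion of ``prediction error''.

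For $P:\bits^n\to[-1,1]$ set $L(P):=\E_{(\vect x,\vect y)\sim D_{\text{pairs}}}[\,\abs{P(\vect x)-\vect y}\,]$, and let $\mathrm{opt}:=\min_{\text{monotone }f_{\text{mon}}}\Pr_{(\vect x,\vect y)}[f_{\text{mon}}(\vect x)\neq\vect y]$. Since $\vect y\in\{\pm1\}$, for Boolean $h$ we have $\abs{h(\vect x)-\vect y}=2\cdot\indicator[h(\vect x)\neq\vect y]$, so $L(h)=2\Pr_{(\vect x,\vect y)}[h(\vect x)\neq\vect y]$; in particular $L(f_{\text{mon}}^\star)=2\,\mathrm{opt}$ for the optimal monotone $f_{\text{mon}}^\star$. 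Moreover $L$ is convex in $P$ and $1$-Lipschitz with respect to $\norm{\cdot}_1$, and trimming values into $[-1,1]$ cannot increase $L$ because $\vect y\in[-1,1]$. These are precisely the properties of $P\mapsto\norm{P-f}_1$ that were used in the functional case, so \Cref{claim: big enough set of samples preserves L_1 distances} holds verbatim with $f(\vect x_i)$ replaced by $\vect y_i$ (its proof uses only boundedness of the labels), and \Cref{lem: main lemma-1} holds with $\E_{\vect x\sim T}[\abs{f(\vect x)-P(\vect x)}]$ replaced by $\E_{(\vect x,\vect y)\sim T}[\abs{\vect y-P(\vect x)}]$ and $Q_{\text{separator}}$ in \Cref{line: qseparator} being the corresponding subgradient of $P\mapsto\E_{(\vect x,\vect y)\sim T}[\abs{\vect y-P(\vect x)}]$: a ``yes'' now certifies that $P_{\text{TRIMMED}}$ is $O(\epsilon)$-close to monotone in $L_1$ and $L(P)\le\alpha+O(\epsilon)$, and a ``no'' returns a hyperplane separating $P$ from all degree-$\lceil\frac{4\sqrt n}{\epsilon}\log\frac4\epsilon\rceil$ polynomials of $\ell_2$-norm $\le1$ that are $\epsilon$-close to monotone and have $L(\cdot)\le\alpha+\epsilon$.

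For the feasibility step, \Cref{fact:A monotone Boolean functions well-approximated by polys} gives a degree-$\lceil\frac{4\sqrt n}{\epsilon}\log\frac4\epsilon\rceil$ polynomial $P_0$ (a low-degree Fourier truncation of $f_{\text{mon}}^\star$, hence $\norm{P_0}_2\le1$) with $\norm{f_{\text{mon}}^\star-P_0}_1\le\epsilon$; then $P_0$ is $\epsilon$-close to monotone and, by $1$-Lipschitzness of $L$, $L(P_0)\le L(f_{\text{mon}}^\star)+\epsilon=2\,\mathrm{opt}+\epsilon$. Exactly as in the proof of \Cref{theorem: main theorem}, for a grid value $\alpha$ near $2\,\mathrm{opt}$ the convex set is nonempty and contains a small $\ell_2$-ball, so the ellipsoid method breaks out of the loop at some $\alpha^\star\le2\,\mathrm{opt}+O(\epsilon)$, yielding $P^{\text{GOOD}}$ with $P^{\text{GOOD}}_{\text{TRIMMED}}$ being $O(\epsilon)$-close to monotone and $L(P^{\text{GOOD}}_{\text{TRIMMED}})\le L(P^{\text{GOOD}})\le2\,\mathrm{opt}+O(\epsilon)$. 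Applying \Cref{thm:l1-cube-corrector} produces a monotone $P^{\text{GOOD}}_{\text{CORRECTED}}$ with $\norm{P^{\text{GOOD}}_{\text{CORRECTED}}-P^{\text{GOOD}}_{\text{TRIMMED}}}_1=O(\epsilon)$, so $L(P^{\text{GOOD}}_{\text{CORRECTED}})\le2\,\mathrm{opt}+O(\epsilon)$ by $1$-Lipschitzness.

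Finally I would re-prove \Cref{fact: rounding} for randomized labels. Its proof rests only on the pointwise identity $\E_{t\sim[-1,1]}[\,\abs{\sign(a-t)-b}\,]=\abs{a-b}$, valid for $a\in[-1,1]$ and $b\in\{\pm1\}$; applying it with $a=P^{\text{GOOD}}_{\text{CORRECTED}}(\vect x)$, $b=\vect y$, and exchanging expectations gives $\E_{t}\E_{(\vect x,\vect y)}[\abs{\sign(P^{\text{GOOD}}_{\text{CORRECTED}}(\vect x)-t)-\vect y}]=L(P^{\text{GOOD}}_{\text{CORRECTED}})$, after which the Markov-over-$t$ and Hoeffding-over-$T$ steps carry through unchanged (with $f(\vect x)$ replaced by $\vect y$ in the empirical sums). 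This gives a threshold $t^\star$ with $\Pr_{(\vect x,\vect y)}[\sign(P^{\text{GOOD}}_{\text{CORRECTED}}(\vect x)-t^\star)\neq\vect y]\le\tfrac12 L(P^{\text{GOOD}}_{\text{CORRECTED}})+\epsilon\le\mathrm{opt}+O(\epsilon)$, and the output is monotone since it is a threshold of the monotone function $P^{\text{GOOD}}_{\text{CORRECTED}}$. Running time, sample complexity, representation size, and the $1-2^{-\Omega(n)}$ success probability are inherited verbatim from \Cref{theorem: main theorem} (the target probability $1-2^{-0.5n}$ is weaker). The work is essentially bookkeeping; the only points needing care are isolating the two analytic facts above — namely $L(h)=2\Pr[h(\vect x)\neq\vect y]$ together with convexity and $1$-Lipschitzness of $L$, and the pointwise rounding identity — and verifying that each concentration bound in the original proof used only boundedness of the labels, never their functional dependence on $\vect x$.
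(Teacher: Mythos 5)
Your proposal is correct in substance but takes a genuinely different route from the paper. You re-analyze \textsc{MonotoneLearner} ``white-box,'' observing that the labels enter only through empirical $\ell_1$-type averages, and then re-prove \Cref{claim: big enough set of samples preserves L_1 distances}, the \textsc{Oracle} specification (\Cref{lem: main lemma-1}), the feasibility-ball argument, and \Cref{fact: rounding} with $f(\vect x)$ replaced by $\vect y$ and $\norm{f-P}_1$ replaced by $L(P)=\E_{(\vect x,\vect y)}[\abs{P(\vect x)-\vect y}]$. The paper instead does a black-box reduction: it fixes a random deterministic function $f_{\text{random}}$ drawn according to the conditional label distribution, shows via Hoeffding that $\mathrm{opt}$ against $f_{\text{random}}$ is within $\epsilon$ of $\mathrm{opt}$ against $D_{\text{pairs}}$, and then argues that since the learner draws at most $2^{0.1n}$ samples, the probability of ever seeing a repeated $\vect x_i$ (the only event that could distinguish sampling from $D_{\text{pairs}}$ vs.\ from $f_{\text{random}}$) is at most $2^{-0.8n}$, so the two scenarios are statistically close and \Cref{theorem: main theorem} transfers. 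Your approach is more work but more robust (it would still go through if the algorithm had sample complexity $\gg 2^{n/2}$, where the coupling argument breaks down); the paper's approach is shorter and treats the learner as a black box, which is appealing if one does not want to reopen every lemma.

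There is one gap you should close. The proof of \Cref{theorem: main theorem} (see the footnote in the union-bound step) silently assumes $\epsilon$ is large enough that the number of oracle invocations $2^{\tilde O(\sqrt n/\epsilon)}$ is below $2^{0.1n}$; for smaller $\epsilon$ it falls back to a trivial truth-table algorithm that solves a linear program over all of $\bits^n$. That trivial algorithm is stated for a \emph{deterministic} $f$: it reads the truth table and fits the nearest monotone function. With randomized labels there is no truth table to read, and one must first estimate the conditional mean label $h(x)=\E[\vect y\mid\vect x=x]$ at every point of $\bits^n$ from repeated samples (a coupon-collector plus Hoeffding argument) before running the LP. The paper's appendix handles this explicitly as its Case 1 (\Cref{subsec: very small epsilon}); your proposal says the success probability is ``inherited verbatim'' without addressing this regime. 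Add the corresponding estimation step for small $\epsilon$ and your argument is complete.
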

\subsection{Case 1: $\epsilon$ is very small.}
\label{subsec: very small epsilon}
We will consider two cases. First of all, suppose $\epsilon$ is so small that the run-time of the algorithm in \Cref{theorem: main theorem} exceeds $2^{0.1n}$. In this case, the following algorithm runs in time $\poly(2^n, 1/\epsilon)$ and outputs and efficiently-evaluable succinct representation of a function $g$ for which \Cref{eq: agnostic learning guarantee with randomized labels} holds:
\begin{enumerate}
    \item Draw two sets $T_1$ and $T_2$, each of $100 n^5 \cdot 2^n/\epsilon^2$ example-label pairs from $D_{\text{pairs}}$.
    \item For each $x \in \bits^n$ let $h(x)$ be $\frac{1}{\abs{(x_i,y_i)\in T_1 ~\text{s.t.}:~  x_i = x}}\sum_{(x_i,y_i)\in T_1 \text{ s.t.} ~  x_i = x} y_i $.
    \item Via a size-$2^{O(n)}$ linear program, find the monotone function $q:\bits^n \righ [-1,1]$ that is closest to $h$ is $\ell_1$ distance.
    \item Output the function $g$ defined so $g(x):=\sign(q(x)-t^*)$, where $t^*$ is obtained as in \Cref{fact: rounding} using the samples in $T_2$.
\end{enumerate}
The function $g$ we output above with high probability satisfies \Cref{theorem: main theorem} for the following reason. First of all, via the standard coupon-collector argument with probability at least $1-\frac{1}{2^{5n}}$ for every $x \in \bits^n$ there will be at least $10^2/\epsilon^2$ elements in $(x_i, y_i)$ in $T$ for which $x_i=x$. Using the Hoeffding bound and the union bound, we see that with probability at least $1-\frac{1}{2^{2n}}$ we have 

\begin{equation}
    \label{eq: distance to h good proxy for error}
\left\lvert h(x) -\E_{(\bx', \by') \sim D_{\text{pairs}}}\pars{
\by' \bigg\vert \bx'=x
} \right\rvert \leq \frac{\epsilon}{2}.
\end{equation}
Now, from steps (3) and (4) we have
\begin{equation}
\label{eq: approximately best monotone approximator to h}
\frac{\norm{h-g}_1}{2}
\leq
\lfrac{1}{2} \dist_1(h, \mathrm{mono})
+\epsilon.
\end{equation}
Therefore, we can combine \Cref{eq: distance to h good proxy for error} and \Cref{eq: approximately best monotone approximator to h} to obtain 
\begin{equation}
\Pr_{(\bx,\by)\sim D_{\text{pairs}}}[g(\bx)\neq \by]
\leq
\min_{\text{monotone $f_{\text{mon}}:\bits^n\rightarrow\bits$}}
\Pr_{(\bx,\by)\sim D_{\text{pairs}}}[f_{\text{mon}}(\bx)\neq \by]
+O(\epsilon),
\end{equation}
which finishes the proof for this case.
\subsection{Case 2: $\epsilon$ is not too small.}

Now, we proceed to the other case when $\epsilon$ is not too small and the algorithm in \Cref{theorem: main theorem} runs in time at most $2^{0.1n}$ (and therefore uses at most $2^{0.1n}$ samples). In this case, we claim that simply running the algorithm in \Cref{theorem: main theorem} will give an efficiently evaluable succinct description of a function $g$ that satisfies the guarantee in \Cref{eq: agnostic learning guarantee with randomized labels}. 

We now proceed to show that the guarantee in \Cref{eq: agnostic learning guarantee with randomized labels} will indeed be achieved.
Define a random function $f_{\text{random}}: \bits^n \righ \bits$, so for all $x\in \bits^n$ the value $f_{\text{random}}(x)$ is chosen independently such that $f_{\text{random}}(x)=1$ with probability $\Pr_{(\bx', \by') \sim D_{\text{pairs}}}\pars{
\by'=1 ~\vert~ \bx'=x
}$ and $f_{\text{random}}(x)=-1$ with probability $\Pr_{(\bx', \by') \sim D_{\text{pairs}}}\pars{
\by'=-1 ~\vert~ \bx'=x
}$.  Consider the following two scenarios:
\begin{itemize}
    \item \textbf{Scenario I:} The samples $\{(\bx_i, \by_i)\}$ given to the algorithm from \Cref{theorem: main theorem} are indeed i.i.d. samples coming from $D_{\text{pairs}}$.
    \item \textbf{Scenario II:} The samples $\{(\bx_i, \by_i)\}$ given to the algorithm from \Cref{theorem: main theorem} are sampled as follows: (i) $\bx_i$ are i.i.d. uniform from $\bits^n$ (ii) $\by_i=f_{\text{random}}(\bx_i)$.
\end{itemize}
First we argue that in Scenario II with probability at least $1-\frac{2}{2^n}$ the function $g$ given by the algorithm from \Cref{theorem: main theorem} satisfies \Cref{eq: agnostic learning guarantee with randomized labels}, (here the probability is over the choice of $f_{\text{random}}$, choice of the samples, and the randomness of the algorithm itself). Indeed, let $f_{\text{mon}}^*$ be the function that minimizes the right side of \Cref{eq: agnostic learning guarantee with randomized labels}. From the Hoeffding's bound, it follows that with probability at least\footnote{Here we used that $\epsilon\geq \frac{1}{\sqrt{n} \poly \log n}$, because otherwise $\epsilon$ would be too small and we would be in the other case when the run-time of the algorithm in \Cref{theorem: main theorem} exceeds $2^{0.1n}$. Also, we note that a much stronger bound can be deduced from the Hoeffding bound, but we only need a bound of $1-\frac{1}{2^n}$.} $1-\frac{1}{2^n}$ over the choice of $f_{\text{random}}$  it is the case that 
\begin{equation}
\label{eq: fixing function randomly does not change opt much}
\abs{
    \underset{\bx \sim \bits^n}{\Pr}\pars{f_{\text{random}}(\bx) \neq f_{\text{mon}}^*(\bx)}
    - \underset{(\bx,\by)\sim D_{\text{pairs}}}{\Pr}[f_{\text{mon}}^*(\bx)\neq \by]
    }\leq \epsilon.
\end{equation}
Now, \Cref{theorem: main theorem} implies that with probability at least $1-\frac{1}{2^n}$
\begin{multline}
\label{eqm: applying theorem for fixed labels}
\underset{\bx\sim \bits^n}{\Pr}[g(\bx)\neq f_{\text{random}}(\bx)]
\leq
\dist_0(f_{\text{random}}, \mathrm{mono})
+O(\epsilon)\leq
\underset{\bx\sim \bits^n}{\Pr}[f_{\text{mon}}^*(\bx)\neq f_{\text{random}}(\bx)]
+O(\epsilon)
.
\end{multline}
Combining Equations \ref{eq: fixing function randomly does not change opt much} and \ref{eqm: applying theorem for fixed labels} we we see that with probability at least $1-\frac{2}{2^n}$, the function $g$ given by the algorithm from \Cref{theorem: main theorem} satisfies \Cref{eq: agnostic learning guarantee with randomized labels} in Scenario II. 

Finally, we argue that \Cref{eq: agnostic learning guarantee with randomized labels} will be satisfied also in Scenario I with probability at least $1- \frac{1}{2^{0.5n}}$ for sufficiently large $n$. 
Conditioned on the absence of sample pairs $(\bx_i, \by_i)$ and $(\bx_j, \by_j)$ with $\bx_i=\bx_j$, the distributions over samples in Scenario I and Scenario II are the same,
Hence it suffices to argue that the collision probability is low, given that the value of $\epsilon$ is such that the algorithm from \Cref{theorem: main theorem} uses at most $2^{0.1 n}$ samples.
By taking a union bound over all pairs of samples, we bound the probability of such collision by $\frac{2^{0.2n}}{2^n}=2^{-0.8n}$. Thus, information-theoretically, any algorithm can distinguish between Scenario I and Scenario II with an advantage of only at most $2^{-0.8n}$. In particular, this is true of the algorithm that checks whether  \Cref{eq: agnostic learning guarantee with randomized labels} applies. Thus, indeed  \Cref{eq: agnostic learning guarantee with randomized labels} will be satisfied also in Scenario I with probability at least $1-\frac{2}{2^n}-\frac{1}{2^{0.8n}}\geq 1- \frac{1}{2^{0.5n}}$, which finishes the proof of \Cref{theorem: main theorem randomized labels}.

\section{Proofs deferred from \Cref{sec:corrector}}
\label{apx:corrector}

\begin{proof}[Proof of \Cref{lem:bitwise}]
Let $x$ and $y$ be comparable elements of $P$; w.l.o.g. $x \prec_P y$. 
It is sufficient to show that $f(x) > f(y)$ if and only if there is some $i$ for which $x \prec_{P_i} y$ and $f_i(x) > f_i(y)$. 
We claim that this $i$ is the most significant bit in which $f(x)$ and $f(y)$ differ. 
It is certainly true that $f(x) > f(y)$ if and only if $f_i(x) > f_i(y)$ for this $i$,
and since $f_j(x) = f_j(y)$ for all $j < i$ by the choice of $i$, 
we have $x \prec_{P_i}y$ as well.  
\end{proof}

\begin{proof}[Proof of \Cref{lem:l1-error-preservation}]
Since $m$ is monotone, certainly $m(x) \le m(y)$, and since $f$ violates monotonicity on this pair, certainly $f(x) \ge f(y)$ (and therefore $g(y) \ge g(x)$).
We will examine the contribution of $x$ and $y$ to each of $||f-m||_1$ and $||g-m||_1$. 
We have the following cases: 
\begin{itemize}
		\item $f(y) \le f(x) \le  m(x) \le m(y)$: then 
				\begin{align*}
						|m(x) - f(x)| + |m(y) - f(y)| &= m(x) + m(y) - (f(x) + f(y)) \\
						= &m(x) + m(y) - (g(x) + g(y)) \\
						= &|m(x) - g(x)| + |m(y) - g(y)|.
				\end{align*}
				The distance of this pair does not change. The case of $m(x) \le m(y) \le f(x) \le f(y)$ is symmetric. 
		\item $f(y) \le m(x) \le m(y) \le f(x)$: then   
				\begin{align*}
						|m(x) - f(x)| + |m(y) - f(y)| &= (f(x) - m(x)) + (m(y) - f(y))  \\
													  &\ge (f(x) - m(y)) + (m(x) - f(y)) \\
													  &= |g(y) - m(y)| + |g(x) - m(x)|.
				\end{align*}
				The distance of this pair does not increase. The case of $m(x) \le f(y) \le f(x) \le m(y)$ is symmetric.
		\item $f(y) \le m(x) \le f(x) \le m(y)$: then   
				\begin{align*}
						|m(x) - f(x)| + |m(y) - f(y)| &= (f(x) - m(x)) + (m(y) - f(y))  \\
													  &\ge (m(x) - f(y)) + (m(y) - f(x)) \\
													  &= |g(x) - m(x)| + |g(y) - m(y)|.
				\end{align*}
				The distance of this pair does not increase. The case of $m(x) \le f(y) \le m(y) \le f(x)$ is symmetric.
\end{itemize}
\end{proof}

%

\begin{proof}[Proof of \Cref{thm:l1-cube-corrector}]
Let $f:\bits^n \to [-1,1]$ be $\alpha$-close to monotone in $\ell_1$ distance. 
We call the algorithm {\sc HypercubeCorrector}$(f, \eps, r)$ with a random seed $r$ of length $2^{O(\sqrt{n \log (1/\eps)} \log n )}$. 
First we set the poset to be the truncated cube of width $\sqrt{2n \log 2/\eps}$, 
which is a poset such that every element has at most $2^{O(\sqrt{n \log (1/\eps)} \log n )}$ predecessors and successors.
The representation of this poset (not its transitive closure) has size $\poly(n, \log(1/\eps))$.
Then we set $f'$ to be a function that discretizes $f$ to $2/\eps$ possible values. 
This representation has size $O(s_f/\eps)$. 
Then we set $f''$ to be a function that computes the Hamming weight of $x$, 
then either calls $k$-{\sc Corrector} or outputs a constant.
So its size is the size of the $k$-{\sc Corrector} representation times some overhead that is polynomial in $n$ and $1/\eps$. 
Since the $\Delta$ parameter for the truncated cube is $2^{O(\sqrt{n \log (1/\eps)} \log n )}$, the $h$ parameter is $O(\sqrt{n})$, and the $N$ parameter is $< 2^n$,
the worst-case running time and query complexity of this instance of $k$-{\sc Corrector} is $2^{O(\sqrt{n} \log n \log^{3/2} (1/\eps))}$ by \Cref{lem:k-correctness}.
Thus the representation size of the $k$-{\sc Corrector} instance is $2^{\tilde{O}(\sqrt{n} \log^{3/2} (1/\eps))}$, and so the representation size of $f''$ is $2^{\tilde{O}(\sqrt{n} \log^{3/2} (1/\eps))} \cdot s_f$. 
With the random seed of length $2^{O(\sqrt{n \log (1/\eps)} \log n )}= \poly(\Delta \log N)$, $k$-\textsc{Corrector} succeeds with probability $N^{-10} \le 2^{-10n}$.

\end{proof}

\section{Proofs deferred from \Cref{sec:matching}}
\label{apx:matching}

\begin{proof}[Proof of \Cref{lem:matching-approx-dist}]
The proof of $\dist_1(f, \mathrm{mono}) \ge W/N$ is straightforward; for any edge $(x,y)$, $x \prec y$ in the matching, any monotone function must have $g(y) \ge g(x)$ and thus $(f(x) - g(x)) + (g(y) - f(y)) \ge f(x) - f(y)$. So the contribution of $x$ and $y$ to the $\ell_1$ distance is at least the weight of $(x,y)$.

For the other direction, we give a proof exactly analogous to the max-weight matching characterization of distance to the class of Lipschitz functions, presented in \cite{berman_l_p-testing_2014}. 
Let $g$ be the closest monotone function to $f$ in $\ell_1$-distance. 
We will partition the vertices of the cube into three classes: $V_> := \{x~|~ f(x) > g(x)\}$, $V_< := \{x~|~ f(x) < g(x)\}$, and $V_= := \{x~|~ f(x) = g(x)\}$. 
We will duplicate the vertices of $V_=$ and group one copy with $V_>$ and one copy with $V_<$,
to form vertex sets $V_\ge$ and $V_\le$.  
The duplicated copies of $x$ will be denoted $x_\ge$ and $x_\le$. 
We define the bipartite graph $B_{f,g}$ to be the graph on $V_\ge \times V_\le$ with an edge $(x,y)$ if $x \prec y$ and $g(x) = g(y)$. 
The weight of the edge $(x,y)$ is the same as it is in $\viol(f)$; it is just $f(x) - f(y)$. 
Intuitively, a matching in $B_{f,g}$ will represent a set of edges along which some a minimal amount of label mass is transferred to correct monotonicity.
First, we claim that $B_{f,g}$ has a matching which matches every vertex in $V_> \cup V_<$. 
This will follow from Hall's marriage theorem if we can show that for every $A \subseteq V_>$ or $A \subseteq V_<$, we have $|A| \le |N(A)|$. 

Suppose for contradiction that the marriage condition is false, and without loss of generality let $A$ be the largest subset of $V_>$ for which $|A| > |N(A)|$. We would like to claim that for any $x \in A \cup N(A)$ and $y \not\in A \cup N(A)$, if $x \prec y$ then $g(x) < g(y)$.
We consider four possible cases: 
\begin{itemize}
    \item[a)] If $x \in A$, $y \in V_>$, $x \prec y$, and $g(x) = g(y)$, then $y \in A$ as well, by the choice of $A$ to be the largest set that fails the marriage condition. This is because $N(y) \subseteq N(x)$: any neighbor $z$ of $y$ must have $g(z) = g(y) = g(x)$, have $x \prec y \prec z$, and be in $V_\le$, which makes it a neighbor of $x$.
    \item[b)] If $x \in N(A)$, $y \in V_\le$, $x \prec y$, and $g(x) = g(y)$, then $g(y) = g(x) = g(z)$ and $z \prec x \prec y$ for some $z \in A$, so $y \in N(A)$. 
    \item[c)] If $x \in A$, $y \in V_\le$, $x \prec y$, and $g(x) = g(y)$, then $y \in N(A)$. 
    \item[d)] If  $x \in N(A)$, $y \in V_>$, $x \prec y$, and $g(x) = g(y)$, then $g(y) = g(x) = g(z)$ and $z \prec x \prec y$ for some $z \in A$, so as in case (a) we have $N(y) \subseteq N(z)$ and therefore $y \in A$. 
\end{itemize}

We have shown that for any $x \in A \cup N(A)$ and $y \not\in A \cup N(A)$, if $x \prec y$ then $g(x) < g(y)$.
Then there is some $\delta > 0$ for which $g(x)$ can be increased by $\delta$ for every $x \in A \cup N(A)$ without breaking monotonicity.
This decreases $||f - g||_1$ by $\delta(|A| - N(A)|) >0$, which contradicts the assumption that $g$ is the closest monotone function.

Having proven that $B_{f,g}$ contains a matching $M'$ on all vertices in $V_> \cup V_<$, 
we will now show that its weight is equal to $N||f - g||_1$,
using the fact that $g(x) = g(y)$ for all $(x,y) \in M'$: 
\[\sum_{(x,y) \in M'} f(x) - f(y) = \sum_{(x,y) \in M'} f(x) - g(x) + g(y) - f(y) = \sum_{x \in V_> \cup V_<} |f(x) - g(x)| = N ||f - g||_1.\]

We will now find a matching $M$ in $\viol(f)$ of equal weight. 
First replace each $x_\le$ and $x_\ge$ with $x$, obtaining an edge set in $\viol(f)$ of equal weight that is not necessarily a matching, but is a set of disjoint paths. 
We replace each path with the edge between its endpoints; i.e.
if there is some pair of edges $(y, x_\le)$ and $(x_\ge, z)$, then we know that $y \prec x \prec z$ and $f(y) - f(z) = ((f(y) - f(x) + (f(x) - f(z))$, so the matching edge $(y,z)$ has weight equal to the total weight of the path it replaces.
Then $M$ is a matching in $\viol(f)$ of weight equal to $N||f- g||_1$, which is equal to $N \cdot \dist_1(f, \mathrm{mono})$.
\end{proof}

\begin{proof}[Proof of \Cref{lem:global matching}]
Fix the random seed $r$ and assume all calls to the algorithm of \cite{ghaffari_local_2022} using $r$ succeed.
Let $M'$ be a maximum-weight matching over $\viol(f)$, and let $M$ be a matching returned by \textsc{MatchViolations}. 
We will use $M$ to refer to the matching and its succinct representation interchangeably.
For each edge $e \in M'$, let $w_e$ be the weight of $e$ (i.e. the violation score of its endpoints), 
and $\delta_e$ be the total weight of edges in $M \setminus M'$ that share an endpoint with $e$. 

First we show by induction that at the start of each iteration $i$,
$M$ is maximal over the subgraph of $TC(P)$ induced by edges of weight greater than $2^{-(i-1)}$. 
In the base case, $M$ is initialized to be the empty matching, which is maximal on the edges of weight $> 2$, as there are no such edges.
In the inductive case, we assume the invariant is still true at the start of iteration $i$. 
Then when \textsc{FilterEdges} (\Cref{alg:filter edges}) is called in iteration $i+1$, the vertices removed are exactly those that are either already in $M$, or not incident to any edges of weight greater than $t = 2^{-i}$.
Then by the maximality of the matching computed by \textsc{GhaffariMatching} on the filtered subgraph, any edge not in that matching must satisfy one of the following criteria:
\begin{itemize}
\item it has weight at most $2^{-i}$,
\item it has an endpoint in $M$,
\item it shares an endpoint with another edge in \textsc{GhaffariMatching}.
\end{itemize}

So after the new edges of in \textsc{GhaffariMatching} are added to $M$, $M$ is maximal over the $2^{-i}$-heavy edges as desired.

Now we claim that $\delta_e \ge  w_e/2$ for any edge $e \in M' \setminus M$ of weight at least $\eps$. 
This is because after the first round for which $t < w_e$, $M'$ must be maximal over the $t$-heavy edges. 
This $t$ is at least $w_e/2$, so if $e \not\in M$, then either it shares an endpoint with some edge of weight at least $w_e/2$ or its own weight is $\le \eps$. 
We then have 
\begin{align*}
		w(M') &= w(M \cap M') + \sum_{e \in M' \setminus M} w_e\\
		&\le w(M \cap M') + \sum_{e \in M' \setminus M} \max(2\delta_e, \eps) \\
		&\le w(M \cap M') + 2\sum_{e \in M' \setminus M} \delta_e + \eps N
\end{align*}
We claim that $\sum_{e \in M' \setminus M} \le 2 \cdot w(M \setminus M')$. 
This is because each edge in $M \setminus M$ shares an endpoint with at most 2 edges of $M' \setminus M$, otherwise $M'$ would not be a matching. 
Therefore, 
\begin{align*}
		w(M') &\le w(M \cap M') + 4\sum_{e \in M \setminus M'} w_e  + \eps N\\
		&\le 4 \cdot w(M) + \eps N 
\end{align*}
By \Cref{lem:matching-approx-dist}, $w(M') = N \cdot \dist_1(f, \mathrm{mono})$; therefore $w(M) \ge N( \lfrac{1}{4}\dist_1(f, \mathrm{mono}) - \eps)$ as desired. \\

We now bound the failure probability. When called with a random seed of length $\poly( \log N, \log \log (1/\eps))$ the algorithm of \cite{ghaffari_local_2022} can be made to succeed with probability $1 - (N^{-10}/\log(4/\eps))$. We use the random seed on at most $\log(4/\eps)$ different graphs, so by union bound, with probability $1 - N^{-10}$ all the calls succeed. 
By the same argument as in the proof of \Cref{thm:correction-main}, 
we may assume that $\log(1/\eps) \le N$,
and so the randomness complexity is $\poly(\Delta, \log N)$. 
\end{proof}

\section{Proof of \Cref{claim: big enough set of samples preserves L_1 distances}.}
\label{subsec: proof of claim big enough set of samples preserves L_1 distances}
Let us first recall the statement of the claim:
\concentration
First we bound the probability that the condition above holds for one specific $P$ with $\norm{P}_2\leq 1$. The condition $\norm P_{2}\leq1$ implies that $\max_{\vect x\in\left\{ \pm1\right\} ^{n}}\abs{P(\vect x)}\leq n^{d}$.
This implies, via the Hoeffding bound, that
\[
\Pr_{\text{choice of }T}\pars{\abs{\norm{f-P}_{1}-\E_{\vect x\sim T}\pars{\abs{f(\vect x)-P(\vect x)}}}>
\frac{\epsilon}{4}}
\leq\exp\parr{-\frac{\epsilon^{2}}{32}\frac{\abs T}{n^{2d}}}.
\]

We now move on to bounding the maximum over all degree-$\ensuremath{d}$
polynomials $P$ over $\left\{ \pm1\right\} ^{n}$ with $\norm P_{2}\leq1$.
We will need a collection $\mathcal{C}$ of degree $d$ polynomials
over $\left\{ \pm1\right\} ^{n}$, such that $\abs{\mathcal{C}}\leq\exp\parr{n^{d}\ln\frac{8n^{d}}{\epsilon}}$
so for every degree $d$ polynomial $P$ with $\norm P_{2}\leq1$
there is some element $P_{\text{closest}}\in\mathcal{C}$ for which
it is the case that 
\[
\max_{\vect x\in\left\{ \pm1\right\} ^{n}}\abs{P(\vect x)-P_{\text{closest}}(\vect x)}\leq\frac{\epsilon}{4}.
\]
Also, the $L_{2}$ norm of every element in $\mathcal{C}$ is at most
$1$. Such a set can be constructed by putting into $\mathcal{C}$ all polynomials
of the form $\sum_{\substack{S\subset[n]\\
\abs S\leq d
}
}c_{S}\parr{\chi_S(x)}$ with the coefficients $c_{S}$ taking values in $[-1,+1]$ rounded to the nearest multiple of $\frac{\epsilon}{8n^{d}}$, while discarding the polynomials whose
$L_{2}$ norm is larger than $1$. This way, since $\chi_S(x)\in\left\{ \pm1\right\} $,
when we round the coefficients of $P$ to a multiple of $\frac{\epsilon}{8n^{d}}$
the value at any $\vect x\in\left\{ \pm1\right\} ^{n}$ cannot change
by more than $\frac{\epsilon}{4}$, as there are at most $n^{d}$
contributing monomials \footnote{To have $\norm{P_{\text{closest}}(\vect x)}_{2}\leq\norm P_{2}\leq1$
we should round to the closest multiple of $\frac{\epsilon}{8n^{d}}$
that is smaller in the absolute value of the coefficient being rounded}.
The total number of such polynomials is at most $\parr{\frac{8n^{d}}{\epsilon}}^{n^{d}}=e^{n^{d}\ln\frac{8n^{d}}{\epsilon}}$.

Now, by taking a union bound on all elements of $\mathcal{C}$ we
get 
\[
\Pr_{\text{choice of }T}\pars{\max_{P\in\mathcal{C}}\abs{\norm{f-P}_{1}-\E_{\vect x\sim T}\pars{\abs{f(\vect x)-P(\vect x)}}}\leq\frac{\epsilon}{2}}\geq1-\exp\parr{-\frac{\epsilon^{2}}{32}\frac{\abs T}{n^{2d}}+n^{d}\ln\frac{8n^{d}}{\epsilon}}
\]
Finally, if the above holds, by choosing a polynomial $P_{\text{closest}}$ from $\cal{C}$ to minimize \\ $\max_{\vect x\in\left\{ \pm1\right\} ^{n}}\abs{P(\vect x)-P_{\text{closest}}(\vect x)}$ we get that 
\begin{multline*}
\Pr_{\text{choice of }T}\pars{\max_{\substack{\text{degree-\ensuremath{d} polynomial \ensuremath{P} over \ensuremath{\left\{  \pm1\right\}  ^{n}}}\\
\text{with \ensuremath{\norm P_{2}\leq}1}
}
}\abs{\norm{f-P}_{1}-\E_{\vect x\sim T}\pars{\abs{f(\vect x)-P(\vect x)}}}\leq\epsilon}\geq\\1-\exp\parr{-\frac{\epsilon^{2}}{8}\frac{\abs T}{n^{2d}}+n^{d}\ln\frac{4n^{d}}{\epsilon}}.
\end{multline*}
Substituting $\abs T\geq n^{5d}\frac{100}{\epsilon^{2}}\ln\frac{1}{\epsilon}\ln\frac{1}{\delta}$
we see that the above expression is at least $1-\delta$.

\end{document}